\newcommand{\R}{\mathbb R}
\newcommand{\mmu}{\boldsymbol{\mu}}
\newcommand{\llambda}{\boldsymbol{\lambda}}
\title{Min Max Generalization for Two-stage Deterministic Batch Mode Reinforcement Learning: Relaxation Schemes}
\author{R. Fonteneau\footnotemark[2] 
\and D. Ernst\footnotemark[2] 
\and B. Boigelot\footnotemark[2]
\and Q. Louveaux\footnotemark[2]
}
\begin{document}

\maketitle

\renewcommand{\thefootnote}{\fnsymbol{footnote}}
\footnotetext[2]{Department of Electrical Engineering and Computer Science, University of Li\`{e}ge, Belgium}
\renewcommand{\thefootnote}{\arabic{footnote}}

\begin{abstract}
We study the $\min \max$ optimization problem introduced in \cite{Fonteneau2011minmax} for computing policies for batch mode reinforcement learning in a deterministic setting. First, we show that this problem is NP-hard. In the two-stage case, we provide two relaxation schemes. The first relaxation scheme works by dropping some constraints in order to obtain a problem that is solvable in polynomial time. The second relaxation scheme, based on a Lagrangian relaxation where all constraints are dualized, leads to a conic quadratic programming problem. We also theoretically prove and empirically illustrate that both relaxation schemes provide better results than those given in \cite{Fonteneau2011minmax}.
\end{abstract}

\begin{keywords} 
Reinforcement  Learning, Min Max Generalization, Non-convex Optimization, Computational Complexity
\end{keywords}

\begin{AMS}
60J05 Discrete-time Markov processes on general state spaces \\

\end{AMS}

\pagestyle{myheadings}
\thispagestyle{plain}
\markboth{R. Fonteneau, D. Ernst, B. Boigelot, Q. Louveaux}{Min Max Generalization for Two-stage Deterministic Batch Mode RL: Relaxation Schemes}

\section{Introduction}
\label{section:introduction}

Research in Reinforcement Learning (RL) \cite{Sutton1998} aims at designing computational agents able to learn by themselves how to interact  with their environment to maximize a numerical reward signal. The techniques developed in this field have  appealed researchers  trying to solve  sequential decision making problems  in many fields such  as Finance \cite{Ingersoll1987}, Medicine \cite{Murphy2003,Murphy2005} or Engineering \cite{Riedmiller2005}. Since the end of the nineties, several researchers have focused on the resolution of a subproblem of RL: computing a high-performance policy when the only information available on the environment is contained in a batch collection of trajectories of the agent \cite{Bradtke1996,Ernst2005,Lagoudakis2003,Ormoneit2002,Riedmiller2005,Fonteneau2011Thesis}. This subfield of RL is known as ``batch mode RL".

Batch mode RL (BMRL) algorithms are challenged when dealing with large or continuous state  spaces. Indeed, in such cases they have to generalize the information contained in a generally sparse sample of trajectories. The dominant approach for generalizing this information is to combine BMRL algorithms with function approximators \cite{Bertsekas1996,Lagoudakis2003,Ernst2005,Busoniu2010}. Usually, these approximators generalize the information contained in the sample to areas poorly  covered by the sample by implicitly assuming that the properties of the system in those areas are similar to the properties of the system in the nearby areas well covered by the sample. This in turn often leads to low performance guarantees on the inferred policy when large state space areas are poorly covered by the sample. This can be explained by the fact that when computing the performance guarantees of these policies, one needs to take into account that they may actually drive the system into the poorly visited areas to which the generalization strategy associates a favorable  environment behavior, while the environment may  actually be particularly adversarial in those areas. This is corroborated by theoretical results which show that the performance guarantees of  the policies inferred by  these algorithms degrade with the  sample dispersion where, loosely  speaking, the  dispersion  can be seen as the radius of the largest non-visited state space area.

To overcome this problem, \cite{Fonteneau2011minmax}  propose  a $\min\max$-type strategy  for generalizing in deterministic, Lipschitz continuous  environments with continuous state  spaces,  finite action spaces, and  finite time-horizon.   The $\min\max$ approach  works by determining  a sequence of  actions that maximizes the worst return that could possibly be obtained considering any  system  compatible  with  the sample  of trajectories, and  a weak prior knowledge  given in the  form of upper bounds on the Lipschitz constants related to the environment (dynamics, reward function). However, they show that finding an  exact solution of the $\min\max$  problem is far from trivial,  even after  reformulating the  problem  so as  to avoid  the search in the space of  all compatible functions.  To circumvent these difficulties, they  propose to replace, inside  this $\min\max$ problem, the search for the worst environment given a sequence of actions by an expression that lower-bounds the worst possible return which leads to their so called CGRL algorithm (the acronym stands for ``Cautious approach to Generalization in Reinforcement Learning''). This  lower  bound is  derived  from their previous work \cite{Fonteneau2009ADPRL,Fonteneau2010ICAART}  and has  a  tightness that  depends on  the sample  dispersion. However, in some configurations where areas of the the state space are not well covered by the sample of trajectories, the CGRL bound turns to be very conservative.

In this paper, we propose to further investigate the $\min \max$ generalization optimization problem that was initially proposed in \cite{Fonteneau2011minmax}. We first show that this optimization problem is NP-hard. We then focus on the two-stage case, which is still NP-hard. Since it seems hopeless to exactly solve the problem, we propose two relaxation schemes that preserve the nature of the $\min \max$ generalization problem by targetting policies leading to high performance guarantees. The first relaxation scheme works by dropping some constraints in order to obtain a problem that is solvable in polynomial time. This results into a well known configuration called the {\it trust-region subproblem} \cite{Conn2000}. The second relaxation scheme, based on a Lagrangian relaxation where all constraints are dualized, can be solved using conic quadratic programming  in polynomial time. We prove that both relaxation schemes always provide bounds that are greater or equal to the CGRL bound. We also show that these bounds are tight in a sense that they converge towards the actual return when the sample dispersion converges towards zero, and that the sequences of actions that maximize these bounds converge towards optimal ones.

The paper is organized as follows:
\begin{itemize}
\item in Section \ref{section:related_work}, we give a short summary of the literature  related to this work,
\item Section \ref{section:problem_formalization} formalizes the $\min \max$ generalization problem in a Lipschitz continuous, deterministic BMRL context,
\item in Section  \ref{section:2_stage_case}, we  focus on the particular two-stage  case, for which we prove that it can be decoupled into two independent problems corresponding respectively to the first stage and the second stage (Theorem \ref{theorem_decoupling}):
\begin{itemize}
\item  the first stage problem leads to a trivial optimization problem that can be solved in closed-form (Corollary \ref{corollary:solution_P'2}),
\item we prove in Section  \ref{subsection:complexity} that the second stage problem is NP-hard (Corollary \ref{corollary:complexity_second_stage}), which consequently proves the NP-hardness of the general $\min \max$ generalization problem (Theorem \ref{theorem:complexity}),
\end{itemize}
\item we then describe in Section \ref{section:relaxation_schemes_2_stage} the two relaxation schemes that we propose for the second stage problem:  
\begin{itemize}
\item the trust-region relaxation scheme (Section \ref{subsection:trust_region}),
\item the Lagrangian relaxation scheme  (Section \ref{subsection:lagrangian_relaxation}), which is shown to be a conic-quadratic problem (Theorem \ref{theorem:conic_quadratic}),
\end{itemize}
\item we prove in Section \ref{subsubsection:TRversusCGRL}  that the first relaxation scheme gives better results than CGRL (Theorem \ref{theorem:TRversusCGRL}),
\item we show in Section \ref{subsubsection:LDversusTR} that the second relaxation scheme povides better results than the first relaxation scheme (Theorem \ref{theorem:LDversusTR}), and consequently better results than CGRL (Theorem \ref{theorem:synthesis}),
\item we  analyze in Section \ref{subsection:convergence_bounds} the asymptotic behavior of the  relaxation schemes as a function of the sample dispersion:
\begin{itemize}
\item we show that the the bounds provided by the relaxtion schemes converge towards the actual return when the sample dispersion decreases towards zero (Theorem \ref{theorem:consistency_bounds}),
\item we show that the sequences of actions maximizing such bounds converge towards  optimal sequences of actions when the sample dispersion decreases towards zero (Theorem \ref{theorem_convergence_actions}),
\end{itemize}
\item Section \ref{section:experimental_results} illustrates the relaxation schemes on an academic benchmark, 
\item Section \ref{section:conclusions} concludes.
\end{itemize}
We provide in Figure \ref{fig:roadmap} an illustration of the roadmap of the main  results of this paper. 
\begin{figure}
\begin{center}
\includegraphics[width=.6\linewidth]{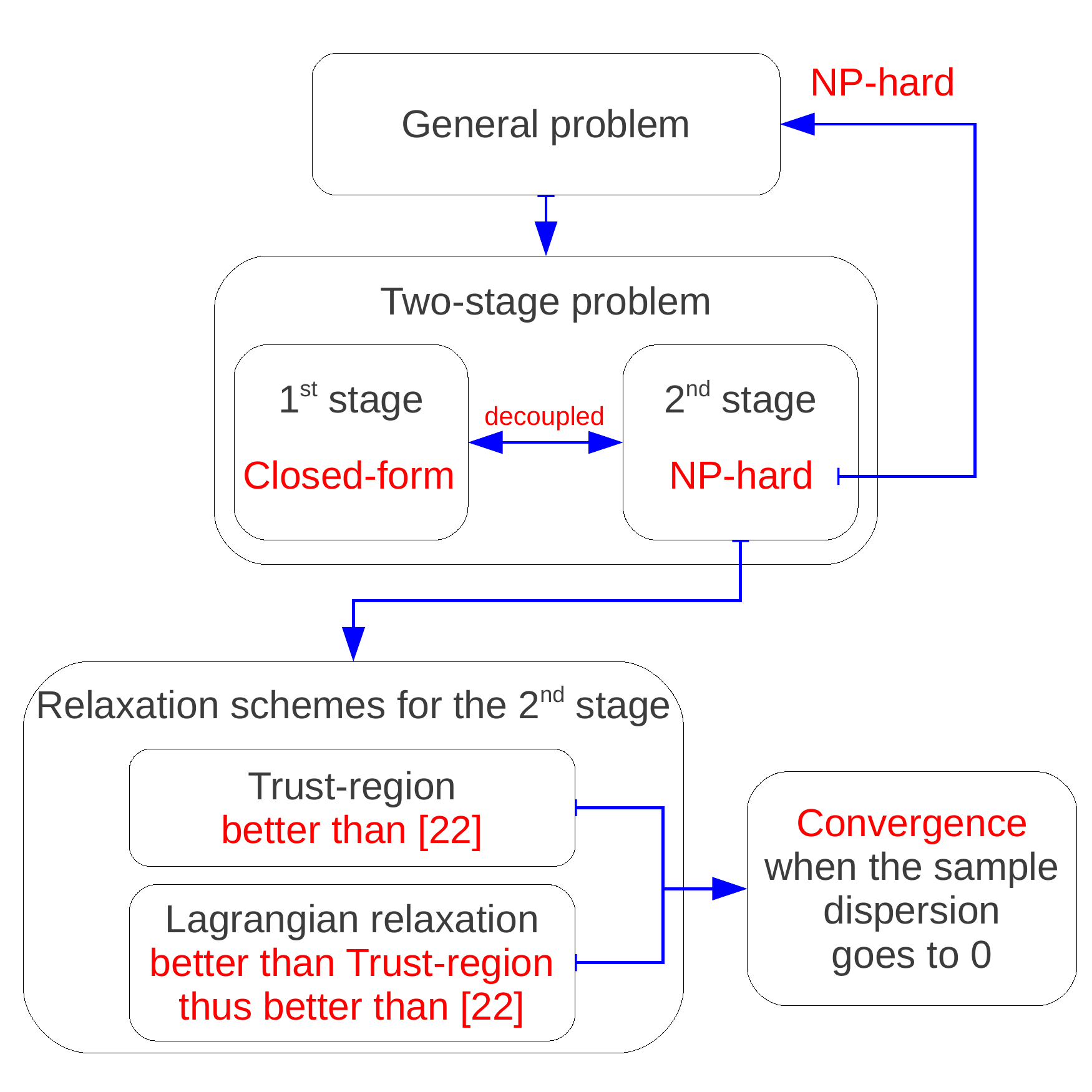}
\caption{Main results of the paper.}
\label{fig:roadmap}
\end{center}
\end{figure}

\section{Related Work}
\label{section:related_work}

Several works have already been built upon $\min\max$ paradigms for computing policies in a RL setting. In stochastic frameworks, $\min \max$ approaches are often successful for deriving robust solutions with respect to uncertainties in the (parametric) representation of the probability distributions associated with the environment \cite{Delage2010}. In the context where several agents interact with each other in the same environment, $\min \max$ approaches appear to be efficient strategies for designing policies that maximize one agent's reward given the worst adversarial behavior of the other agents. \cite{Littman1994,Rovatsou2010}. They have also received some attention for solving partially observable Markov decision processes \cite{Littman2009POMDP,Koenig2001}.

The $\min \max$ approach towards generalization, originally introduced in \cite{Fonteneau2011minmax}, implicitly relies on a methodology for computing lower bounds  on  the worst possible  return   (considering  any  compatible environment)  in  a deterministic setting  with a  mostly unknown actual  environment. In this respect, it is related to other approaches that aim at computing performance guarantees on the returns of inferred policies \cite{Mannor2004,Qian2009,Paduraru2011}.

Other fields of research have proposed $\min \max$-type strategies for computing control policies. This includes Robust Control theory \cite{Hansen2001} with H$_{\infty}$ methods \cite{Basar1995}, but also Model   Predictive Control  (MPC) theory - where usually the environment  is supposed  to be  fully known \cite{Camacho2004,Ernst2009SMC} - for which $\min\max$ approaches have been used to determine an optimal  sequence of  actions with respect  to the  ``worst case'' disturbance sequence occurring \cite{Scokaert1998,Bemporad1999}. Finally, there is a broad stream of works in the field of Stochastic Programming \cite{BirgeLouveaux1997} that have addressed the problem of safely planning under uncertainties, mainly known as ``robust stochastic programming" or ``risk-averse stochastic programming'' \cite{Defourny2008,Shapiro2011dynamic,Shapiro2011minimax,Nemirovski2009Robust}. In this field, the two-stage case has also been particularly well-studied \cite{Frauendorfer1992,Darby2000}.

\section{Problem Formalization}
\label{section:problem_formalization}

We first formalize the BMRL setting in Section \ref{subsection:batch_mode_rl}, and we state the $\min \max$ generalization problem in Section \ref{subsection:min_max_generalization}.

\subsection{Batch Mode Reinforcement Learning}
\label{subsection:batch_mode_rl}

We consider a deterministic discrete-time system whose dynamics over $T$ stages is described by  a time-invariant  equation 
\begin{eqnarray*}
x_{t+1} =  f\left(x_t,u_t\right) \quad t=0,\ldots,T-1,
\end{eqnarray*}
 where for  all $t$, the state $x_t$  is an element of the  state space $\mathcal X  \subset \mathbb R^{d}$  where  $\mathbb R^{d}$  denotes the  $d-$dimensional Euclidean space  and  $u_t$ is  an  element of  the finite (discrete)  action space $\mathcal U = \left\{ u^{(1)}, \ldots, u^{(m)}  \right\}$ that we abusively identify with $\left\{1, \ldots, m \right\}$.   $T\in{\mathbb N} \setminus \{0\}$ is referred  to as  the (finite) optimization horizon.   An instantaneous  reward
\begin{eqnarray*}
r_t=\rho\left(x_t,u_t\right)\in  {\mathbb  R}
\end{eqnarray*}
is  associated with  the  action $u_t$ taken while  being in state $x_{t}$.  For  a given initial state $x_0   \in   \mathcal   X$   and   for   every   sequence   of   actions $\left(u_0,\ldots, u_{T-1}\right) \in  \mathcal U^T  $, the cumulated  reward over $T$   stages  (also  named  $T-$stage return)   is  defined as follows:
\begin{definition}[$T-$stage Return]
\begin{eqnarray}
\forall \left( u_0,\ldots, u_{T-1}\right) \in \mathcal U^T, \qquad J^{(u_0,\ldots,u_{T-1})}_T \triangleq  \sum_{t=0}^{T-1} \rho\left(x_t, u_t \right) \ , \nonumber
\end{eqnarray}
where
\begin{eqnarray}
x_{t+1} = f\left(x_t, u_t \right) \ , \qquad \forall t \in \{0, \ldots, T-1 \} \ .  \nonumber
\end{eqnarray}
\end{definition}
An optimal sequence of actions is a sequence that leads to the maximization of the $T-$stage return:
\begin{definition}[Optimal $T-$stage Return]
\begin{eqnarray}
J^{*}_T \triangleq \underset {(u_0,\ldots, u_{T-1}) \in \mathcal U^T} {\max} J^{(u_0,\ldots, u_{T-1})}_T \ .  \nonumber
\end{eqnarray}
\end{definition}
We further make the following assumptions that characterize the {\it batch mode setting}:
\begin{enumerate}
\item The system  dynamics $f$ and the  reward function  $\rho$ are {\it unknown};
\item For each action $u \in \mathcal U$, a set  of $n^{(u)} \in \mathbb N$ one-step system transitions $$\mathcal F^{(u)} = \left\{\left(x^{(u),k},r^{(u),k},y^{(u),k}\right) \right\}_{k=1}^{n^{(u)}}$$ is known where  each  one-step transition  is  such  that:
\begin{eqnarray}
y^{(u),k}=f\left(x^{(u),k}, u \right)\mbox{  and }r^{(u),k}=\rho\left(x^{(u),k}, u \right) \ . \nonumber
\end{eqnarray}
\item We assume that every set $\mathcal{F}^{(u)}$ contains at least one element:
\begin{eqnarray}
\forall u \in \mathcal U, \qquad n^{(u)} > 0 \ . \nonumber
\end{eqnarray}
\end{enumerate}
In the following, we denote by $\mathcal F$ the collection of all system transitions:
\begin{eqnarray}
\mathcal F = \mathcal F^{(1)} \cup \ldots \cup \mathcal F^{(m)}  \nonumber
\end{eqnarray}
Under those assumptions, batch mode reinforcement learning (BMRL) techniques propose to infer from the sample of one-step system transitions $\mathcal F$ a high-performance sequence of actions, i.e. a sequence of actions 
$\left( \tilde u^*_0 , \ldots,  \tilde u^*_{T-1} \right) \in \mathcal U^T$
such that $J^{(\tilde u^*_0, \ldots, \tilde u^*_{T-1})}_T$ is as close as possible to $J^*_T$.

\subsection{Min max Generalization under Lipschitz Continuity Assumptions}
\label{subsection:min_max_generalization}

In this section, we state the $\min \max$ generalization problem that we study in this paper. The formalization was originally proposed in \cite{Fonteneau2011minmax}.

We first assume that the system dynamics $f$ and the reward function $\rho$ are assumed to be Lipschitz continuous. There  exist finite constants  $L_f,L_\rho\in {\mathbb R}$ such  that:
\begin{eqnarray}
\forall (x, x') \in \mathcal X^2, \forall u \in \mathcal U, \qquad \left\|f\left(x, u \right) - f\left(x', u \right) \right\| & \leq & L_f  \left\| x - x' \right\| \
  ,\label{eqn:f_lipschitz} \nonumber \\
  \left| \rho\left(x, u \right) - \rho\left(x', u \right) \right| & \leq & L_\rho  \left\| x - x' \right\| \
  , \label{eqn:rho_lipschitz} \nonumber 
\end{eqnarray}
where $\| . \|$ denotes the Euclidean norm over the space $\mathcal X$. We also assume that two constants $L_f$  and  $L_\rho$  satisfying  the above-written  inequalities  are known.

For a given sequence of actions, one can define the worst possible return that can be obtained by any system whose dynamics $f'$ and $\rho'$ would satisfy the Lipschitz inequalities and that would coincide with the values of the functions $f$ and $\rho$ given by the sample of system transitions $\mathcal F$. As shown in \cite{Fonteneau2011minmax}, this worst possible return can be computed by solving a  finite-dimensional optimization  problem  over  $\mathcal  X^{T-1}  \times  \mathbb R^{T}$. Intuitively, solving such an optimization problem amounts in determining a most pessimistic trajectory of the system that is still compliant with the sample of data and the Lipschitz continuity assumptions.
More specifically, for a given sequence of actions $(u_0,\ldots, u_{T-1}) \in {\mathcal U}^T$, some given constants $L_f$ and $L_\rho$, a given initial state $x_0 \in \mathcal X$ and a given sample of transitions ${\mathcal F}$,  this optimization problem writes:
\\
\\
\fbox{
\begin{centering}
\begin{minipage}{0.95\textwidth}
$ \left( \mathcal P_T(\mathcal F,L_f,L_\rho,x_0,u_0,\ldots,u_{T-1}) \right):$
\begin{eqnarray}
\underset {\begin{matrix} \mathbf{ \hat r_0} & \ldots & \mathbf{\hat r_{T-1}} \in \mathbb R\\
\mathbf{\hat     x_0} &\ldots&  \mathbf{\hat    x_{T-1}} \in \mathcal X\end{matrix}} {\min} \qquad
  \sum_{t=0}^{T-1} \mathbf{\hat r_t} , \nonumber 
\end{eqnarray}
subject to
\begin{align*}
&\left| \mathbf{\hat r_t} - r^{(u_t),k_t} \right|^2 \leq L_\rho^2 \left\| \mathbf{\hat x_t} - x^{(u_t),k_t}  \right\|^2 \ ,  \forall (t,k_t) \in  \left\{0 , \ldots, T-1 \right\} \times \left\{1,\ldots,n^{(u_t)} \right\}, \\  
&\left\| \mathbf{\hat x_{t+1}} - y^{(u_t),k_t} \right\|^2 \leq L_f^2 \left\| \mathbf{\hat x_{t}} - x^{(u_t),k_t} \right\|^2 \hspace{-.2cm} , \forall (t,k_t)  \in \left\{0 , \ldots, T-1 \right\} \times \left\{1,\ldots,n^{(u_t)} \right\}\hspace{-.1cm}, \\ 
&  \left| \mathbf{\hat r_{t}} - \mathbf{\hat r_{t'}}  \right|^2  \leq L_\rho^2 \left\|\mathbf{\hat x_t}  -  \mathbf{\hat x_{t'}}  \right\|^2  \ ,  \forall t,t' \in \left\{0,\ldots, T-1  | u_t = u_{t'} \right\}, \\
& \left\| \mathbf{\hat x_{t+1}} - \mathbf{\hat x_{t'+1}}   \right\|^2  \leq L_f^2 \left\|\mathbf{\hat x_t}  -  \mathbf{\hat x_{t'}}  \right\|^2 \ ,   \forall t,t' \in \left\{0,\ldots, T-2  | u_t = u_{t'} \right\},  \\ 
& \left. \mathbf{\hat x_0} = x_0 \ . \right.    
\end{align*}
\end{minipage}
\end{centering}
}
\\
\\
Note that, throughout the paper,  optimization variables will be written in bold.

The min max approach to generalization aims at identifying which sequence of actions maximizes its worst possible return, that is which sequence of actions leads to the highest value of $\left( \mathcal P_T(\mathcal F,L_f,L_\rho,x_0,u_0,\ldots,u_{T-1}) \right)$.

We focus in this paper on the design of resolution schemes for solving the program  $\left( \mathcal P_T(\mathcal F,L_f,L_\rho,x_0,u_0,\ldots,u_{T-1}) \right)$. 
These schemes can afterwards be used for solving the $\min \max$ problem through exhaustive search over  the set of all sequences of actions.

Later in this paper, we will also analyze the computational complexity of  this min max generalization problem. When carrying out this analysis, we will assume that all the data of the problem (i.e., $T,\mathcal F,L_f,L_\rho,x_0,u_0,\ldots,u_{T-1}$) are given in the form of rational numbers.

\section{The Two-stage Case}
\label{section:2_stage_case}

In this section, we restrict ourselves to the case where the time horizon contains only two steps, i.e. $T=2$, which is an important particular case of $\big(\mathcal P_T(\mathcal F,L_f,L_\rho,x_0,u_0,\ldots,u_{T-1})\big)$. Many works in optimal sequential decision making have considered the two-stage case \cite{Frauendorfer1992,Darby2000}, which relates to many applications, such as for instance medical applications where one wants to infer ``safe'' clinical decision rules from batch collections of clinical data \cite{Banerjee2006,Lokhnygina2008,Lunceford2002,Wahed2004}.

In Section \ref{subsection:decoupling_P_2}, we show that this problem can be decoupled into two subproblems. While the first subproblem is straightforward to solve, we prove in Section \ref{subsection:complexity} that the second one is NP-hard, which proves that the two-stage problem as well as the generalized $T-$stage problem $\left(\mathcal P_T(\mathcal F,L_f,L_\rho,x_0,u_0,\ldots,u_{T-1})\right)$ are also NP-hard.

Given a two-stage sequence of actions $\left( u_0, u_1 \right) \in \mathcal U^2$, the two-stage version of the problem $\big(\mathcal P_T(\mathcal F,L_f,L_\rho,x_0,u_0,\ldots,u_{T-1})\big)$ writes as follows:\\
\\
 \fbox{
\begin{centering}
\begin{minipage}{0.95\textwidth}
$\big( \mathcal P_2(\mathcal F,L_f,L_\rho,x_0,u_0,u_1) \big):$
\begin{eqnarray}
\underset {\begin{matrix} \mathbf{\hat r_0} , \mathbf{\hat r_{1}} \in \mathbb R \\ \mathbf{\hat  x_0} , \mathbf{\hat   x_{1}} \in \mathcal X \end{matrix}} {\min} \qquad  \mathbf{\hat r_0} + \mathbf{\hat r_1}, \nonumber
\end{eqnarray}
subject to 
\begin{eqnarray}
&& \left|\mathbf{\hat r_0} - r^{(u_0),k_0} \right|^2 \leq L_\rho^2 \left\| \mathbf{\hat x_0} - x^{(u_0),k_0}  \right\|^2  \ ,  \forall  k_0 \in \left\{1, \ldots, n^{(u_0)} \right\} \ ,  \label{setConst_r_0_general} \\
&& \left| \mathbf{\hat r_1} - r^{(u_1),k_1} \right|^2  \leq L_\rho^2 \left\| \mathbf{\hat x_1} - x^{(u_1),k_1}  \right\|^2  \ , \forall  k_1 \in \left\{1, \ldots, n^{(u_1)} \right\} \ , \label{setConst_r_1_general}  \\
&& \left\| \mathbf{\hat x_{1}} - y^{(u_0),k_0} \right\|^2  \leq L_f^2 \left\| \mathbf{\hat x_{0}} - x^{(u_0),k_0} \right\|^2  \ , \forall  k_0 \in \left\{1, \ldots, n^{(u_0)} \right\} \ ,  \label{setConst_x_1_general} \\
&& \left| \mathbf{\hat r_0} - \mathbf{\hat r_1} \right|^2 \leq L_{\rho}^2 \left\| \mathbf{\hat x_0} - \mathbf{\hat x_1} \right\|^2  \mbox{ if } u_0 = u_1 \ , \label{setConst_r_0_r_1_general} \\
&& \mathbf{\hat x_0} = x_0 \ . \label{setConst_x_0_general}
\end{eqnarray}
\end{minipage}
\end{centering}
}\\

For a matter of simplicity, we will often drop the arguments in the definition of the optimization problem and refer   $\big(\mathcal P_2(\mathcal F,L_f,L_\rho,x_0,u_0,u_{1})\big)$ as $( \mathcal P^{(u_0,u_1)}_2 )$. We denote by $B^{(u_0,u_1)}_{2}(\mathcal F)$ the lower bound associated with an optimal solution of $( \mathcal P^{(u_0,u_1)}_2 ) $:
\begin{definition}[Optimal Value $B^{(u_0,u_1)}_{2}(\mathcal F)$]
Let $(u_0, u_1) \in \mathcal U^2$, and let $\left(\mathbf{\hat r_0^*},\mathbf{\hat r_1^*},\mathbf{\hat x_0^*},\mathbf{\hat x_1^*}\right)$ be an optimal solution to $\left( \mathcal P^{(u_0,u_1)}_2 \right). $ Then,
\begin{eqnarray*}
B^{(u_0,u_1)}_{2}(\mathcal F) \triangleq \mathbf{\hat r_0^*} + \mathbf{\hat r_1^*} \ .
\end{eqnarray*}
\end{definition}

\subsection{Decoupling Stages}
\label{subsection:decoupling_P_2}
Let $(\mathcal P'^{(u_0,u_1)}_2 )$ and $(\mathcal P''^{(u_0,u_1)}_2 )$ be the two following subproblems:
\\
\fbox{
\begin{centering}
\begin{minipage}{0.95\textwidth}
$ \left( \mathcal P'^{(u_0,u_1)}_2 \right) :$
\begin{eqnarray*}
\underset {\begin{matrix} \mathbf{\hat r_{0}} \in \mathbb R\\
 \mathbf{\hat    x_{0}} \in \mathcal X\end{matrix} } {\min}\qquad & \mathbf{\hat r_0} \nonumber
\end{eqnarray*}
subject to
\begin{eqnarray*}
&&\left | \mathbf{\hat r_0} - r^{(u_0),k_0} \right|^2 \leq L_\rho^2 \left\| \mathbf{\hat x_0} - x^{(u_0),k_0}  \right\|^2    \ , \forall  k_0 \in \left\{1, \ldots, n^{(u_0)} \right\} \ , \nonumber   \\
&&\mathbf{\hat x_0} = x_0 \ .   \nonumber
\end{eqnarray*}
\end{minipage}
\end{centering}
}\\
\\
\\
\fbox{
\begin{centering}
\begin{minipage}{0.95\textwidth}
$ \left( \mathcal P''^{(u_0,u_1)}_2 \right) :$
\begin{eqnarray}
 \underset {\begin{matrix} \mathbf{\hat r_{1}} \in \mathbb R\\
 \mathbf{\hat    x_{1}} \in \mathcal X\end{matrix}} {\min} \qquad  \mathbf{\hat r_1}
\end{eqnarray}
subject to
\begin{eqnarray}
&&\left| \mathbf{\hat r_1} - r^{(u_1), k_1} \right|^2 \leq L_\rho^2 \left\| \mathbf{\hat x_1} - x^{(u_1), k_1}  \right\|^2 \ \ ,  \forall k_1 \in \left\{1, \ldots, n^{(u_1)} \right\} \ , \label{c1p2''} \\
&&\left\| \mathbf{\hat x_{1}} - y^{(u_0), k_0} \right\|^2 \leq L_f^2 \left \|  x_{0} - x^{(u_0), k_0} \right\|^2 \ , \forall  k_0 \in \left\{1, \ldots, n^{(u_0)} \right\} \ . \label{c2p2''}
\end{eqnarray}
\end{minipage}
\end{centering}
}\\
\\

We show in this section that an optimal solution to $(\mathcal P^{(u_0,u_1)}_2 )$ can be obtained by solving the two subproblems $ ( \mathcal P'^{(u_0,u_1)}_2 )$ and $( \mathcal P''^{(u_0,u_1)}_2 )$ corresponding to the first stage and the second stage. Indeed,  one can see that the stages $t = 0$ and $t = 1$ are theoretically coupled by constraint  \eqref{setConst_r_0_r_1_general}, except in the case where the two actions $u_0$ and $u_1$ are different for which $(\mathcal P^{(u_0,u_1)}_2 )$ is trivially decoupled. We prove in the following that, even in the case $u_0 = u_1$, optimal solutions to the two decoupled problems $ ( \mathcal P'^{(u_0,u_1)}_2 )$ and $( \mathcal P''^{(u_0,u_1)}_2 )$ also satisfy constraint  \eqref{setConst_r_0_r_1_general}. Additionally, we provide the solution of $ ( \mathcal P'^{(u_0,u_1)}_2 )$. 

\begin{theorem}
\label{theorem_decoupling}
Let $\left( u_0, u_1 \right) \in \mathcal U^2$. If $\left(\mathbf{\hat r_0^*},\mathbf{\hat x_0^*} \right)$ is an optimal solution to $\left(\mathcal P'^{(u_0,u_1)}_2 \right)$ and $\left(\mathbf{\hat r_1^*},\mathbf{\hat x_1^*} \right)$ is an optimal solution to $\left( \mathcal P''^{(u_0,u_1)}_2 \right)$, then  $\left(\mathbf{\hat r_0^*},\mathbf{\hat r_1^*},\mathbf{\hat x_0^*},\mathbf{\hat x_1^*}\right)$ is an optimal solution to $\left(\mathcal P^{(u_0,u_1)}_2\right)$.
\end{theorem}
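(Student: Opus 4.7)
The overall strategy is to split the argument into two cases according to whether $u_0=u_1$, verify feasibility of the concatenated solution $(\hat r_0^*,\hat r_1^*,\hat x_0^*,\hat x_1^*)$ in the coupled problem $(\mathcal P^{(u_0,u_1)}_2)$, and then conclude by a standard additivity/projection argument.

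The easy case is $u_0\neq u_1$: the coupling constraint \eqref{setConst_r_0_r_1_general} is simply absent, and the remaining constraints of $(\mathcal P^{(u_0,u_1)}_2)$ partition naturally into those of $(\mathcal P'^{(u_0,u_1)}_2)$ and those of $(\mathcal P''^{(u_0,u_1)}_2)$ once $\hat x_0=x_0$ is substituted into \eqref{setConst_x_1_general}. Feasibility of the concatenation is then immediate, and nothing else has to be checked.

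The substantive case is $u_0=u_1=:u$, where I need to verify that \eqref{setConst_r_0_r_1_general} automatically holds for the concatenated pair. My plan is first to derive the closed form $\hat r_0^*=\max_{k_0}\bigl(r^{(u),k_0}-L_\rho\|x_0-x^{(u),k_0}\|\bigr)$ by tightening the ball constraints \eqref{setConst_r_0_general} under $\hat x_0=x_0$, and symmetrically, at the optimal $\hat x_1^*$, the characterization $\hat r_1^*=\max_{k_1}\bigl(r^{(u),k_1}-L_\rho\|\hat x_1^*-x^{(u),k_1}\|\bigr)$. The desired inequality $|\hat r_0^*-\hat r_1^*|\le L_\rho\|x_0-\hat x_1^*\|$ is then obtained by a two-case analysis on the sign of $\hat r_0^*-\hat r_1^*$: in each case, one upper-bounds the smaller max by the single term corresponding to the sample index that attains the larger max; the reward values $r^{(u),k}$ then cancel, leaving a difference of distances to a common sample point $x^{(u),k}$ that the triangle inequality controls by $\|x_0-\hat x_1^*\|$. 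I expect this two-case/triangle step to be the only real obstacle, and it crucially exploits the fact that when $u_0=u_1$ both subproblems draw from the same sample set $\mathcal F^{(u)}$, so that the argmax index used on one side is an admissible choice on the other.

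Once feasibility of $(\hat r_0^*,\hat r_1^*,\hat x_0^*,\hat x_1^*)$ for $(\mathcal P^{(u_0,u_1)}_2)$ has been established in both cases, optimality follows by the standard projection/additivity argument: any feasible tuple for $(\mathcal P^{(u_0,u_1)}_2)$ restricts to feasible solutions of the two subproblems (using $\hat x_0=x_0$ to rewrite \eqref{setConst_x_1_general} as \eqref{c2p2''}), and the objective $\hat r_0+\hat r_1$ is additive. Therefore the sum of the two subproblem optima is simultaneously a lower bound on $B^{(u_0,u_1)}_2(\mathcal F)$ and a value attained by the concatenation, which yields the theorem.
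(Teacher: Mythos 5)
Your proposal follows essentially the same route as the paper's proof: the same split on $u_0\neq u_1$ versus $u_0=u_1$, the same interval characterizations $\mathbf{\hat r_0^*}=\max_{k_0}\bigl(r^{(u),k_0}-L_\rho\|x_0-x^{(u),k_0}\|\bigr)$ and $\mathbf{\hat r_1^*}=\max_{k_1}\bigl(r^{(u),k_1}-L_\rho\|\mathbf{\hat x_1^*}-x^{(u),k_1}\|\bigr)$, and the same sign-based two-case argument in which the argmax index from the larger quantity is substituted into the other maximum so that the rewards cancel and the triangle inequality yields $|\mathbf{\hat r_0^*}-\mathbf{\hat r_1^*}|\le L_\rho\|x_0-\mathbf{\hat x_1^*}\|$ (you write ``upper-bounds the smaller max'' where you mean ``lower-bounds,'' but the mechanism you describe is the correct one). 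The only difference is that you make explicit the concluding additivity/relaxation step that the paper leaves implicit, which is a welcome clarification rather than a deviation.
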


\begin{proof}
\begin{itemize}
\item First case:  $u_0 \neq u_1  \ .$
\end{itemize}
The constraint \eqref{setConst_r_0_r_1_general} drops and the theorem is trivial.

\begin{itemize}
\item Second case:  $u_0 = u_1 \ .$
\end{itemize}
The rationale of the proof is the following.  We first relax constraint \eqref{setConst_r_0_r_1_general}, and consider the two problems $( \mathcal P'^{(u_0,u_1)}_2)$ and $( \mathcal P''^{(u_0,u_1)}_2)$. Then, we show that optimal solutions of $( \mathcal P'^{(u_0,u_1)}_2)$ and $( \mathcal P''^{(u_0,u_1)}_2)$ also satisfy constraint \eqref{setConst_r_0_r_1_general}.

\paragraph{About $( \mathcal P'^{(u_0,u_1)}_2 )$}
The problem $( \mathcal P'^{(u_0,u_1)}_2)$ consists in the minimization of $\mathbf{\hat r_0}$ under the intersection of interval constraints. It is therefore straightforward to solve. In particular the optimal solution $\mathbf{\hat r_0^*}$ lies at the lower value of one of the intervals. Therefore there exists $\left( x^{(u_0),k^*_0},r^{(u_0),k^*_0},y^{(u_0), k^*_0} \right) \in \mathcal F^{(u_0)}$ such that
\begin{equation}
\mathbf{\hat r_0^*}= r^{(u_0), k^*_0} - L_{\rho} \left\|x_0-x^{(u_0), k^*_0} \right\|.
\label{solutionr0}
\end{equation}
Furthermore $\mathbf{\hat r_0^*}$ must belong to all intervals. We therefore have that
\begin{equation}
\mathbf{\hat r_0^*} \geq  r^{(u_0), k_0} - L_{\rho} \left\| x_0-x^{(u_0), k_0} \right\| \ , \qquad \forall k_0 \in \left\{ 1, \ldots , n^{(u_0)} \right\}. \label{intervalsr0}
\end{equation}
In other words,
\begin{eqnarray*}
\mathbf{\hat r_0^*} = \underset {k_0 \in \left\{ 1, \ldots , n^{(u_0)} \right\}  } {\max} r^{(u_0), k_0} - L_{\rho} \left\| x_0-x^{(u_0), k_0} \right\| \ . 
\end{eqnarray*}

\paragraph{About $( \mathcal P''^{(u_0,u_1)}_2 )$}
Again we observe that it is the minimization of $\mathbf{\hat r_1}$ under the intersection of interval constraints as well. The sizes of the intervals are however not fixed but determined by the variable $\mathbf{\hat x_1}$. If we denote the optimal solution of $( \mathcal P''^{(u_0,u_1)}_2 )$ by $\mathbf{\hat r_1^*}$ and $\mathbf{\hat x_1^*}$, we know that $\mathbf{\hat r_1^*}$ also lies at the lower value of one of the intervals. Hence there exists  $\left( x^{(u),k^*_1},r^{(u),k^*_1},y^{(u), k^*_1} \right) \in \mathcal F^{(u)}$  such that
\begin{equation}
\mathbf{\hat r_1^*} = r^{(u),k^*_1} - L_{\rho} \left\| \mathbf{\hat x_1^*} - x^{(u),k^*_1} \right\|.
\label{solutionr1}
\end{equation}
Furthermore $\mathbf{\hat r_1^*}$ must belong to all intervals. We therefore have that
\begin{equation}
\mathbf{\hat r_1^*} \geq  r^{(u),k_1} - L_{\rho} \left\| \mathbf{\hat x_1^*} - x^{(u),k_1} \right\| \ , \qquad \forall k_1 \in \left\{ 1 , \ldots, n^{(u)} \right\} .
\label{intervalsr1}
\end{equation}
We now discuss two cases depending on the sign of $\mathbf{\hat r_0^*} - \mathbf{\hat r_1^*}$.

\bigskip

$-$ \textbf{If} $\mathbf{\hat r_0^*} - \mathbf{\hat r_1^*} \geq 0$\\
Using \eqref{solutionr0} and \eqref{intervalsr1} with index $k^*_0$, we have
\begin{equation}
 \mathbf{\hat r^*_0} - \mathbf{\hat r^*_1}   \leq L_\rho \left(   \left\|  \mathbf{\hat x^*_1} - x^{(u), k^*_0}  \right\|  -  \left\|  x_0 - x^{(u), k^*_0}  \right\| \right)  \label{constNotUseful_discr}
\end{equation}
Since $\mathbf{\hat r_0^*} - \mathbf{\hat r_1^*} \geq 0$, we therefore have
\begin{eqnarray}
\left|  \mathbf{\hat r^*_0} - \mathbf{\hat r^*_1}  \right| \leq L_\rho \left(   \left\|  \mathbf{\hat x^*_1}  - x^{(u),k^*_0}  \right\|  -  \left\|  x_0 - x^{(u),k^*_0}  \right\| \right) \ . \label{constNotUseful_first_case_1}
\end{eqnarray}
Using the triangle inequality we can write
\begin{eqnarray}
\left \|  \mathbf{\hat x^*_1} - x^{(u),k^*_0} \right\| \leq  \left\|  \mathbf{\hat x^*_1} - x_0  \right\| +  \left\|  x_0 - x^{(u),k^*_0} \right\| \label{constNotUseful_first_case_2} \ .
\end{eqnarray}
Replacing \eqref{constNotUseful_first_case_2} in \eqref{constNotUseful_first_case_1} we obtain
\begin{equation*}
\left| \mathbf{\hat r^*_1} - \mathbf{\hat r^*_0}  \right| \leq L_\rho \left\|  \mathbf{\hat x^*_1} -  x_0  \right\|
\end{equation*}
which shows that $\mathbf{\hat r_0^*}$ and $\mathbf{\hat r_1^*}$ satisfy constraint \eqref{setConst_r_0_r_1_general}.  

\bigskip

$-$ \textbf{If} $\mathbf{\hat r_0^*} - \mathbf{\hat r_1^*} < 0$\\
Using \eqref{solutionr1} and \eqref{intervalsr0} with index $k^*_1$, we have
\begin{equation*}
\mathbf{\hat r^*_1} - \mathbf{\hat r^*_0} \leq L_\rho \left(  \left\| x_0 - x^{(u), k^*_1}  \right\|  -  \left\|  \mathbf{\hat x^*_1} - x^{(u), k^*_1}  \right\| \right)
\end{equation*}
and since $\mathbf{\hat r^*_0} - \mathbf{\hat r^*_1} < 0$,
\begin{eqnarray}
\left| \mathbf{\hat r^*_1} - \mathbf{\hat r^*_0} \right| \leq L_\rho \left(  \left\|  x_0 - x^{(u), k^*_1}  \right\| - \left\|  \mathbf{\hat x^*_1} - x^{(u), k^*_1}  \right\| \right). \label{constNotUseful_second_case_1}
\end{eqnarray}
Using the triangle inequality we can write 
\begin{eqnarray}
\left \|  x_0 - x^{(u), k^*_1} \right\| \leq  \left\|   x_0 - \mathbf{\hat x^*_1}  \right\| +  \left\|  \mathbf{\hat x^*_1} - x^{(u), k^*_1} \right\|. \label{constNotUseful_second_case_2}
\end{eqnarray}
Replacing \eqref{constNotUseful_second_case_2} in \eqref{constNotUseful_second_case_1} yields
\begin{eqnarray}
\left| \mathbf{\hat r^*_1} - \mathbf{\hat r^*_0} \right| \leq  L_\rho  \left\|   x_0 - \mathbf{\hat x^*_1}  \right\| \ , \nonumber
\end{eqnarray}
which again shows that $\mathbf{\hat r_0^*}$ and $\mathbf{\hat r_1^*}$ satisfy constraint \eqref{setConst_r_0_r_1_general}.

\noindent In both cases $\mathbf{\hat r_0^*} - \mathbf{\hat r_1^*} \geq 0$ and $\mathbf{\hat r_0^*} - \mathbf{\hat r_1^*} < 0$, we have shown that constraint \eqref{setConst_r_0_r_1_general} is satisfied. 
\end{proof}

In the following of the paper, we focus on the two subproblems $( \mathcal P'^{(u_0,u_1)}_2 )$ and  $( \mathcal P''^{(u_0,u_1)}_2 )$ rather than on $( \mathcal P^{(u_0,u_1)}_2 )$. From the proof of  Theorem \ref{theorem_decoupling} given above, we can directly obtain the solution of $( \mathcal P'^{(u_0,u_1)}_2 )$:

\begin{corollary}
\label{corollary:solution_P'2}
The solution of the problem $( \mathcal P'^{(u_0,u_1)}_2 )$ is
\begin{eqnarray*}
\mathbf{\hat r_0^*} = \underset {k_0 \in \left\{ 1, \ldots , n^{(u_0)} \right\}  } {\max} r^{(u_0), k_0} - L_{\rho} \left\| x_0-x^{(u_0), k_0} \right\| \ .
\end{eqnarray*}
\end{corollary}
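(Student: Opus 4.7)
The plan is to observe that this corollary is essentially already established within the proof of Theorem~\ref{theorem_decoupling}, so I would only need to make the argument self-contained and explicit. The starting point is to substitute the equality constraint $\mathbf{\hat x_0}=x_0$ directly into the remaining constraints, which turns each squared inequality $|\mathbf{\hat r_0}-r^{(u_0),k_0}|^2\le L_\rho^2\|x_0-x^{(u_0),k_0}\|^2$ into the closed interval constraint
\begin{equation*}
\mathbf{\hat r_0}\in\bigl[\,r^{(u_0),k_0}-L_\rho\|x_0-x^{(u_0),k_0}\|,\ r^{(u_0),k_0}+L_\rho\|x_0-x^{(u_0),k_0}\|\,\bigr].
\end{equation*}

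Next, I would note that the feasible region for $\mathbf{\hat r_0}$ is simply the intersection of these $n^{(u_0)}$ closed intervals, which is itself a (possibly degenerate) closed interval. Feasibility is not vacuous: the true reward $\rho(x_0,u_0)$ satisfies every one of these constraints by the Lipschitz assumption on $\rho$, so the intersection is non-empty. Minimizing a linear function over an intersection of closed intervals attains its optimum at the maximum of the lower endpoints, which gives exactly the formula
\begin{equation*}
\mathbf{\hat r_0^*}=\max_{k_0\in\{1,\ldots,n^{(u_0)}\}}\bigl(r^{(u_0),k_0}-L_\rho\|x_0-x^{(u_0),k_0}\|\bigr).
\end{equation*}

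Since this is exactly the conclusion drawn in the paragraph ``About $(\mathcal P'^{(u_0,u_1)}_2)$'' of the proof of Theorem~\ref{theorem_decoupling} (combining equations \eqref{solutionr0} and \eqref{intervalsr0}), the corollary follows immediately. There is no real obstacle here; the only subtlety worth spelling out is why the intersection of intervals is non-empty, and this is handled by the existence of a Lipschitz-compliant underlying function $\rho$ passing through the sample points. A one-line proof that simply cites the relevant portion of the proof of Theorem~\ref{theorem_decoupling} would be adequate.
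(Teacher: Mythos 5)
Your proposal is correct and follows essentially the same route as the paper, which obtains this corollary directly from the paragraph ``About $(\mathcal P'^{(u_0,u_1)}_2)$'' in the proof of Theorem~\ref{theorem_decoupling}: the problem is the minimization of $\mathbf{\hat r_0}$ over an intersection of intervals, so the optimum is the largest of the lower endpoints. Your added remark that the intersection is non-empty (because the true reward $\rho(x_0,u_0)$ is feasible by Lipschitz continuity) is a small but worthwhile clarification the paper leaves implicit.
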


\subsection{Complexity of $( \mathcal P''^{(u_0,u_1)}_2 )$}
\label{subsection:complexity}

The problem $(  \mathcal P'^{(u_0,u_1)}_2 )$ being solved, we now focus in this section on the resolution of $( \mathcal P''^{(u_0,u_1)}_2 )$. In particular, we show that it is NP-hard, even in the particular case where there is only one element in the sample $\mathcal F^{(u_1)} = \left\{ \left(  x^{(u_1),1} , r^{(u_1),1}, y^{(u_1),1}  \right)   \right\}$. In this particular case, the problem $( \mathcal P''^{(u_0,u_1)}_2 )$ amounts to maximizing of the distance $\left\| \mathbf{\hat x_1} - x^{(u_1),1} \right\|$ under an intersection of balls as we show in the following lemma.

\begin{lemma}
\label{lemma:specific_case}
If the cardinality of $\mathcal F^{(u_1)}$ is equal to $1$:
\begin{eqnarray*}
\mathcal F^{(u_1)} = \left\{ \left(  x^{(u_1),1} , r^{(u_1),1}, y^{(u_1),1}  \right)   \right\} \ ,
\end{eqnarray*}
then the optimal solution to $\left(\mathcal P''^{(u_0,u_1)}_2\right)$ satisfies 
\begin{eqnarray*}
\mathbf{\hat r_1^*}= r^{(u_1),1}-L_{\rho} \left\| \mathbf{\hat x_1^*} - x^{(u_1),1} \right\|
\end{eqnarray*}
where $\mathbf{\hat x_1^*}$ maximizes $\left\| \mathbf{\hat x_1} - x^{(u_1),1} \right\|$ subject to
\begin{eqnarray*}
\left\|\mathbf{\hat x_1}- y^{(u_0) , k_0} \right\|^2 \leq L^2_f \left\|x_0 - x^{(u_0),k_0} \right\|^2 \ , \qquad  \forall \left(x^{(u_0),k_0},r^{(u_0),k_0},y^{(u_0),k_0} \right)\in \mathcal F^{(u_0)} \ .
\end{eqnarray*}
\end{lemma}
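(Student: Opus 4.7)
The plan is to reduce the joint optimization over $(\mathbf{\hat r_1}, \mathbf{\hat x_1})$ to a pure geometric optimization over $\mathbf{\hat x_1}$ by eliminating $\mathbf{\hat r_1}$ analytically. The key observation is that the reward constraints \eqref{c1p2''} involve $\mathbf{\hat r_1}$ only, with one inequality per element of $\mathcal{F}^{(u_1)}$, while the state constraints \eqref{c2p2''} involve $\mathbf{\hat x_1}$ only. Since $|\mathcal{F}^{(u_1)}|=1$, the single reward inequality is equivalent to the bracketing
\[
r^{(u_1),1} - L_\rho \|\mathbf{\hat x_1} - x^{(u_1),1}\| \;\leq\; \mathbf{\hat r_1} \;\leq\; r^{(u_1),1} + L_\rho \|\mathbf{\hat x_1} - x^{(u_1),1}\|.
\]

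The first step is then to fix any $\mathbf{\hat x_1}$ satisfying \eqref{c2p2''} and minimize $\mathbf{\hat r_1}$: the unique minimizer is the lower endpoint $r^{(u_1),1} - L_\rho \|\mathbf{\hat x_1} - x^{(u_1),1}\|$, which establishes the claimed relationship between $\mathbf{\hat r_1^*}$ and $\mathbf{\hat x_1^*}$ at optimum. Substituting this expression back into the objective reduces the outer problem to
\[
\min_{\mathbf{\hat x_1} \in \mathcal X} \; r^{(u_1),1} - L_\rho \|\mathbf{\hat x_1} - x^{(u_1),1}\|
\]
under the ball constraints \eqref{c2p2''}. Since $r^{(u_1),1}$ is a constant and $L_\rho \geq 0$, this is equivalent to maximizing $\|\mathbf{\hat x_1} - x^{(u_1),1}\|$ over the same intersection of balls, yielding exactly the characterization of $\mathbf{\hat x_1^*}$ stated in the lemma.

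I do not expect real obstacles in this proof: it is a straightforward variable-elimination argument enabled by the single-transition hypothesis, which removes any coupling between distinct reward inequalities. The feasible set for $\mathbf{\hat x_1}$ is a nonempty intersection of closed balls (for instance, any state consistent with the true Lipschitz dynamics lies in all of them), and the pointwise optimum in $\mathbf{\hat r_1}$ is always attained. The substantive difficulty is displaced to the reformulated outer problem, namely maximizing a Euclidean norm over an intersection of balls, which is precisely the geometric problem that the paper will subsequently exploit to prove NP-hardness of $(\mathcal P''^{(u_0,u_1)}_2)$.
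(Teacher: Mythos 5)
Your argument is correct and is essentially the paper's own proof, only written out in more detail: the paper likewise observes that the single constraint on $\mathbf{\hat r_1}$ is an interval whose lower endpoint is attained at the optimum, so that minimizing $\mathbf{\hat r_1}$ reduces to maximizing the right-hand side $\left\|\mathbf{\hat x_1}-x^{(u_1),1}\right\|$ under the remaining ball constraints. Your added remarks on feasibility and on the sign of $L_\rho$ are harmless elaborations of the same variable-elimination idea.
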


\begin{proof}
The unique constraint concerning $\mathbf{\hat r_1}$ is an interval. Therefore $\mathbf{\hat r_1^*}$ takes the value of the lower bound of the interval. In order to obtain the lowest such value, the right-hand-side of \eqref{c1p2''} must be maximized under the other constraints.
\end{proof}

Note that if the cardinality $n^{(u_0)} $ of $\mathcal F^{(u_0)}$ is also equal to $1$, then $(  \mathcal P^{(u_0,u_1)}_2 )$ can be solved exactly, as we will later show in Corollary \ref{corollary:trivial_case_n0_n1}. But, in the general case where $n^{(u_0)} > 1 ,$ this problem of maximizing a distance under a set of ball-constraints is NP-hard as we now prove. To do it, we introduce the MNBC (for ``Max Norm with Ball Constraints") decision problem:

\begin{definition}[MNBC Decision Problem]
Given $x^{(0)}\in \mathbb Q^d, y^i\in \mathbb Q^d, \gamma_i\in \mathbb Q, i\in \{1,\ldots, I\}, C \in \mathbb Q$, the MNBC problem is to  determine whether there exists $x\in \mathbb R^d$ such that
$$\left\| x - x^{(0)} \right\|^2\geq C$$
and
$$\left\| x - y^i \right\|^2\leq \gamma_i \ , \qquad \forall i\in \{1,\ldots, I\} \ .$$
\end{definition}
\begin{lemma}
\label{lemma:MNBC}
MNBC is NP-hard.
\end{lemma}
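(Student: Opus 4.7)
I would prove NP-hardness of MNBC by polynomial-time reduction from the classical NP-complete \textsc{Partition} problem: given positive integers $a_1,\dots,a_n$ with $\sum_i a_i=2S$, decide whether some $s\in\{-1,+1\}^n$ satisfies $a^T s=0$. The geometric idea is to use the distance requirement $\|x-x^{(0)}\|^2\ge C$ together with a single ``outer'' ball to pin every feasible $x$ to a prescribed sphere, and then to add further balls whose intersection with that sphere is exactly the set of valid partitions.

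Concretely, I would work in $\mathbb R^n$ and set $x^{(0)}=0$, $C=n$, together with an outer ball centered at the origin of squared radius $n$; together these force $\|x\|^2=n$. Fix any positive integer $R$ (say $R=1$). For each $i\in\{1,\dots,n\}$, add two ``coordinate'' balls centered at $\pm R\,e_i$ with squared radius $R^2+n+2R$; a direct expansion shows that, in conjunction with $\|x\|^2=n$, each pair enforces exactly $|x_i|\le 1$. Finally, add two ``balance'' balls centered at $\pm R\,a$ with squared radius $R^2\|a\|^2+n+R$; combined with $\|x\|^2=n$, these enforce $|a^T x|\le \tfrac{1}{2}$. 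All centers and squared radii are integers of polynomial size in the input, so the reduction runs in polynomial time.

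For correctness, suppose $x^*$ is feasible. From $\|x^*\|^2=n$ and $|x^*_i|\le 1$ for all $i$, together with $\sum_i(x^*_i)^2=n$, equality must hold coordinate-wise, hence $x^*\in\{-1,+1\}^n$; then $a^T x^*\in\mathbb Z$ combined with $|a^T x^*|\le \tfrac{1}{2}$ forces $a^T x^*=0$, so \textsc{Partition} is a YES instance. Conversely, any valid partition $s$ directly verifies all the ball constraints. The main subtlety---and the only real computational check---is the tuning of the squared radii so that, once $\|x\|^2=n$ is substituted, the ball inequalities collapse exactly to $|x_i|\le 1$ and $|a^T x|\le \tfrac{1}{2}$: a squared radius slightly too large would let spurious non-binary feasible points sneak in, and slightly too small would exclude the valid partitions themselves. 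Consequently the reduction must be verified by an explicit expansion of each $\|x\pm R v\|^2$ on the sphere, which is routine but needs care.
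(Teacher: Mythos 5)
Your reduction is correct, and it takes a genuinely different route from the paper. The paper reduces from $\{0,1\}$- (equivalently $\{0,2\}$-) integer programming feasibility: it first uses the non-convex constraint $\|x-\mathbf{1}\|^2\ge d$ together with $2d$ balls to pin the feasible set to $\{0,2\}^d$, and then encodes each linear inequality $a^Tx\le b$ as a single ball; that second step requires a delicate geometric argument (intersection points $\bar y,\tilde y$ on the line through $\mathbf 1$ in direction $a$, right-triangle estimates, and the radius $r=\lceil \tfrac{d}{2}\sqrt{\sum_j a_j^2}+1\rceil$) to show the ball separates $\{x : a^Tx\le b\}$ from $\{x : a^Tx\ge b+2\}$ over the hypercube. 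You instead reduce from \textsc{Partition} and add an explicit outer ball $\|x\|^2\le n$ which, combined with the requirement $\|x-x^{(0)}\|^2\ge C=n$, pins $x$ to the sphere $\|x\|^2=n$ exactly; on that sphere every ball constraint $\|x-y\|^2\le\gamma$ linearizes exactly to $y^Tx\ge(n+\|y\|^2-\gamma)/2$, so your coordinate balls give $|x_i|\le 1$ (hence $x\in\{-1,+1\}^n$ by the norm equality) and your balance balls give $|a^Tx|\le\tfrac12$, hence $a^Tx=0$ by integrality, with no slack analysis needed. I checked the expansions: with squared radii $R^2+n+2R$ and $R^2\|a\|^2+n+R$ the collapse to $|x_i|\le1$ and $|a^Tx|\le\tfrac12$ is exact, both directions of the equivalence hold, and all data are integers of polynomial encoding length. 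What your approach buys is a much more elementary correctness argument (the exact linearization on the sphere replaces the paper's geometric separation lemma); what the paper's buys is a reduction from a strongly NP-hard source and a reusable gadget for encoding arbitrary linear inequalities as balls over a hypercube, whereas \textsc{Partition} is only weakly NP-hard --- this is immaterial for the NP-hardness claim as stated, but worth being aware of.
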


\begin{proof}
To prove it, we will do a reduction from the $\{0,1\}-$programming feasibility problem \cite{Papadimitriou2003}. More precisely, we consider in this proof the  $\{0,2\}-$programming feasibility problem, which is equivalent. The problem is, given $p \in \mathbb N, A\in \mathbb Z^{p \times d}, b\in \mathbb Z^p$ to find whether there exists $x\in \{0,2\}^d$ that satisfies $Ax\leq b$. This problem is known to be NP-hard and we now provide a polynomial reduction to MNBC.

The dimension $d$ is kept the same in both problems. The first step is to define a set of constraints for MNBC such that the only potential feasible solutions are exactly $x\in \{0,2\}^d.$ We define $$x^{(0)} \triangleq (1,\ldots, 1)$$ and $$C \triangleq d .$$ For $i=1,\ldots, d$, we define $$y^{2i} \triangleq \left(y^{2i}_1,\ldots, y^{2i}_d\right)$$ with $y^{2i}_i \triangleq 0$ and $y^{2i}_j \triangleq 1$ for all $j\neq i$ and $\gamma_i \triangleq d+3$.\\
Similarly for $i=1,\ldots, d$, we  define $$y^{2i+1} \triangleq (y^{2i+1}_1,\ldots, y^{2i+1}_d)$$ with $y^{2i+1}_i \triangleq 2$ and $y^{2i+1}_j \triangleq 1$ for all $j\neq i$ and $\gamma_i \triangleq d+3$.\medskip

\noindent
\textbf{Claim}
$$\left\{x\in \mathbb R^d \mid \|x-x^{(0)}\|^2\geq d \right\}
\cap \left( \bigcap_{i=2}^{2d+1} \left\{x\in \mathbb R^d \mid  \|x-y^i\|^2\leq \gamma_i \right\}\right) = \{0,2\}^d$$
It is readily verified that any $x\in \{0,2\}^d$ belongs to the $2d+1$ above sets.\\
Consider $x\in \mathbb R^d$ that belongs to the $2d+1$ above sets. Consider an index $k\in\{1,\ldots, d\}$. Using the constraints defining the sets, we can in particular write
\begin{align*}
\|(x_1,\ldots, x_{k-1},x_k,x_{k+1},\ldots, x_d)-(1,\ldots, 1)\|^2 &\geq d\\
\|(x_1,\ldots, x_{k-1},x_k,x_{k+1},\ldots, x_d)-(1,\ldots,1,0,1,\ldots, 1)\|^2 &\leq d+3\\
\|(x_1,\ldots, x_{k-1},x_k,x_{k+1},\ldots, x_d)-(1,\ldots,1,2,1,\ldots, 1)\|^2 &\leq d+3
\end{align*}
that we can write algebraically
\begin{align}
\sum_{j\neq k} (x_j-1)^2 + (x_k-1)^2 &\geq d\label{geqn}\\
\sum_{j\neq k} (x_j-1)^2 + x_k^2 &\leq d+3\label{leqn1}\\
\sum_{j\neq k} (x_j-1)^2 + (x_k-2)^2 &\leq d+3.\label{leqn2}
\end{align}
By computing $\eqref{leqn1}-\eqref{geqn}$ and $\eqref{leqn2}-\eqref{geqn}$, we obtain $x_k\leq 2$ and $x_k\geq 0$ respectively. This implies that
$$\sum_{k=1}^d \left(x_k-1\right)^2\leq d$$
and the equality is obtained if and only if we have that $x_k\in \{0,2\}$ for all $k$ which proves the claim.\medskip

\noindent
It remains to prove that we can encode any linear inequality through a ball constraint. Consider an inequality of the type $\sum_{j=1}^d a_j x_j \leq b.$ We assume that $a\neq 0$ and that $b$ is even and therefore that there exists no $x\in \{0,2\}^d$ such that $a^Tx=b+1.$ We want to show that there exists $y\in \mathbb Q^d$ and $\gamma\in \mathbb Q$ such that
\begin{equation}
\left\{x\in \{0,2\}^d \mid a^Tx\leq b \right\} = \left\{x\in \{0,2\}^d \mid \|x-y\|^2\leq \gamma \right\}.
\label{theclaim}
\end{equation}
Let $\bar y\in \mathbb R^d$ be the intersection point of the hyperplane $a^Tx=b+1$ and the line $(1\ \cdots\ 1)^T+\lambda (a_1 \ \cdots \ a_d)^T, \lambda\in \mathbb R$. Let $r$ be defined as follows: $$r=\left\lceil \frac{d}{2} \sqrt{\sum_{j=1}^d a_j^2} + 1 \right\rceil.$$ We claim that choosing $\gamma \triangleq r^2$ and $y \triangleq \bar y - r a$ allows us to obtain \eqref{theclaim}. To prove it, we need to show that $x\in \{0,2\}^d$ belongs to the ball if and only if it satisfies the constraint $a^Tx\leq b$. Let $\bar x\in \{0,2\}^d$. There are two cases to consider:
\begin{itemize}
\item Suppose first that  $a^T\bar x\geq b+2$. 
\end{itemize}
Since $\bar y$ is the closest point to $y$ that satisfies $a^Ty=b+1$, it also implies that any point $x$ such that $a^Tx>b+1$ is such that $\|x-y\|^2>r^2$ proving that:
\begin{eqnarray*}
\bar{x} \notin \left\{x\in \mathbb R^d\mid \|x-y\|^2\leq r^2\right\}.
\end{eqnarray*}
\begin{itemize}
\item Suppose now that $a^T\bar x\leq b$ and in particular that $a^T\bar x=b-k$ with $k\in \mathbb N$ (see Figure \ref{fig-nphardproof}).
\end{itemize}
\begin{figure}
\begin{center}
\includegraphics[width=.45\linewidth]{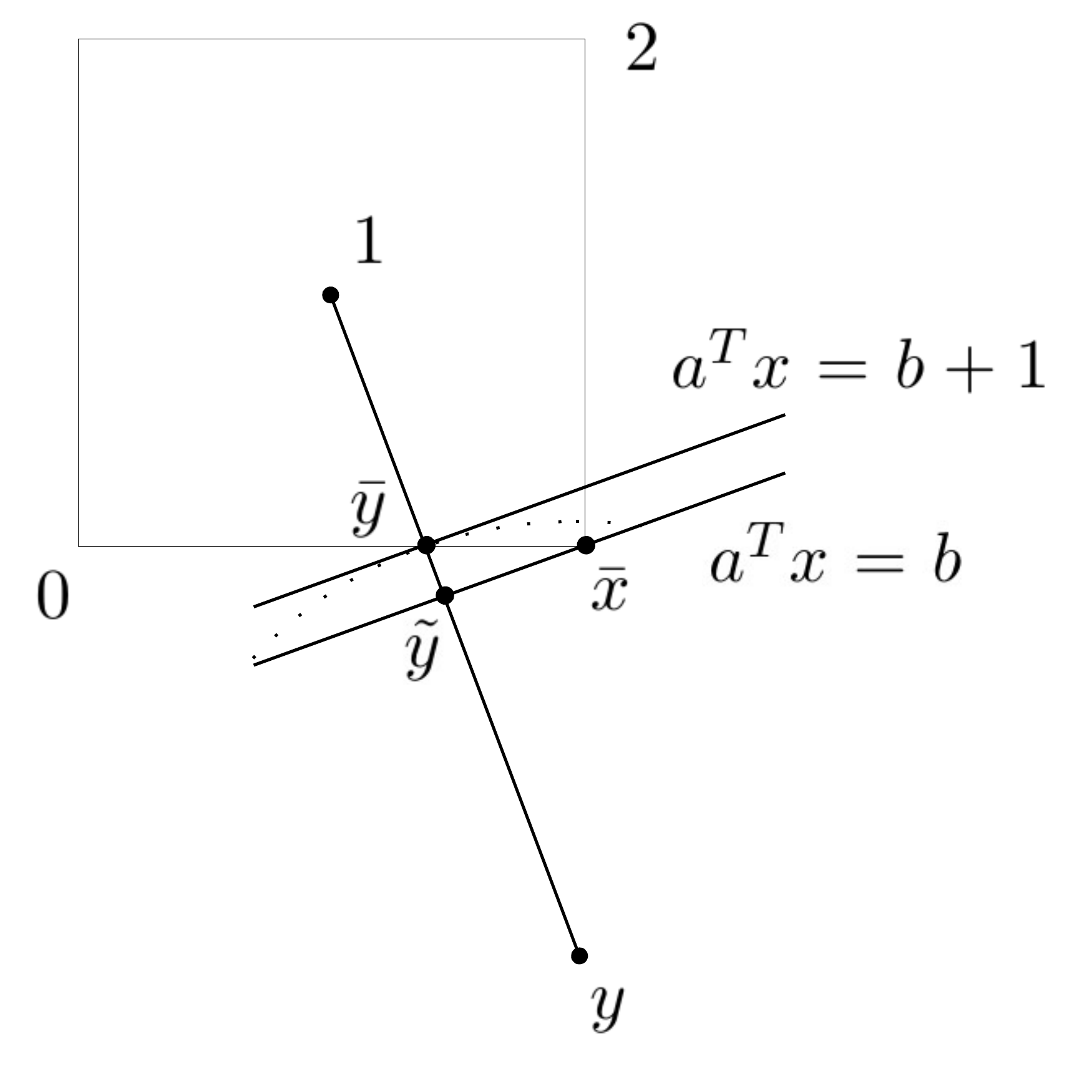}
\caption{The case when $a^T\bar x\leq b$.}
\label{fig-nphardproof}
\end{center}
\end{figure}
Let $\tilde y\in \mathbb R^d$ be the intersection point of the hyperplane $a^Tx=b-k$ and the line $(1\ \cdots\ 1)^T+\lambda (a_1 \ \cdots \ a_d)^T, \lambda\in \mathbb R$. Since $\left( (1\cdots 1)^T,\tilde y, \bar x\right)$ form a right triangle with the right angle in $\tilde y$ and since $\|(1\cdots 1)^T-\bar x\|^2\leq d$, we have
\begin{equation}
\|\tilde y - \bar x\|^2 \leq d.
\label{ytildexbar}
\end{equation}
By definition of $y$, we have:
$$\|y-\bar y\|=r \ ,$$
and by definition of $\bar y $ and $\tilde y$, we have:
$$\left\| \bar y-\tilde y \right\|\geq \frac{1}{\sqrt{\sum_{j=1}^d a_j^2}} \ .$$
Since $\bar y, \tilde y$ and $y$ belong to the same line, we have
\begin{equation}
\left\|y-\tilde y \right\|\leq r-\frac{1}{\sqrt{\sum_{j=1}^d a_j^2}}.
\label{yytilde}
\end{equation}
As $(y,\tilde y, \bar x)$ form a right triangle with the right angle in $\tilde y$, we have that
\begin{align*}
\|\bar x-y\|^2 &= \|y-\tilde y\|^2 + \|\bar x-\tilde y\|^2 \\
&\leq \left(r-\frac{1}{\sqrt{\sum_{j=1}^d a_j^2}} \right)^2 + d \quad \text{using }\eqref{ytildexbar},\eqref{yytilde}\\
&= r^2 -\frac{2r}{\sqrt{\sum_{j=1}^d a_j^2}}+\frac{1}{\sum_{j=1}^d a_j^2}+d.
\end{align*}
Since by definition, $r\geq \frac{d}{2}\sqrt{\sum_{j=1}^d a_j^2}+1$, we can write
\begin{align*}
\|\bar x-y\|^2&\leq r^2 -d-\frac{2}{\sqrt{\sum_{j=1}^d a_j^2}} + \frac{1}{\sum_{j=1}^d a_j^2}+d\\
&= r^2-\frac{1}{\sum_{j=1}^d a_j^2}\\
&\leq r^2.
\end{align*}
This proves that the chosen ball $\left\{x\in \mathbb R^d\mid \|x-y\|^2 \leq r^2 \right\}$ includes the same points from $\{0,2\}^d$ as the linear inequality $a^Tx\leq b$.

The encoding length of all data is furthermore polynomial in the encoding length of the initial inequalities. This completes the reduction and proves the NP-hardness of MNBC.
\end{proof}

 Note that the NP-hardness of MNBC is independent from the choice of the norm used over the state space $\mathcal X$. The two results follow:
\begin{corollary}
\label{corollary:complexity_second_stage}
$\left(\mathcal P''^{(u_0,u_1)}_2 \right) $ is NP-hard.
\end{corollary}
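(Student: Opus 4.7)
The plan is to derive the NP-hardness of $\left(\mathcal P''^{(u_0,u_1)}_2\right)$ as a direct consequence of Lemmas \ref{lemma:specific_case} and \ref{lemma:MNBC}. The key observation is that Lemma \ref{lemma:specific_case} already identifies MNBC as a restricted instance of $\left(\mathcal P''^{(u_0,u_1)}_2\right)$: when $|\mathcal F^{(u_1)}|=1$, minimizing $\mathbf{\hat r_1}$ is equivalent to maximizing $\|\mathbf{\hat x_1}-x^{(u_1),1}\|$ over an intersection of balls centered at the $y^{(u_0),k_0}$ with radii $L_f\|x_0-x^{(u_0),k_0}\|$. Any algorithm for $\left(\mathcal P''^{(u_0,u_1)}_2\right)$ can therefore decide MNBC by comparing the squared optimal distance against the MNBC threshold $C$.

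To make this a formal polynomial-time reduction, I would, given an MNBC instance $(x^{(0)},\{y^i,\gamma_i\}_{i=1}^I,C)$ in $\mathbb Q^d$, set $L_f=L_\rho=1$, $r^{(u_1),1}=0$, $x^{(u_1),1}\triangleq x^{(0)}$, $y^{(u_0),k_0}\triangleq y^{k_0}$ for $k_0=1,\ldots,I$, and $x_0\triangleq 0$. The only remaining obligation is to exhibit, for each $k_0$, a rational point $x^{(u_0),k_0}\in\mathbb Q^d$ with $\|x^{(u_0),k_0}\|^2=\gamma_{k_0}$, so that the ball radii of the target instance coincide with those of MNBC.

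This rational realization of the radii is the main technical step. It is harmless to assume $d\geq 4$, since the underlying $\{0,2\}$-programming problem can be padded with unused variables and the reduction of Lemma \ref{lemma:MNBC} is dimension-preserving. Writing $\gamma_{k_0}=a/b$ with positive integers $a,b$ and applying Lagrange's four-square theorem to decompose $ab=c_1^2+c_2^2+c_3^2+c_4^2$ yields $x^{(u_0),k_0}\triangleq(c_1/b,c_2/b,c_3/b,c_4/b,0,\ldots,0)\in\mathbb Q^d$ with $\|x^{(u_0),k_0}\|^2=\gamma_{k_0}$ and encoding size polynomial in that of $\gamma_{k_0}$. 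Combined with a polynomial-time four-square decomposition routine, this yields a polynomial-time many-one reduction from MNBC to $\left(\mathcal P''^{(u_0,u_1)}_2\right)$, and Lemma \ref{lemma:MNBC} then gives the NP-hardness conclusion.
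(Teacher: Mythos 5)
Your proposal is correct and follows essentially the same route as the paper, which derives the corollary directly from Lemma \ref{lemma:specific_case} (the single-transition case of $\left(\mathcal P''^{(u_0,u_1)}_2\right)$ is exactly a max-distance-under-ball-constraints problem) and Lemma \ref{lemma:MNBC} (NP-hardness of MNBC), stating only that ``the two results follow.'' Your additional step --- using Lagrange's four-square theorem to exhibit rational points $x^{(u_0),k_0}$ with $L_f^2\|x_0-x^{(u_0),k_0}\|^2=\gamma_{k_0}$ so that arbitrary rational MNBC radii are realizable by legitimate problem data --- is a genuine technical detail that the paper leaves implicit, and it is handled correctly.
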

\begin{theorem}
\label{theorem:complexity}
The two-stage problem $\left(\mathcal P^{(u_0,u_1)}_2 \right)$  and the generalized $T-$stage problem $\left( \mathcal P_T(\mathcal F,L_f,L_\rho,x_0,u_0,\ldots,u_{T-1}) \right)$ are NP-hard.
\end{theorem}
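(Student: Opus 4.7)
The plan is to derive both NP-hardness claims as essentially free corollaries of Corollary \ref{corollary:complexity_second_stage}, using the decoupling result (Theorem \ref{theorem_decoupling}) to transfer hardness from $(\mathcal{P}''^{(u_0,u_1)}_2)$ to $(\mathcal{P}^{(u_0,u_1)}_2)$, and then an embedding to lift the two-stage hardness to arbitrary horizon $T$.

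For the two-stage problem, I would exhibit a polynomial-time Turing reduction from $(\mathcal{P}''^{(u_0,u_1)}_2)$ to $(\mathcal{P}^{(u_0,u_1)}_2)$: given an instance of $(\mathcal{P}''^{(u_0,u_1)}_2)$ specified by $(\mathcal{F},L_f,L_\rho,x_0,u_0,u_1)$, I feed exactly the same data into $(\mathcal{P}^{(u_0,u_1)}_2)$. By Theorem \ref{theorem_decoupling}, any optimal solution $(\mathbf{\hat r_0^*},\mathbf{\hat r_1^*},\mathbf{\hat x_0^*},\mathbf{\hat x_1^*})$ of $(\mathcal{P}^{(u_0,u_1)}_2)$ contains an optimal solution $(\mathbf{\hat r_1^*},\mathbf{\hat x_1^*})$ of $(\mathcal{P}''^{(u_0,u_1)}_2)$, while Corollary \ref{corollary:solution_P'2} gives $\mathbf{\hat r_0^*}$ in closed form as the maximum of $n^{(u_0)}$ rational expressions. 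Hence the optimal value of $(\mathcal{P}''^{(u_0,u_1)}_2)$ is recovered in polynomial time as $B^{(u_0,u_1)}_2(\mathcal{F}) - \mathbf{\hat r_0^*}$, which combined with Corollary \ref{corollary:complexity_second_stage} establishes NP-hardness of $(\mathcal{P}^{(u_0,u_1)}_2)$.

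For the general $T$-stage problem, the simplest route is to observe that $(\mathcal{P}_T(\mathcal{F},L_f,L_\rho,x_0,u_0,\ldots,u_{T-1}))$ is parameterized by a horizon $T$ which is part of the input, and instantiating $T=2$ yields verbatim $(\mathcal{P}^{(u_0,u_1)}_2)$, so NP-hardness of the latter entails NP-hardness of the former. If a more explicit embedding is preferred, I would append $T-2$ ``inert'' stages to any given two-stage instance by introducing extra actions each endowed with a single-element transition sample whose image automatically satisfies the downstream Lipschitz ball constraints, producing a $T$-stage instance whose optimum differs from the two-stage optimum by a known additive constant, which yields the same conclusion.

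I do not anticipate any serious obstacle: the combinatorial difficulty has already been captured in Lemma \ref{lemma:MNBC}. The only point that deserves a moment of care is the polynomiality of the reduction, but the closed-form expression of Corollary \ref{corollary:solution_P'2} is a maximum over $n^{(u_0)}$ values, each a rational function of the input data, so it has polynomial bit-size and can be computed in polynomial time, which closes the argument.
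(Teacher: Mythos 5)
Your argument is correct and is essentially the paper's own (the paper gives no explicit proof of this theorem, treating it as an immediate consequence of Corollary \ref{corollary:complexity_second_stage} via the decoupling of Theorem \ref{theorem_decoupling} and the fact that $T=2$ is an instance of the general problem). Your write-up simply makes the implicit Turing reduction explicit; the only cosmetic quibble is that $\mathbf{\hat r_0^*}$ from Corollary \ref{corollary:solution_P'2} involves Euclidean norms and is therefore algebraic rather than rational, a representational detail the paper itself does not address.
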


\section{Relaxation Schemes for the Two-stage Case}
\label{section:relaxation_schemes_2_stage}

The two-stage case with only one element in the set $\mathcal F^{(u_1)}$ was proven to be NP-hard in the previous section (except if the cardinality of  $n^{(u_0)}$ of $\mathcal F^{(u_0)}$ is also equal to $1$,  in this case $(\mathcal P^{(u_0,u_1)}_2 )$ is solvable in polynomial time as we will see later in  Corollary \ref{corollary:trivial_case_n0_n1}). It is therefore unlikely that one can design an algorithm that optimally solves the general two-stage case in polynomial time (unless P = NP). The aim of the $\min \max$ optimization problem is to obtain a sequence of actions that has a performance guarantee. Therefore solving the optimization problem approximately or obtaining an upper bound would be irrelevant. Instead we want to propose some relaxation schemes that are computationally more tractable, and that are still leading to lower bounds on the actual return of the sequences of actions.

The first relaxation scheme works by dropping some constraints in order to obtain a problem that is solvable in polynomial time. We show that this scheme provides bounds that are greater or equal to the CGRL bound introduced in \cite{Fonteneau2011minmax}. The second relaxation scheme  is based on a Lagrangian relaxation where all constraints are dualized. Solving the Lagrangian dual is shown to be a conic-quadratic problem that can be solved in polynomial time using interior-point methods. We also prove that this relaxation scheme always gives better bounds than the first relaxation scheme mentioned above, and consequently, better bounds than \cite{Fonteneau2011minmax}. 
We also prove that the bounds computed from these relaxation schemes converge towards the actual return of the sequence $\left(u_0 , u_1 \right)$ when the sample dispersion converges towards zero. As a consequence, the sequences of actions that maximize those bounds also become optimal when the dispersion decreases towards zero.

From the previous section, we know that the two-stage problem $ ( \mathcal P^{(u_0,u_1)}_2 )$ can be decoupled into two subproblems $ ( \mathcal P'^{(u_0,u_1)}_2 )$ and $ ( \mathcal P''^{(u_0,u_1)}_2 )$, where $ ( \mathcal P'^{(u_0,u_1)}_2 )$ can be solved straightforwardly (cf Theorem \ref{theorem_decoupling}). We therefore only focus on relaxing the subproblem $ ( \mathcal P''^{(u_0,u_1)}_2 )$:
\begin{eqnarray}
&&\left( \mathcal P''^{(u_0,u_1)}_2 \right):  \underset {\begin{matrix} \mathbf{\hat r_{1}} \in \mathbb R\\
 \mathbf{\hat    x_{1}} \in \mathcal X\end{matrix}} {\min}\quad   \mathbf{\hat r_1} \nonumber \\
&&\text{subject to }\quad  \left| \mathbf{\hat r_1} - r^{(u_1),k_1} \right|^2 \leq L_\rho^2 \left\| \mathbf{\hat x_1} - x^{(u_1),k_1}  \right\|^2  \   \forall k_1 \in \left\{  1,\ldots, n^{(u_1)} \right\}  \label{rconstr} \\
&& \qquad \qquad \quad \left\| \mathbf{\hat x_{1}} - y^{(u_0),k_0}  \right\|^2 \leq L_f^2  \left\| x_{0} - x^{(u_0),k_0} \right\|^2 \qquad
  \forall k_0 \in \left\{ 1,\ldots, n^{(u_0)}  \right\} \label{xconstr} 
\end{eqnarray}

\subsection{The Trust-region Subproblem Relaxation Scheme}
\label{subsection:trust_region}

An easy way to obtain a relaxation from an optimization problem is  to drop some constraints. We therefore suggest to drop all constraints \eqref{rconstr} but one, indexed by $k_1$. Similarly we drop all constraints \eqref{xconstr} but one, indexed by $k_0$. The following problem is therefore a relaxation of $( \mathcal P''^{(u_0,u_1)}_2 )$:
\\
\, \fbox{
\begin{centering}
\begin{minipage}{0.95\textwidth}
$\left( \mathcal P''^{(u_0,u_1)}_{TR}(k_0,k_1) \right):$
\begin{eqnarray*}
\underset {\begin{matrix} \mathbf{ \hat r_{1}} \in \mathbb R\\
 \mathbf{\hat    x_{1}} \in \mathcal X\end{matrix}} {\min} \qquad  \mathbf{\hat r_1} \nonumber
\end{eqnarray*}
subject to
\begin{eqnarray}
&& \left| \mathbf{\hat r_1} - r^{(u_1),k_1} \right|^2 \leq L_\rho^2 \left\| \mathbf{\hat x_1} - x^{(u_1),k_1}  \right\|^2  \ ,  \label{rconstrTR} \\
&& \left\| \mathbf{\hat x_{1}} - y^{(u_0),k_0} \right\|^2 \leq L_f^2 \left\| x_{0} - x^{(u_0),k_0} \right\|^2   \ . \qquad   \label{xconstrTR} 
\end{eqnarray}
\end{minipage}
\end{centering}
}\\

We then have the following theorem:
\begin{theorem}
\label{theorem:trust_region}
Let us denote by $B''^{(u_0,u_1),k_0,k_1}_{TR}(\mathcal F)$ the bound given by the resolution of $( \mathcal P''^{(u_0,u_1)}_{TR}(k_0,k_1) )$. We have:
\begin{eqnarray*}
B''^{(u_0,u_1),k_0,k_1}_{TR}(\mathcal F) = r^{(u_1),k_1}-L_{\rho} \left\| \mathbf{\hat x_1^*}(k_0,k_1) -x^{(u_1),k_1} \right\|,\label{trbound_max2}
\end{eqnarray*}
where
\begin{eqnarray}
\mathbf{\hat x_1^*}(k_0,k_1) \doteq y^{(u_0),k_0} +  L_f\frac{ \left\| x_0 - x^{(u_0),k_0} \right\|}{\left\|y^{(u_0),k_0}-x^{(u_1),k_1} \right\|} \left(y^{(u_0),k_0}-x^{(u_1),k_1}\right)  \mbox{ if } y^{(u_0),k_0} \neq x^{(u_1),k_1} \nonumber
\end{eqnarray}
and, if $y^{(u_0),k_0} = x^{(u_1),k_1}$, $\mathbf{\hat x_1^*}(k_0,k_1)$ can be any point of the sphere centered in $y^{(u_0),k_0} = x^{(u_1),k_1}$ with radius $L_f \|  x_0 - x^{(u_0),k_0} \|$.
\end{theorem}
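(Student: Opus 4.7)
The plan is to exploit the separable structure of the two-constraint problem: the variable $\mathbf{\hat r_1}$ appears only in constraint \eqref{rconstrTR} and only through its distance to $r^{(u_1),k_1}$, while the variable $\mathbf{\hat x_1}$ appears in both constraints. I would therefore begin by fixing an arbitrary feasible $\mathbf{\hat x_1}$ and optimizing over $\mathbf{\hat r_1}$: constraint \eqref{rconstrTR} then reduces to the interval $\mathbf{\hat r_1} \in [r^{(u_1),k_1} - L_\rho\|\mathbf{\hat x_1} - x^{(u_1),k_1}\|, \, r^{(u_1),k_1} + L_\rho\|\mathbf{\hat x_1} - x^{(u_1),k_1}\|]$, and the minimum is attained at the lower endpoint. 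This reduces the problem to
\begin{equation*}
\min_{\mathbf{\hat x_1} \in \mathcal X} \; r^{(u_1),k_1} - L_\rho \|\mathbf{\hat x_1} - x^{(u_1),k_1}\| \quad \text{subject to } \eqref{xconstrTR},
\end{equation*}
which is equivalent to maximizing $\|\mathbf{\hat x_1} - x^{(u_1),k_1}\|$ over the closed ball $\mathcal B$ centered at $y^{(u_0),k_0}$ of radius $L_f\|x_0 - x^{(u_0),k_0}\|$.

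The second step is to solve this distance-maximization problem explicitly. The objective $\mathbf{\hat x_1} \mapsto \|\mathbf{\hat x_1} - x^{(u_1),k_1}\|$ is a convex function (composition of a norm and an affine map), and maximizing a convex function over a convex compact set attains its maximum at an extreme point; in the case of a ball, the extreme points are the boundary, so $\mathbf{\hat x_1^*}$ must satisfy $\|\mathbf{\hat x_1^*} - y^{(u_0),k_0}\| = L_f\|x_0 - x^{(u_0),k_0}\|$. If $y^{(u_0),k_0} \neq x^{(u_1),k_1}$, the triangle inequality $\|\mathbf{\hat x_1} - x^{(u_1),k_1}\| \leq \|\mathbf{\hat x_1} - y^{(u_0),k_0}\| + \|y^{(u_0),k_0} - x^{(u_1),k_1}\|$ gives an upper bound, achieved with equality when $\mathbf{\hat x_1}$, $y^{(u_0),k_0}$, and $x^{(u_1),k_1}$ are collinear with $y^{(u_0),k_0}$ strictly between the other two, i.e. when $\mathbf{\hat x_1}$ lies on the ray from $x^{(u_1),k_1}$ through $y^{(u_0),k_0}$ on the far side of $y^{(u_0),k_0}$; intersecting this ray with the sphere $\partial \mathcal B$ yields precisely the closed-form expression
\begin{equation*}
\mathbf{\hat x_1^*}(k_0,k_1) = y^{(u_0),k_0} + L_f\frac{\|x_0 - x^{(u_0),k_0}\|}{\|y^{(u_0),k_0} - x^{(u_1),k_1}\|}\bigl(y^{(u_0),k_0} - x^{(u_1),k_1}\bigr).
\end{equation*}

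In the degenerate case $y^{(u_0),k_0} = x^{(u_1),k_1}$, the objective becomes $\|\mathbf{\hat x_1} - y^{(u_0),k_0}\|$, which on the ball $\mathcal B$ is uniformly equal to the radius $L_f\|x_0 - x^{(u_0),k_0}\|$ on the boundary sphere, so any point of that sphere is a maximizer, matching the second case in the statement. Substituting the optimal $\mathbf{\hat x_1^*}(k_0,k_1)$ back into the reduced objective yields $B''^{(u_0,u_1),k_0,k_1}_{TR}(\mathcal F) = r^{(u_1),k_1} - L_\rho\|\mathbf{\hat x_1^*}(k_0,k_1) - x^{(u_1),k_1}\|$, as claimed.

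I do not anticipate any serious obstacle: the proof is essentially a two-step geometric exercise (projection to the endpoint of an interval, then maximization of a distance over a ball). The only subtlety worth being explicit about is the degenerate collinearity case, which is handled by the triangle inequality becoming an equality on the far antipode of $x^{(u_1),k_1}$, and the separate treatment required when $y^{(u_0),k_0}$ and $x^{(u_1),k_1}$ coincide.
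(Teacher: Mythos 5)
Your proof is correct and follows essentially the same route as the paper: eliminate $\mathbf{\hat r_1}$ by pushing it to the lower endpoint of the interval determined by \eqref{rconstrTR}, then maximize $\|\mathbf{\hat x_1}-x^{(u_1),k_1}\|$ over the ball given by \eqref{xconstrTR}. The only difference is that you spell out the geometric solution (boundary attainment plus the triangle-inequality equality case) where the paper simply invokes the trust-region subproblem and a figure, so your version is if anything slightly more complete.
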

 
\begin{proof}
Observe that it consists in the minimization of $\mathbf{\hat r_1}$ under one interval constraint for $\mathbf{\hat r_1}$ where the size of the interval is determined through the constraint \eqref{xconstrTR}. The problem is therefore equivalent to finding the largest right-hand-side of \eqref{rconstrTR} under constraint \eqref{xconstrTR}. An equivalent problem is therefore
\begin{align*}
 \underset {\mathbf{\hat x_1} \in \mathcal X} {\max} \quad & \left\|\mathbf{\hat x_1} - x^{(u_1),k_1} \right\|^2  \\
\text{subject to } & \left\| \mathbf{\hat x_1} - y^{(u_0),k_0} \right\| \leq L_f \left\| x_0-x^{(u_0),k_0} \right\|.
\end{align*}
This is  the  maximization  of  a quadratic  function  under  a  norm constraint. This problem is referred to in the literature as the \emph{trust-region subproblem} \cite{Conn2000}.
In our case, the  optimal value for $\mathbf{\hat x_1}$  - denoted by $\mathbf{\hat x_1^*}(k_0,k_1)$ - lies on the same line  as $x^{(u_1),k_1}$ and $y^{(u_0),k_0}$,  with $y^{(u_0),k_0}$ lying in  between $x^{(u_1),k_1}$ and $\mathbf{\hat x_1^*}(k_0,k_1)$,  the distance between  $y^{(u_0),k_0}$ and $\mathbf{\hat x_1^*}(k_0,k_1)$  being exactly equal to the distance between $x_0$ and $x^{(u_0),k_0}$. An illustration is given in Figure \ref{figure:specific_case}.
\end{proof}
\begin{figure}[h!]
\begin{center}
\includegraphics[scale=0.35]{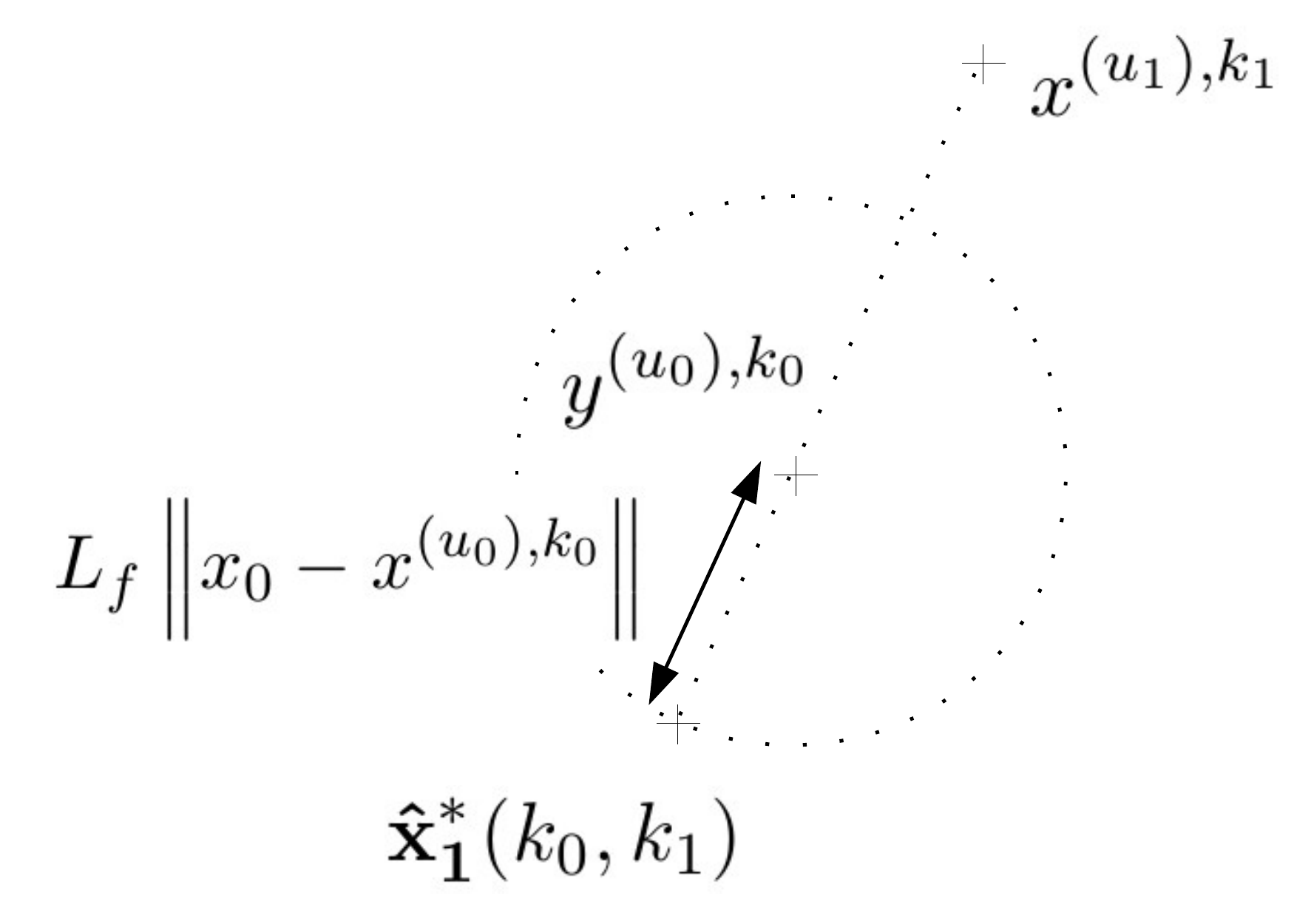}
\caption{A simple geometric algorithm to solve $(\mathcal P''_{TR}(k_0,k_1))$.  \label{figure:specific_case}}
\end{center}
\end{figure}

Solving $( \mathcal P''^{(u_0,u_1)}_{TR}(k_0,k_1) )$ provides us with a family of relaxations for our initial problem by considering any combination $(k_0 , k_1)$ of two non-relaxed constraints. 
Taking the maximum out of these lower bounds yields the best possible bound out of this family of relaxations. Finally, if we denote by $B^{(u_0,u_1)}_{TR}(\mathcal F)$ the bound made of the sum of the solution of $( \mathcal P'^{(u_0,u_1)}_2 )$ and the maximal Trust-region relaxation of the problem $( \mathcal P''^{(u_0,u_1)}_2 )$ over all possible couples of constraints, we have:
\begin{definition}[Trust-region Bound $B^{(u_0,u_1)}_{TR}\left( \mathcal F \right)$] 
\label{definition_TR}
\begin{eqnarray*}
\forall (u_0,u_1) \in \mathcal U^2, \quad B^{(u_0,u_1)}_{TR}(\mathcal F) \triangleq \mathbf{\hat r^*_0} +\underset {\begin{array}{l} k_1 \in \{1,\ldots, n^{(u_1)} \} \\k_0 \in \{1,\ldots,n^{(u_0)} \}\end{array}} {\max}  B''^{(u_0,u_1),k_0,k_1}_{TR}(\mathcal F) . 
\end{eqnarray*}
\end{definition}

Notice that in the case where $n^{(u_0)}$ and $n^{(u_1)}$ are both equal to $1$, then the trust-region relaxation scheme provides an exact solution of the original optimization problem $( \mathcal P^{(u_0,u_1)}_2 )$:
\begin{corollary}
\label{corollary:trivial_case_n0_n1}
\begin{eqnarray*}
\forall (u_0,u_1) \in \mathcal U^2, \qquad \left( \left\{ \begin{matrix} n^{(u_0)} =1 \\ n^{(u_1)} = 1 \end{matrix} \right. \right) \implies B^{(u_0,u_1)}_{TR}(\mathcal F)  = B^{(u_0,u_1)}_{2}(\mathcal F)  .
\end{eqnarray*}
\end{corollary}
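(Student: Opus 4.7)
The plan is to observe that under the hypothesis $n^{(u_0)}=n^{(u_1)}=1$, the trust-region relaxation $(\mathcal P''^{(u_0,u_1)}_{TR}(k_0,k_1))$ is not actually a relaxation at all: it coincides with the full problem $(\mathcal P''^{(u_0,u_1)}_2)$. Indeed, the relaxation is constructed by retaining a single constraint indexed by $k_0$ from the family \eqref{xconstr} and a single constraint indexed by $k_1$ from the family \eqref{rconstr}; when each family has cardinality one, the only available choice is $(k_0,k_1)=(1,1)$ and nothing is dropped. Consequently, the feasible sets and objective functions of $(\mathcal P''^{(u_0,u_1)}_{TR}(1,1))$ and $(\mathcal P''^{(u_0,u_1)}_2)$ are identical, so their optimal values coincide; that is, $B''^{(u_0,u_1),1,1}_{TR}(\mathcal F)=\mathbf{\hat r_1^*}$, where $\mathbf{\hat r_1^*}$ denotes the optimal value of $(\mathcal P''^{(u_0,u_1)}_2)$.

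Next, I would substitute this identity into the definition of $B^{(u_0,u_1)}_{TR}(\mathcal F)$ in Definition~\ref{definition_TR}. The outer maximum there is taken over $(k_0,k_1)\in\{1,\dots,n^{(u_0)}\}\times\{1,\dots,n^{(u_1)}\}$; with $n^{(u_0)}=n^{(u_1)}=1$ the set of candidates is a singleton, so the maximum is trivially equal to $B''^{(u_0,u_1),1,1}_{TR}(\mathcal F)=\mathbf{\hat r_1^*}$. It then follows that
\begin{equation*}
B^{(u_0,u_1)}_{TR}(\mathcal F)=\mathbf{\hat r_0^*}+\mathbf{\hat r_1^*},
\end{equation*}
where $\mathbf{\hat r_0^*}$ is the optimal value of $(\mathcal P'^{(u_0,u_1)}_2)$.

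Finally, I would invoke Theorem~\ref{theorem_decoupling}, which asserts that the two-stage problem $(\mathcal P^{(u_0,u_1)}_2)$ decouples into $(\mathcal P'^{(u_0,u_1)}_2)$ and $(\mathcal P''^{(u_0,u_1)}_2)$ and that concatenating their optimal solutions yields an optimal solution to the coupled problem. Hence $B^{(u_0,u_1)}_2(\mathcal F)=\mathbf{\hat r_0^*}+\mathbf{\hat r_1^*}$, which matches the expression derived above and gives the desired equality $B^{(u_0,u_1)}_{TR}(\mathcal F)=B^{(u_0,u_1)}_2(\mathcal F)$. There is no genuine obstacle here: the statement is essentially a bookkeeping consequence of the definitions combined with Theorem~\ref{theorem_decoupling}, the only point worth emphasising being that dropping ``all but one'' constraint is vacuous when only one constraint exists to begin with.
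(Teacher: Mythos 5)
Your argument is correct and is exactly the (implicit) justification behind the paper's remark: when $n^{(u_0)}=n^{(u_1)}=1$ the trust-region ``relaxation'' drops no constraints, so $(\mathcal P''^{(u_0,u_1)}_{TR}(1,1))$ coincides with $(\mathcal P''^{(u_0,u_1)}_2)$, and Theorem~\ref{theorem_decoupling} then identifies $\mathbf{\hat r_0^*}+\mathbf{\hat r_1^*}$ with $B^{(u_0,u_1)}_{2}(\mathcal F)$. The paper states this corollary without a separate proof, and your bookkeeping derivation supplies precisely the intended reasoning.
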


\subsection{The Lagrangian Relaxation}
\label{subsection:lagrangian_relaxation}

Another way to obtain a lower bound on the value of a minimization problem is to consider a Lagrangian relaxation. In this section, we show that the Lagrangian relaxation of the second stage problem is a conic quadratic optimization program. Consider again the optimization problem $(\mathcal P''^{(u_0,u_1)}_2)$. If we multiply the constraints \eqref{rconstr} by dual variables $\mu_1 , \ldots ,  \mu_{k_1} , \ldots , \mu_{n^{(u_1)}} \geq 0$ and the constraints \eqref{xconstr} by dual variables $\lambda_1 , \ldots,  \lambda_{k_0} , \ldots, \lambda_{n^{(u_0)}} \geq 0$, we obtain the  Lagrangian dual:\\
\\
\fbox{
\begin{centering}
\begin{minipage}{0.95\textwidth}
$\left( \mathcal P''^{(u_0,u_1)}_{LD} \right) :$
\begin{align}
 \underset{ \begin{matrix} \lambda_1, \ldots , \lambda_{n^{(u_0)}} \in \mathbb R_+ \\ \mu_1 , \ldots, \mu_{n^{(u_1)}} \in \mathbb R_+ \end{matrix}}{\max} \quad &
\underset{ \begin{matrix} \mathbf{\hat r_1} \in \mathbb R \\ \mathbf{\hat x_1} \in \mathcal X \end{matrix}}{\min}\;
\qquad \mathbf{\hat r_1} \notag \\
&+\sum_{k_1=1}^{n^{(u_1)}} \mu_{k_1} \left(\left(\mathbf{\hat r_1} - r^{(u_1),k_1} \right)^2-L_{\rho}^2\left\| \mathbf{\hat x_1} - x^{(u_1),k_1}\right\|^2\right)\notag\\
&+\sum_{k_0=1}^{n^{(u_0)}} \lambda_{k_0} \left(\left \| \mathbf{\hat x_1} - y^{(u_0),k_0} \right\|^2-L_f^2\left\| x_0-x^{(u_0),k_0} \right\|^2\right)  \ . \label{lagdual}
\end{align}
\end{minipage}
\end{centering}
}\\

Observe that the optimal value of $(\mathcal P''^{(u_0,u_1)}_{LD})$ is known to provide a lower bound on the optimal value of $(  \mathcal P''^{(u_0,u_1)}_2 )$ \cite{LeMarechal1996}.
\begin{theorem} \label{theorem:conic_quadratic}
$\left(\mathcal P''^{(u_0,u_1)}_{LD} \right)$ is a conic quadratic program.
\end{theorem}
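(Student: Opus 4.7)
The plan is to perform the inner unconstrained minimization over $\hat r_1$ and $\hat x_1$ in closed form, observe that the resulting Lagrangian dual becomes a maximization of a concave objective involving linear terms and hyperbolic terms of the form $\|v\|^2 / s$ in the dual variables, and then rewrite those hyperbolic terms as rotated second-order cone constraints by introducing auxiliary variables. Writing $M \triangleq \sum_{k_1} \mu_{k_1}$, $\Lambda \triangleq \sum_{k_0} \lambda_{k_0} - L_\rho^2 M$, $b_r \triangleq \sum_{k_1} \mu_{k_1} r^{(u_1),k_1}$, and $b_x \triangleq \sum_{k_0} \lambda_{k_0} y^{(u_0),k_0} - L_\rho^2 \sum_{k_1} \mu_{k_1} x^{(u_1),k_1}$, the inner Lagrangian in \eqref{lagdual} is a separable quadratic: $M\,\hat r_1^2 + (1 - 2 b_r)\,\hat r_1$ in the reward variable plus $\Lambda\,\|\hat x_1\|^2 - 2\,\hat x_1^\top b_x$ in the state variable, plus terms not depending on $(\hat r_1, \hat x_1)$ that are linear in $(\mu, \lambda)$.

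First I would dispose of the boundedness requirement: the inner minimum equals $-\infty$ unless $M > 0$ and $\Lambda \geq 0$ (with $b_x = 0$ in the knife-edge case $\Lambda = 0$), so the dual admits the implicit linear constraints $M > 0$ and $\Lambda \geq 0$. Under these, analytic minimization gives
\begin{equation*}
\min_{\hat r_1, \hat x_1} (\text{inner}) = -\frac{(1 - 2 b_r)^2}{4 M} - \frac{\|b_x\|^2}{\Lambda} + \ell(\mu,\lambda),
\end{equation*}
where $\ell(\mu,\lambda)$ collects every term that is affine in the dual multipliers, namely $\sum_{k_1}\mu_{k_1}(r^{(u_1),k_1})^2 - L_\rho^2\sum_{k_1}\mu_{k_1}\|x^{(u_1),k_1}\|^2 + \sum_{k_0}\lambda_{k_0}\|y^{(u_0),k_0}\|^2 - L_f^2\sum_{k_0}\lambda_{k_0}\|x_0-x^{(u_0),k_0}\|^2$ and an irrelevant normalization.

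Next I would epigraph-lift the two hyperbolic terms by introducing scalar variables $\tau, \sigma \geq 0$ and reformulate $(\mathcal P''^{(u_0,u_1)}_{LD})$ as the equivalent program of maximizing $-\tau - \sigma + \ell(\mu,\lambda)$ over $\mu, \lambda \geq 0$, $M > 0$, $\Lambda \geq 0$, subject to $4 M \tau \geq (1 - 2 b_r)^2$ and $\Lambda\, \sigma \geq \|b_x\|^2$. Each of these two inequalities is a rotated second-order cone constraint of the standard form $2 s t \geq \|v\|^2$, $s,t \geq 0$: the first with $s = 2M$, $t = \tau$, $v = 1 - 2 b_r$, and the second with $s = \Lambda$, $t = \sigma$, $v = b_x$. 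Since $M$, $\Lambda$, $b_r$, and $b_x$ are affine in $(\mu,\lambda)$, and rotated second-order cones are linearly isomorphic to Lorentz cones, both constraints are genuine conic quadratic constraints on the lifted variable $(\mu, \lambda, \tau, \sigma)$. The remaining nonnegativity conditions are linear, and the objective $-\tau - \sigma + \ell(\mu,\lambda)$ is linear in the lifted variables, proving that $(\mathcal P''^{(u_0,u_1)}_{LD})$ is a conic quadratic program.

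I anticipate that the main obstacle is clean handling of the degenerate boundary cases $M = 0$ and $\Lambda = 0$: strictly speaking the derivations of the closed-form minimizer of the inner problem assume strict positivity, but the rotated SOC encoding $2 s t \geq \|v\|^2$ with $s, t \geq 0$ already includes these boundaries with the correct convention $\|v\|^2/s = +\infty$ when $s = 0$ and $v \ne 0$, so the reformulation remains valid after taking closure. The only other care needed is to verify that strong duality is not required for the statement; it is not, since we only claim that the Lagrangian dual itself is a conic quadratic program, not that it equals $(\mathcal P''^{(u_0,u_1)}_2)$.
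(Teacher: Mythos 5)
Your proposal follows essentially the same route as the paper: eliminate $(\mathbf{\hat r_1},\mathbf{\hat x_1})$ in closed form, impose $M>0$ and $L-L_\rho^2 M>0$ for boundedness of the inner minimum, and recognize the resulting objective (linear terms plus fractional-quadratic terms) as conic-quadratic representable; the paper stops at citing the rotated-quadratic-cone representability result of Ben-Tal and Nemirovski, whereas you carry out the epigraph lifting explicitly, which is a welcome addition. The only slip is a factor of two in your second rotated-cone identification (taking $s=\Lambda$, $t=\sigma$, $v=b_x$ encodes $2\Lambda\sigma\geq\|b_x\|^2$ rather than the required $\Lambda\sigma\geq\|b_x\|^2$; take $v=\sqrt{2}\,b_x$ instead), which is immaterial to the argument.
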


\begin{proof}
In \eqref{lagdual}, we can decompose the squared norms and obtain
\begin{align}
&\left( \mathcal P''^{(u_0,u_1)}_{LD} \right): \notag\\
&\underset{ \begin{matrix} \lambda_1 , \ldots , \lambda_{n^{(u_0)}} \in \mathbb R_+ \\ \mu_1 , \ldots, \mu_{n^{(u_1)}} \in \mathbb R_+ \end{matrix}}{\max} \quad \underset{\begin{matrix} \mathbf{\hat r_1} \in \mathbb R \\ \mathbf{\hat x_1} \in \mathcal X \end{matrix}}{\min}\;
\mathbf{\hat r_1}^2\left(\sum_{k_1=1}^{n^{(u_1)}} \mu_{k_1} \right) + \| \mathbf{\hat x_1}\|^2
\left(-L_{\rho}^2\sum_{k_1=1}^{n^{(u_1)}} \mu_{k_1} + \sum_{k_0=1}^{n^{(u_0)}} \lambda_{k_0} \right)\label{quadpart}\\
&+\mathbf{\hat r_1}\left(1-2\sum_{k_1=1}^{n^{(u_1)}} r^{(u_1),k_1}\right) + \sum_{k_1=1}^{n^{(u_1)}}2 L_{\rho}^2 \mu_{k_1} \left\langle \mathbf{\hat x_1},x^{(u_1),k_1} \right\rangle -\sum_{k_0=1}^{n^{(u_0)}}2 \lambda_{k_0} \left\langle \mathbf{\hat x_1} , y^{(u_0),k_0} \right\rangle  \label{linpart}\\
+  \sum_{k_1=1}^{n^{(u_1)}} & \mu_{k_1} \left( \left(r^{(u_1),k_1} \right)^2- L_{\rho}^2 \left\|x^{(u_1),k_1} \right\|^2 \right) \notag \\
&+\sum_{k_0=1}^{n^{(u_0)}} \lambda_{k_0} \left( \left\|y^{(u_0),k_0} \right\|^2 -L_f^2 \left\|x^{(u_0),k_0} - x_0 \right\|^2 \right), \label{constpart}
\end{align}
where $\langle a,b\rangle$ denotes the inner product of $a$ and $b$. We observe that the minimization problem in $\mathbf{\hat r_1}$ and $\mathbf{\hat x_1}$ contains a quadratic part \eqref{quadpart}, a linear part \eqref{linpart} and a constant part \eqref{constpart}
once we fix $\lambda_{k_0}$ and $\mu_{k_1}$. In particular, observe that the optimal solution of the minimization problem is $-\infty$ as soon as the quadratic term is negative, i.e. if :
\begin{eqnarray}
 \sum_{k_1=1}^{n^{(u_1)}} \mu_{k_1}  \leq 0 \label{relaxation_condition_1}
\end{eqnarray}
or
\begin{eqnarray}
\left(- L_{\rho}^2\sum_{k_1=1}^{n^{(u_1)}} \mu_{k_1} + \sum_{k_0=1}^{n^{(u_0)}} \lambda_{k_0} \right) \leq 0. \label{relaxation_condition_2}
\end{eqnarray}
Since we want to find the maximum of this series of optimization problems, we are only interested in the problems for which the solution is finite.
Observe that, since $\mu_{k_1} \geq 0$ for all $k_1$, the inequality (\ref{relaxation_condition_1}) is never satisfied, unless if $\mu_{k_1} = 0$ for all $k_1$. Therefore in the following, we will constraint $\lambda_{k_0}$ and $\mu_{k_1}$ to be such that inequalities (\ref{relaxation_condition_1}) and (\ref{relaxation_condition_2}) are never satisfied, i.e.:
\begin{eqnarray*}
 \sum_{k_1=1}^{n^{(u_1)}} \mu_{k_1}  &>& 0 \\  
-L_{\rho}^2\sum _{k_1=1}^{n^{(u_1)}} \mu_{k_1} + \sum_{k_0=1}^{n^{(u_0)}} \lambda_{k_0} &>& 0. \nonumber
\end{eqnarray*}
Once that  constraint is enforced, we observe that the minimization program is the minimization of a convex quadratic function for which the optimum can be found as a closed form formula. In order to simplify the rest of the proof, we introduce some useful notations:
\begin{definition}[Additional Notations]
\begin{eqnarray}
&&M \triangleq \sum_{k_1=1}^{n^{(u_1)}} \mu_{k_1} \quad , \quad  L \triangleq\sum_{k_0=1}^{n^{(u_0)}} \lambda_{k_0}  \ ,  \nonumber \\
&&X \triangleq\left(x^{(u_1),1} \cdots x^{(u_1),n^{(u_1)}}\right) \quad , \quad Y \triangleq \left(y^{(u_0),1}\cdots y^{(u_0),n^{(u_0)}}\right) \ ,  \nonumber \\
&&\llambda \triangleq \left(\begin{matrix} \lambda_1& \ldots & \lambda_{n^{(u_0)}} \end{matrix}\right)^T \quad , \quad  \mmu \triangleq \left(\begin{matrix} \mu_1 & \ldots & \mu_{n^{(u_1)}} \end{matrix}\right)^T \quad , \quad \bar{r} \triangleq \left(\begin{matrix} r^{(1)} & \ldots & r^{(n^{(u_1)})}\end{matrix}\right)^T \ ,     \nonumber \\
&&\forall p \in \mathbb N_0 ,  I_p\mbox{ is an identity matrix of size }p  .   \nonumber
\end{eqnarray} 
\end{definition}
The quadratic form coming from \eqref{quadpart}, \eqref{linpart} and \eqref{constpart} can be written in the form
\begin{eqnarray*}
 z^TQz+l^Tz+c
\end{eqnarray*}
with
\begin{equation*}
z \triangleq \left(\begin{array}{c}  \mathbf{\hat x_1} \\ \mathbf{\hat r_1} \end{array} \right)\in \R^{d+1} ,\quad \quad Q \triangleq \left(
\begin{array}{cc}
\left(-ML_{\rho}^2+L \right)I_d &\\&M 
\end{array}
\right),
\quad\quad
l \triangleq \left(
\begin{array}{c}
2L_{\rho}^2X\mmu-2Y\llambda\\
1-2\bar{r}^T\mmu
\end{array}
\right)
\end{equation*}
and the constant term is given by \eqref{constpart}. The minimum of a convex quadratic form $z^TQz+l^Tz+c$ is known to take the value $-\frac{1}{4} l^TQ^{-1}l+c.$ In our case, the inverse of the matrix $Q$ is trivial to compute and we obtain finally that $(\mathcal P''^{(u_0,u_1)}_{LD})$ can be written as
\begin{align}
\left( \mathcal P''^{(u_0,u_1)}_{LD} \right) :& \underset{\llambda\in \R^{n^{(u_0)}}_+, \mmu \in \R^{n^{(u_1)}}_+}{\max} \;
\frac{-\|L_{\rho}^2X\mmu-Y\llambda\|^2}{-ML_{\rho}^2+L}-\frac{\left(1-2\bar r^T\mmu\right)^2}{4M} \label{socp}\\
&+ \sum_{k_0=1}^{n^{(u_0)}} \lambda_{k_0} \left(\left\|y^{(u_0),k_0}\right\|^2 -L_f^2 \left\|x^{(u_0),k_0} - x_0 \right\|^2 \right) \notag \\
&+\sum_{k_1=1}^{n^{(u_1)}} \mu_{k_1} \left(\left(r^{(u_1),k_1}\right)^2- L_{\rho}^2 \left\|x^{(u_1),k_1}\right\|^2 \right) \notag
\end{align}
\begin{eqnarray*}
\mbox{ subject to } \qquad \qquad M &>& 0 \\
L &>& ML_{\rho}^2
\end{eqnarray*}

The optimization problem \eqref{socp} is in variables $\lambda_1, \ldots ,\lambda_{n^{(u_0)}}$ and $\mu_1, \ldots, \mu_{n^{(u_1)}}$. Observe that, with our notation, $M$ and $L$ are linear functions of the variables. The objective function contains linear terms in  $\lambda_1, \ldots ,\lambda_{n^{(u_0)}}$ and $\mu_1, \ldots, \mu_{n^{(u_1)}}$ as well as a \emph{fractional-quadratic} function (\cite{Ben2001}), i.e. the quotient of a concave quadratic function with a linear function. The constraint is linear. This type of problem is known as a \emph{rotated quadratic conic problem} and can be formulated as a conic quadratic optimization problem (\cite{Ben2001}) that can be solved in polynomial time using interior point methods \cite{Ben2001, Nesterov1994,Boyd2004}.
\end{proof}

 From there, we have the following corollary:

\begin{corollary}
$\forall (u_0,u_1) \in \mathcal U^2 $,
\begin{eqnarray}
B''^{(u_0,u_1)}_{LD}(\mathcal F) &\triangleq&  \underset{\llambda\in \R^{n^{(u_0)}}_+, \mmu \in \R^{n^{(u_1)}}_+}{\max} \;
\frac{-\|L_{\rho}^2X\mmu-Y\llambda\|^2}{-ML_{\rho}^2+L}-\frac{\left(1-2\bar r^T\mmu\right)^2}{4M}  \\
&& \quad \quad+ \sum_{k_0=1}^{n^{(u_0)}} \lambda_{k_0} \left(\left\|y^{(u_0),k_0}\right\|^2 -L_f^2 \left\|x^{(u_0),k_0} - x_0 \right\|^2 \right)  \\
&& \quad \quad +\sum_{k_1=1}^{n^{(u_1)}} \mu_{k_1} \left(\left(r^{(u_1),k_1}\right)^2- L_{\rho}^2 \left\|x^{(u_1),k_1}\right\|^2 \right) \notag
\end{eqnarray}
\begin{eqnarray*}
\mbox{ subject to } \qquad \qquad M &>& 0 \\
L &>& ML_{\rho}^2
\end{eqnarray*}
\end{corollary}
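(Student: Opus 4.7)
The plan is to observe that this corollary is essentially a direct read-off of the computation carried out inside the proof of Theorem \ref{theorem:conic_quadratic}. By definition, $B''^{(u_0,u_1)}_{LD}(\mathcal F)$ is the optimal value of the Lagrangian dual $(\mathcal P''^{(u_0,u_1)}_{LD})$, so what I really need to verify is that the closed-form expression displayed in the corollary coincides with this optimal value on the feasible region $\{M > 0,\; L > M L_\rho^2\}$.

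First, I would recall the dichotomy established in the proof of Theorem \ref{theorem:conic_quadratic}: the inner minimization in $(\mathbf{\hat r_1}, \mathbf{\hat x_1})$ of the quadratic form $z^T Q z + l^T z + c$ returns $-\infty$ whenever the block-diagonal matrix $Q$ fails to be positive definite, i.e.\ whenever $M \le 0$ or $L \le M L_\rho^2$. Since these branches contribute nothing to the outer maximum, the supremum of $(\mathcal P''^{(u_0,u_1)}_{LD})$ over $(\llambda,\mmu)\in\R^{n^{(u_0)}}_+\times\R^{n^{(u_1)}}_+$ may be restricted, without loss of generality, to the subregion where $M > 0$ and $L > M L_\rho^2$. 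This explains the two strict inequalities appearing as constraints in the statement.

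Second, on this restricted region $Q$ is strictly positive definite, so the inner minimum of a convex quadratic is given by the standard formula $c - \tfrac{1}{4} l^T Q^{-1} l$. Here $Q^{-1}$ is itself block diagonal, with blocks $\tfrac{1}{-M L_\rho^2 + L} I_d$ and $\tfrac{1}{M}$. Plugging in the $l$ displayed in the proof of Theorem \ref{theorem:conic_quadratic} yields
\begin{equation*}
l^T Q^{-1} l \;=\; \frac{4\,\|L_\rho^2 X\mmu - Y\llambda\|^2}{-M L_\rho^2 + L} \;+\; \frac{(1 - 2\bar r^T \mmu)^2}{M},
\end{equation*}
so that $-\tfrac{1}{4} l^T Q^{-1} l$ reproduces exactly the two fractional-quadratic summands in the statement. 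The remaining linear terms in $\lambda_{k_0}$ and $\mu_{k_1}$ come directly from the constant part $c$ collected in \eqref{constpart}.

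There is no real obstacle here, since all of the work has already been done in the theorem; the corollary is merely the explicit optimization-problem form of the dual bound. The only point deserving a small amount of care is sign bookkeeping in the algebraic simplification of $-\tfrac{1}{4} l^T Q^{-1} l$, in particular verifying that the positivity of the denominator $-M L_\rho^2 + L$ on the feasible region makes the first summand non-positive, as it must be for a Lagrangian-dual lower bound built from a convex-quadratic inner minimization.
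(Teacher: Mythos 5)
Your proposal is correct and follows exactly the route the paper intends: the corollary is simply the explicit form of \eqref{socp} obtained at the end of the proof of Theorem \ref{theorem:conic_quadratic}, and your sign bookkeeping for $-\tfrac{1}{4}l^TQ^{-1}l$ with the block-diagonal $Q^{-1}$ matches the paper's computation. The paper offers no separate argument beyond ``from there,'' so there is nothing further to reconcile.
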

In the following, we denote by $B^{(u_0,u_1)}_{LD}(\mathcal F)$ the lower bound made of the sum of the solution of $( \mathcal P'^{(u_0,u_1)}_2 ) $ and the relaxation of $( \mathcal P''^{(u_0,u_1)}_2 )$ computed from the Lagrangian relaxation:
\begin{definition}[Lagrangian Relaxation Bound $B^{(u_0,u_1)}_{LD}(\mathcal F)$]
\begin{eqnarray}
\forall (u_0,u_1) \in \mathcal U^2, \qquad B^{(u_0,u_1)}_{LD}(\mathcal F) \triangleq \mathbf{\hat r^*_0} +  B''^{(u_0,u_1)}_{LD}(\mathcal F)
\end{eqnarray}
\end{definition}

\subsection{Comparing the Bounds}
\label{subsection:comparing_the_bounds}

The CGRL algorithm proposed in \cite{Fonteneau2010ICAART,Fonteneau2011minmax} for addressing the $\min \max$ problem uses the procedure described in \cite{Fonteneau2009ADPRL} for computing a lower bound on the return of a policy given a sample of trajectories.  More specifically, for a given sequence $\left(u_0, u_1 \right) \in \mathcal U^2 $, the program $\left( \mathcal P_T(\mathcal F,L_f,L_\rho,x_0,u_0,\ldots,u_{T-1}) \right)$ is replaced  by a lower bound $B^{(u_0,u_1)}_{CGRL}(\mathcal F)$. We may now wonder how this bound compares in the two-stage case with the two new bounds of $\left( \mathcal P^{(u_0,u_1)}_2 \right)  $ that we have proposed: the trust-region bound and the Lagrangian relaxation bound.

\subsubsection{Trust-region Versus CGRL}
\label{subsubsection:TRversusCGRL}
We first recall the definition of the CGRL bound in the two-stage case.
\begin{definition}[CGRL Bound $B^{(u_0,u_1)}_{CGRL}(\mathcal F)$]
\label{definition:CGRL}
$\forall (u_0, u_1) \in \mathcal U^2$,
\begin{eqnarray*}
B^{(u_0,u_1)}_{CGRL}(\mathcal F) \triangleq 
 \underset{\begin{array}{l} k_1 \in \{1,\ldots, n^{(u_1)} \}\\ k_0 \in \{1,\ldots,n^{(u_0)}\}\end{array}}{\max }
&&r^{(u_0),k_0}-L_{\rho}(1+L_f) \left\|x^{(u_0),k_0}- x_0 \right\| \\  
&&+ r^{(u_1),k_1}-L_{\rho} \left\|y^{(u_0),k_0}-x^{(u_1),k_1} \right\|.  \nonumber \\
\label{cgrlbound}
\end{eqnarray*}
\end{definition}
The following theorem shows that the Trust-region bound is always greater than or equal to the CGRL bound.
\begin{theorem}
\label{theorem:TRversusCGRL}
\begin{eqnarray*}
\forall ( u_0,u_1) \in \mathcal U^2 , \qquad B^{(u_0,u_1)}_{CGRL}(\mathcal F) \leq B^{(u_0,u_1)}_{TR}(\mathcal F)  \ .
\end{eqnarray*}
\end{theorem}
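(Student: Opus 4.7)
The plan is to reduce both bounds to explicit closed-form expressions and then compare them term by term. First, by Corollary~\ref{corollary:solution_P'2}, we have
\[
\mathbf{\hat r_0^*}\;=\;\max_{k_0\in\{1,\ldots,n^{(u_0)}\}}\bigl(r^{(u_0),k_0}-L_\rho\|x_0-x^{(u_0),k_0}\|\bigr),
\]
so by Definition~\ref{definition_TR},
\[
B^{(u_0,u_1)}_{TR}(\mathcal F)\;=\;\mathbf{\hat r_0^*}\;+\;\max_{k_0,k_1}B''^{(u_0,u_1),k_0,k_1}_{TR}(\mathcal F).
\]

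Next I would simplify the second piece using Theorem~\ref{theorem:trust_region}. The geometric description of $\mathbf{\hat x_1^*}(k_0,k_1)$ — the point obtained by extending the segment from $x^{(u_1),k_1}$ through $y^{(u_0),k_0}$ by an additional length $L_f\|x_0-x^{(u_0),k_0}\|$ — gives directly
\[
\|\mathbf{\hat x_1^*}(k_0,k_1)-x^{(u_1),k_1}\|\;=\;\|y^{(u_0),k_0}-x^{(u_1),k_1}\|\,+\,L_f\|x_0-x^{(u_0),k_0}\|,
\]
a one-line computation from the definition (the case $y^{(u_0),k_0}=x^{(u_1),k_1}$ is also immediate since any point on the sphere works and the first term vanishes). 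Substituting this back yields
\begin{align*}
B^{(u_0,u_1)}_{TR}(\mathcal F)\;=\;&\max_{k_0}\bigl(r^{(u_0),k_0}-L_\rho\|x_0-x^{(u_0),k_0}\|\bigr)\\
&+\,\max_{k_0,k_1}\bigl(r^{(u_1),k_1}-L_\rho\|y^{(u_0),k_0}-x^{(u_1),k_1}\|-L_\rho L_f\|x_0-x^{(u_0),k_0}\|\bigr).
\end{align*}

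With this form in hand the comparison is now structural. Definition~\ref{definition:CGRL} can be rewritten, by grouping the $L_\rho$ and $L_\rho L_f$ terms, as
\begin{align*}
B^{(u_0,u_1)}_{CGRL}(\mathcal F)\;=\;&\max_{k_0,k_1}\Bigl[\bigl(r^{(u_0),k_0}-L_\rho\|x_0-x^{(u_0),k_0}\|\bigr)\\
&\qquad\;+\,\bigl(r^{(u_1),k_1}-L_\rho\|y^{(u_0),k_0}-x^{(u_1),k_1}\|-L_\rho L_f\|x_0-x^{(u_0),k_0}\|\bigr)\Bigr].
\end{align*}
Thus the CGRL bound is exactly the value obtained when the two outer maxima in the expression for $B^{(u_0,u_1)}_{TR}$ are forced to use a common index $k_0$. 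The final step is the standard observation that decoupling maxima can only increase the value: letting $(k_0^\star,k_1^\star)$ achieve the CGRL maximum, the pair $(k_0^\star,k_0^\star,k_1^\star)$ (using $k_0^\star$ in both of the TR maxima) is feasible and produces exactly $B^{(u_0,u_1)}_{CGRL}(\mathcal F)$, hence $B^{(u_0,u_1)}_{TR}(\mathcal F)\geq B^{(u_0,u_1)}_{CGRL}(\mathcal F)$.

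There is no substantive obstacle here; the only step requiring any calculation is the closed form for $\|\mathbf{\hat x_1^*}(k_0,k_1)-x^{(u_1),k_1}\|$, and that follows by collinearity of $x^{(u_1),k_1}$, $y^{(u_0),k_0}$, $\mathbf{\hat x_1^*}(k_0,k_1)$ established in Theorem~\ref{theorem:trust_region}. The whole argument is essentially the remark that CGRL corresponds to a constrained (coupled-index) version of the same maximization that TR performs freely.
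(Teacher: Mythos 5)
Your proposal is correct and follows essentially the same route as the paper's proof: both arguments rest on the collinearity identity $\|\mathbf{\hat x_1^*}(k_0,k_1)-x^{(u_1),k_1}\|=\|y^{(u_0),k_0}-x^{(u_1),k_1}\|+L_f\|x_0-x^{(u_0),k_0}\|$ and on the observation that the CGRL bound is the trust-region bound with the first-stage index $k_0$ forced to coincide across the two stages, so that decoupling the maxima can only increase the value. Your closed-form presentation is a slightly more systematic packaging of the same computation the paper carries out at the optimal CGRL index pair.
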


\begin{proof}
Let $k^*_0 \in \left\{ 1 , \ldots, n^{(u_0)} \right\}$ and $k^*_1 \in \left\{ 1 , \ldots, n^{(u_1)} \right\}$ be such that 
\begin{eqnarray}
B^{(u_0,u_1)}_{CGRL}(\mathcal F) =  r^{(u_0),k^*_0}-L_{\rho}(1+L_f) \left\|x^{(u_0),k^*_0}- x_0 \right\| 
+ r^{(u_1),k^*_1}-L_{\rho} \left\|y^{(u_0),k^*_0}-x^{(u_1),k^*_1} \right\|.  \nonumber 
\end{eqnarray}
Now, let us consider the solution $B''^{(u_0,u_1),k^*_0,k^*_1}_{TR}(\mathcal F)$ of the problem $( \mathcal P''^{(u_0,u_1)}_{TR}(k^*_0,k^*_1) )$, and let us denote by $B^{(u_0,u_1),k^*_0, k^*_1}$ the bound obtained if, in the definition of the value of $\hat r^*_0$ given in Corollary \ref{corollary:solution_P'2}, we fix the value of $k'_0$ to $k^*_0$ instead of maximizing over all possible $k'_0$:
\begin{eqnarray*}
B^{(u_0,u_1),k^*_0, k^*_1} = r^{(u_0),k^*_0} - L_\rho \left\| x_0 - x^{(u_0),k^*_0}  \right\| + B''^{(u_0,u_1),k^*_0,k^*_1}_{TR}(\mathcal F) 
\end{eqnarray*}
Since $ r^{(u_0),k^*_0} - L_\rho \left\| x_0 - x^{(u_0),k^*_0}  \right\| $ is smaller or equal to the solution $\mathbf{\hat r^*_0}$ of $( \mathcal P'^{(u_0,u_1)}_2 )$, one has:
\begin{eqnarray}
B_{TR}^{(u_0,u_1),k^*_0, k^*_1}(\mathcal F) \geq B^{(u_0,u_1),k^*_0, k^*_1} \ . \label{BTRgeqCGRL1}
\end{eqnarray}
Now, observe that:
\begin{eqnarray}
B^{(u_0,u_1),k^*_0, k^*_1} - B_{CGRL}^{(u_0,u_1)}(\mathcal F)&=&L_{\rho}L_f \left\|x^{(u_0),k^*_0} - x_0 \right\| +L_{\rho}\left\|y^{(u_0),k^*_0}-x^{(u_1),k^*_1}\right\| \nonumber \\
&-&L_{\rho} \left\| \mathbf{\hat x_1^*}(k^*_0,k^*_1) -x^{(u_1),k^*_1} \right\|  \ .
\label{strictcompare}
\end{eqnarray}
By construction, $\mathbf{\hat x_1^*}(k^*_0,k^*_1)$ lies on the same line as $y^{(u_0),k^*_0}$ and $x^{(u_1),k^*_1}$ (see Figure \ref{figure:specific_case}). Furthermore
\begin{equation}
\left\| \mathbf{\hat x_1^*}(k^*_0,k^*_1)  - x^{(u_1),k^*_1} \right\|=\left\| \mathbf{\hat x_1^*}(k^*_0,k^*_1) - y^{(u_0),k^*_0} \right\|+\left\|y^{(u_0),k^*_0}-x^{(u_1),k^*_1}\right\|.
\label{oneline}
\end{equation}
Using  \eqref{oneline} in \eqref{strictcompare} yields
\begin{eqnarray}
&&B^{(u_0,u_1),k^*_0, k^*_1} - B_{CGRL}^{(u_0,u_1),k^*_0, k^*_1}(\mathcal F)=L_{\rho}L_f \left\|x^{(u_0),k^*_0}- x_0 \right\|\notag\\
&&+L_{\rho}\left(\left\|y^{(u_0),k^*_0}-x^{(u_1),k^*_1}\right\|-\left\| \mathbf{\hat x_1^*}(k^*_0,k^*_1) - y^{(u_0),k^*_0} \right\|- \left\|y^{(u_0),k^*_0}-x^{(u_1),k^*_1} \right\|\right)\notag\\
&&= L_{\rho}\left( L_f \left\|x^{(u_0),k^*_0}- x_0 \right\|- \left\| \mathbf{\hat x_1^*}(k^*_0,k^*_1) - y^{(u_0),k^*_0} \right\|\right).\label{finaleq}
\end{eqnarray}
By construction, Equation \eqref{finaleq} is equal to 0 (see Figure \ref{figure:specific_case}), which proves the equality
of the two bounds:
\begin{eqnarray}
B^{(u_0,u_1),k^*_0, k^*_1}  = B_{CGRL}^{(u_0,u_1)}(\mathcal F)  \ . \label{BTRgeqCGRL2}
\end{eqnarray}
The final result is given by combining Equations (\ref{BTRgeqCGRL1}) and (\ref{BTRgeqCGRL2}).
\end{proof}

From the proof, one can observe that the gap between the CGRL bound and the Trust-region bound is only due to the resolution of $(  \mathcal P'^{(u_0,u_1)}_2 )$. Note that in the case where $k_0^*$ also belongs to the set $\underset {k_0 \in \{1 , \ldots, n^{(u_0)}  \}} {\arg \max} r^{(u_0),k_0} - L_\rho \left\| x^{(u_0),k_0} - x_0   \right\| $, then the bounds are equal. The two corollaries follow:
\begin{corollary}\label{corollary:TR_et_CGRL_super_close}
Let $ (u_0,u_1) \in \mathcal U^2$. Let $k^*_0 \in \left\{ 1 , \ldots, n^{(u_0)} \right\}$ and $k^*_1 \in \left\{ 1 , \ldots, n^{(u_1)} \right\}$ be such that:
\begin{eqnarray}
B^{(u_0,u_1)}_{CGRL}(\mathcal F) =  r^{(u_0),k^*_0}-L_{\rho}(1+L_f) \left\|x^{(u_0),k^*_0}- x_0 \right\| 
+ r^{(u_1),k^*_1}-L_{\rho} \left\|y^{(u_0),k^*_0}-x^{(u_1),k^*_1} \right\|.  \nonumber 
\end{eqnarray}
Then,
\begin{eqnarray*}
\left( k^*_0 \in \underset {k_0 \in \{1 , \ldots, n^{(u_0)}  \}} {\arg \max} r^{(u_0),k_0} - L_\rho \left\| x^{(u_0),k_0} - x_0   \right\| \right) \implies  B_{CGRL}^{(u_0,u_1)}(\mathcal F) =  B_{TR}^{(u_0,u_1)}(\mathcal F) \ .
\end{eqnarray*}
\end{corollary}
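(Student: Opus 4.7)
The plan is to extract the corollary from the identities in the proof of Theorem \ref{theorem:TRversusCGRL}. That proof established, for the CGRL-optimal pair $(k^*_0,k^*_1)$, the identity $r^{(u_0),k^*_0} - L_\rho\|x_0 - x^{(u_0),k^*_0}\| + B''^{(u_0,u_1),k^*_0,k^*_1}_{TR}(\mathcal F) = B_{CGRL}^{(u_0,u_1)}(\mathcal F)$ (equation \eqref{BTRgeqCGRL2}), and the only reason $B_{TR}^{(u_0,u_1)}(\mathcal F)$ could exceed this quantity comes from two sources: the slack $\hat r^*_0 - (r^{(u_0),k^*_0} - L_\rho\|x_0 - x^{(u_0),k^*_0}\|)$ used in \eqref{BTRgeqCGRL1}, and the maximization over $(k_0,k_1)$ in Definition \ref{definition_TR}.

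Under the corollary's hypothesis, Corollary \ref{corollary:solution_P'2} gives $\hat r^*_0 = r^{(u_0),k^*_0} - L_\rho\|x_0 - x^{(u_0),k^*_0}\|$, which kills the first source of slack. Substituting yields $\hat r^*_0 + B''^{(u_0,u_1),k^*_0,k^*_1}_{TR}(\mathcal F) = B_{CGRL}^{(u_0,u_1)}(\mathcal F)$, exhibiting $(k^*_0,k^*_1)$ as a witness that the maximum in Definition \ref{definition_TR} is at least $B_{CGRL}^{(u_0,u_1)}(\mathcal F)$. Combined with $B_{TR}^{(u_0,u_1)}(\mathcal F) \geq B_{CGRL}^{(u_0,u_1)}(\mathcal F)$ from Theorem \ref{theorem:TRversusCGRL}, equality will follow once the second source of slack is also shown to vanish.

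The principal obstacle is thus to argue that $(k^*_0,k^*_1)$ attains $\max_{k_0,k_1} B''^{(u_0,u_1),k_0,k_1}_{TR}(\mathcal F)$ under the hypothesis. My approach is to rewrite $B_{CGRL}(k_0,k_1) = (r^{(u_0),k_0} - L_\rho\|x^{(u_0),k_0} - x_0\|) + B''^{(u_0,u_1),k_0,k_1}_{TR}(\mathcal F)$ for every admissible pair, invoke the optimality of $(k^*_0,k^*_1)$ in this maximization, and exploit the fact that the first summand is already maximized at $k_0 = k^*_0$ in order to transfer the comparison onto the $B''_{TR}$-terms alone. This bookkeeping is where I expect the most care to be needed, since the CGRL inequality only controls the \emph{sum} of the two summands directly and not each summand in isolation.
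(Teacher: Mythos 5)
Your reduction is sound as far as it goes: writing $g(k_0) \triangleq r^{(u_0),k_0}-L_\rho\left\|x^{(u_0),k_0}-x_0\right\|$, the identity behind \eqref{BTRgeqCGRL2} holds for \emph{every} pair, i.e. $g(k_0)+B''^{(u_0,u_1),k_0,k_1}_{TR}(\mathcal F)$ equals the $(k_0,k_1)$ term inside the CGRL maximum, and the hypothesis gives $\mathbf{\hat r^*_0}=g(k^*_0)$, so everything hinges on the step you defer to ``bookkeeping'': that $(k^*_0,k^*_1)$ also attains $\max_{k_0,k_1}B''^{(u_0,u_1),k_0,k_1}_{TR}(\mathcal F)$. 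That step is a genuine gap, and no bookkeeping closes it: optimality of $(k^*_0,k^*_1)$ for the sum $g(k_0)+B''^{(u_0,u_1),k_0,k_1}_{TR}(\mathcal F)$ together with optimality of $k^*_0$ for $g$ alone does not transfer to optimality for the $B''_{TR}$ term alone (the winner of the sum may win only because of a large $g$). Concretely, take $d=1$, $L_f=L_\rho=1$, $x_0=0$, $u_0\neq u_1$, $\mathcal F^{(u_0)}=\left\{(-1,2,2),(1,0,1)\right\}$ and $\mathcal F^{(u_1)}=\left\{(1,0,\cdot)\right\}$ (this sample is Lipschitz-consistent). Then $g(1)=1>g(2)=-1$, so $\mathbf{\hat r^*_0}=1$ with $\arg\max=\{1\}$; the trust-region values are $B''^{1,1}_{TR}=0-1-1=-2$ and $B''^{2,1}_{TR}=0-0-1=-1$; the CGRL terms are $-1$ and $-2$, so $(k^*_0,k^*_1)=(1,1)$ and the hypothesis of the corollary holds. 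Yet $B^{(u_0,u_1)}_{CGRL}(\mathcal F)=-1$ while $B^{(u_0,u_1)}_{TR}(\mathcal F)=1+\max(-2,-1)=0$, because the maximum in Definition \ref{definition_TR} is attained at $k_0=2$, not at $k^*_0$.

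So the delicate point you flagged is exactly where the argument fails, and in fact the implication as stated does not hold on the data above. For comparison, the paper gives no proof beyond the remark that ``the gap between the CGRL bound and the Trust-region bound is only due to the resolution of $(\mathcal P'^{(u_0,u_1)}_2)$,'' which silently assumes the very transfer of optimality you were worried about; your caution was therefore well placed, but your proposal does not (and cannot) complete the step. The conclusion would follow if the hypothesis were strengthened to require in addition that $(k^*_0,k^*_1)$ maximize $B''^{(u_0,u_1),k_0,k_1}_{TR}(\mathcal F)$, or if $B^{(u_0,u_1)}_{TR}(\mathcal F)$ were instead defined as $\max_{k_0,k_1}\left[g(k_0)+B''^{(u_0,u_1),k_0,k_1}_{TR}(\mathcal F)\right]$; as written, neither your argument nor the paper's establishes the corollary.
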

\begin{corollary}
\begin{eqnarray*}
\forall (u_0,u_1) \in \mathcal U^2,  \qquad \left(  n^{(u_0)} = 1 \right)  \implies B^{(u_0,u_1)}_{CGRL}(\mathcal F) = B^{(u_0,u_1)}_{TR}(\mathcal F) \ .
\end{eqnarray*}
\end{corollary}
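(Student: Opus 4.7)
The plan is to derive this corollary as a direct consequence of Corollary \ref{corollary:TR_et_CGRL_super_close}. The hypothesis $n^{(u_0)} = 1$ means that the set $\mathcal F^{(u_0)}$ contains a single transition, so the index $k_0$ ranges over the singleton $\{1\}$. Every maximization or selection over this index becomes trivial.

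First, I would observe that the set $\underset{k_0 \in \{1,\ldots,n^{(u_0)}\}}{\arg\max}\, r^{(u_0),k_0} - L_\rho \|x^{(u_0),k_0} - x_0\|$ is necessarily equal to $\{1\}$, since it is a nonempty $\arg\max$ over a singleton. Next, any index $k^*_0 \in \{1,\ldots,n^{(u_0)}\}$ realizing the maximum in the definition of $B^{(u_0,u_1)}_{CGRL}(\mathcal F)$ must likewise equal $1$. Hence the hypothesis of Corollary \ref{corollary:TR_et_CGRL_super_close} is satisfied automatically, i.e. $k^*_0 \in \underset{k_0 \in \{1,\ldots,n^{(u_0)}\}}{\arg\max}\, r^{(u_0),k_0} - L_\rho \|x^{(u_0),k_0} - x_0\|$.

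Applying Corollary \ref{corollary:TR_et_CGRL_super_close} then yields $B^{(u_0,u_1)}_{CGRL}(\mathcal F) = B^{(u_0,u_1)}_{TR}(\mathcal F)$, which is the desired equality. There is essentially no obstacle here: the corollary reduces to a verification that the sufficient condition of Corollary \ref{corollary:TR_et_CGRL_super_close} is vacuously met when the underlying index set is a singleton. The only care needed is to note that the argument is valid for \emph{any} $(u_0,u_1) \in \mathcal U^2$, since the hypothesis bears only on $\mathcal F^{(u_0)}$ and no assumption is placed on $n^{(u_1)}$.
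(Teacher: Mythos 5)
Your argument is correct and matches the paper's (implicit) reasoning: the paper derives this corollary directly from Corollary \ref{corollary:TR_et_CGRL_super_close} by noting that when $n^{(u_0)}=1$ the index set is a singleton, so the maximizing index $k^*_0=1$ automatically lies in the $\arg\max$ defining $\mathbf{\hat r_0^*}$. Nothing further is needed.
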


\subsubsection{Lagrangian Relaxation Versus Trust-region}
\label{subsubsection:LDversusTR}
In this section, we prove that the lower bound obtained with the Lagrangian relaxation is always greater than or equal to the Trust-region  bound. To prove this result, we give a preliminary lemma:
\begin{lemma} \label{lemma:n0_n1_1}
Let $\left( u_0, u_1 \right) \in \mathcal U^2 $ and $\left(k_0,k_1\right) \in \left\{1, \ldots, n^{(u_0)} \right\} \times \left\{1, \ldots, n^{(u_1)} \right\}$. Consider again the problem $\left( \mathcal P''^{(u_0,u_1)}_{TR}(k_0,k_1) \right)$  where all constraints are dropped except the two defined by $\left(k_0,k_1\right)$:
\begin{alignat*}{3}
&\left( \mathcal P''^{(u_0,u_1)}_{TR}(k_0,k_1) \right):&
\underset
{\begin{matrix} \mathbf{\hat r_{1}} \in \mathbb R\\
 \mathbf{\hat  x_{1}} \in \mathcal X\end{matrix}} {\min}\quad &  \mathbf{\hat r_1} \nonumber \\
&&\text{subject to }\quad & \left| \mathbf{\hat r_1} - r^{(u_1),k_1} \right|^2 \leq L_\rho^2 \left\| \mathbf{\hat x_1} - x^{(u_1),k_1}  \right\|^2  \\
&&& \left\| \mathbf{\hat x_{1}} - y^{(u_0),k_0} \right\|^2 \leq L_f^2 \left\|  x_{0} - x^{(u_0),k_0} \right\|^2  \ . \nonumber
\end{alignat*}
Then,  the Lagrangian relaxation of $\left( \mathcal P''^{(u_0,u_1)}_{TR}(k_0,k_1) \right)$ leads to a bound denoted by\\ $B''^{(u_0,u_1),k_0,k_1}_ {LD}(\mathcal F) $ which is equal to the Trust-region bound $B''^{(u_0,u_1),k_0,k_1}_{TR}(\mathcal F)$, i.e.
\begin{eqnarray*}
B''^{(u_0,u_1),k_0,k_1}_ {LD}(\mathcal F) = B''^{(u_0,u_1),k_0,k_1}_{TR}(\mathcal F) \ .
\end{eqnarray*}
\end{lemma}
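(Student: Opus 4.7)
The idea is to carry out the Lagrangian dual in closed form and verify that it matches the Trust-region bound. Since only two constraints are kept in $(\mathcal P''^{(u_0,u_1)}_{TR}(k_0,k_1))$---one on $\hat{\mathbf{r}}_1$ and one on $\hat{\mathbf{x}}_1$---the Lagrangian has a tractable structure, and I would work directly with it rather than going through the general conic reformulation of Theorem~\ref{theorem:conic_quadratic}.

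First, I would dualize both constraints with multipliers $\mu, \lambda \geq 0$ and perform the inner minimization in two stages. For $\mu > 0$, the minimization in $\hat{\mathbf{r}}_1$ is a univariate convex quadratic whose optimum is attained at $\hat{\mathbf{r}}_1 = r^{(u_1),k_1} - 1/(2\mu)$ with value $r^{(u_1),k_1} - 1/(4\mu)$. Substituting this back leaves an unconstrained quadratic minimization in $\hat{\mathbf{x}}_1$, which is bounded below precisely when $\lambda > \mu L_\rho^2$; in that regime the minimizer has a closed form, and algebraic simplification collapses the remaining dual objective to a two-variable expression involving only $D := \|y^{(u_0),k_0}-x^{(u_1),k_1}\|$ and $R := L_f \|x_0 - x^{(u_0),k_0}\|$.

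Next, I would maximize over $(\mu,\lambda)$. Optimizing in $\lambda$ first yields $\lambda^\star = \mu L_\rho^2(1 + D/R)$, and the dual collapses to $r^{(u_1),k_1} - \tfrac{1}{4\mu} - \mu L_\rho^2 (D+R)^2$; a one-dimensional optimization in $\mu$ then produces the value $r^{(u_1),k_1} - L_\rho (D + R)$. Finally, I would compare this with the formula for $B''^{(u_0,u_1),k_0,k_1}_{TR}(\mathcal F)$ given in Theorem~\ref{theorem:trust_region}: because $\hat{\mathbf{x}}_1^*(k_0,k_1)$ lies on the line through $x^{(u_1),k_1}$ and $y^{(u_0),k_0}$ with $y^{(u_0),k_0}$ sitting between them at distance $R$ from $\hat{\mathbf{x}}_1^*(k_0,k_1)$, one has $\|\hat{\mathbf{x}}_1^*(k_0,k_1) - x^{(u_1),k_1}\| = D + R$, so the two quantities coincide.

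A more conceptual alternative is to observe, as in the proof of Theorem~\ref{theorem:trust_region}, that $(\mathcal P''^{(u_0,u_1)}_{TR}(k_0,k_1))$ reduces to maximizing a convex quadratic over a Euclidean ball, i.e., a classical trust-region subproblem, for which zero Lagrangian duality gap is a well-known result \cite{Conn2000}; the claim then follows as soon as one checks that minimizing out $\hat{\mathbf{r}}_1$ in the two-constraint Lagrangian reproduces the standard TRS Lagrangian in $\hat{\mathbf{x}}_1$ up to an affine reparametrization of the multipliers. The main obstacle in either route is precisely this book-keeping---the explicit two-stage elimination of $(\hat{\mathbf{r}}_1,\lambda)$ followed by the one-dimensional maximization over $\mu$ in the direct approach, or the multiplier matching in the conceptual approach---but both are routine once the reduction to a $(D,R)$-parametrized problem has been made.
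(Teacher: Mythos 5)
Your proposal is correct and follows essentially the same route as the paper's Appendix~\ref{appendix:proof_n0_n1_1}: both compute the Lagrangian dual of the two-constraint problem in closed form and identify the optimal multipliers $\lambda_0 = L_\rho/\bigl(2L_f\|x_0-x^{(u_0),k_0}\|\bigr)$ and $\mu_0 = 1/\bigl(2L_\rho(D+R)\bigr)$, matching the resulting value $r^{(u_1),k_1}-L_\rho(D+R)$ to the trust-region bound. The only difference is organizational --- you eliminate $\hat{\mathbf{r}}_1$, $\hat{\mathbf{x}}_1$, $\lambda$ and $\mu$ sequentially, whereas the paper locates the same stationary point by differentiation and then verifies equality via weak duality through a longer direct simplification --- and your bookkeeping is arguably the cleaner of the two.
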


Proofs of this lemma can be found in \cite{Beck2007} and \cite{Bonnans2006}, but we also provide in  Appendix \ref{appendix:proof_n0_n1_1} a proof in our particular case. We then have the following theorem:

\begin{theorem}
\label{theorem:LDversusTR}
\begin{eqnarray*}
\forall \left( u_0,u_1 \right) \in \mathcal U^2, \qquad B^{(u_0,u_1)}_{TR}(\mathcal F) \leq B^{(u_0,u_1)}_{LD}(\mathcal F) .
\end{eqnarray*}
\end{theorem}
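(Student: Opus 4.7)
The plan is to reduce the inequality to its second-stage content and then exploit strong duality for the single-constraint trust-region subproblem provided by Lemma \ref{lemma:n0_n1_1}. Since both $B^{(u_0,u_1)}_{TR}(\mathcal F)$ and $B^{(u_0,u_1)}_{LD}(\mathcal F)$ share the same first-stage term $\mathbf{\hat r^*_0}$ coming from $(\mathcal P'^{(u_0,u_1)}_2)$, it suffices to establish the inequality on the $(\mathcal P''^{(u_0,u_1)}_2)$-component, i.e.\
\begin{equation*}
\underset{(k_0,k_1)}{\max}\; B''^{(u_0,u_1),k_0,k_1}_{TR}(\mathcal F) \;\leq\; B''^{(u_0,u_1)}_{LD}(\mathcal F).
\end{equation*}

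Fix any pair $(k_0,k_1)\in \{1,\dots, n^{(u_0)}\}\times \{1,\dots, n^{(u_1)}\}$. By Lemma \ref{lemma:n0_n1_1}, the Lagrangian relaxation of the restricted problem $(\mathcal P''^{(u_0,u_1)}_{TR}(k_0,k_1))$ attains the trust-region bound itself; call $(\bar\lambda_{k_0},\bar\mu_{k_1})\in \R_+ \times \R_+$ the corresponding optimal dual multipliers. The key observation is that this pair can be lifted to a feasible dual point of the full Lagrangian dual $(\mathcal P''^{(u_0,u_1)}_{LD})$: take the dual vectors $\llambda\in\R_+^{n^{(u_0)}}$ and $\mmu\in\R_+^{n^{(u_1)}}$ whose $k_0$th and $k_1$th components equal $\bar\lambda_{k_0}$ and $\bar\mu_{k_1}$ respectively, and with all other entries set to zero. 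Plugging these into the objective of $(\mathcal P''^{(u_0,u_1)}_{LD})$, every term associated with an index distinct from $k_0$ or $k_1$ vanishes, so the inner min over $(\mathbf{\hat r_1},\mathbf{\hat x_1})$ reduces exactly to the inner min of the Lagrangian of $(\mathcal P''^{(u_0,u_1)}_{TR}(k_0,k_1))$ at $(\bar\lambda_{k_0},\bar\mu_{k_1})$, which by construction equals $B''^{(u_0,u_1),k_0,k_1}_{TR}(\mathcal F)$. Since $B''^{(u_0,u_1)}_{LD}(\mathcal F)$ is obtained by maximizing over a superset of feasible dual vectors, it dominates the value at this particular point, yielding $B''^{(u_0,u_1)}_{LD}(\mathcal F)\geq B''^{(u_0,u_1),k_0,k_1}_{TR}(\mathcal F)$.

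Since the choice of $(k_0,k_1)$ was arbitrary, maximizing the right-hand side over all admissible index pairs and then adding $\mathbf{\hat r^*_0}$ on both sides gives the theorem. The main obstacle I anticipate is a bookkeeping one: one must verify that the one-hot dual vector $(\llambda,\mmu)$ I constructed lies in the effective domain of the dual, i.e.\ that the quadratic form stays positive semidefinite and the scalars $M$ and $L-ML_\rho^2$ remain nonnegative in the limiting sense used to derive the conic form \eqref{socp}. This is precisely the regime handled by Lemma \ref{lemma:n0_n1_1}, whose strong-duality statement implies that the optimal $(\bar\lambda_{k_0},\bar\mu_{k_1})$ for the restricted problem already satisfies these sign conditions, so no additional care is needed and the reduction is clean.
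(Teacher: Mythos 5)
Your proposal is correct and follows essentially the same route as the paper's proof: invoke Lemma \ref{lemma:n0_n1_1} to identify the trust-region value of the restricted two-constraint problem with its Lagrangian dual value, then observe that this restricted dual is the full dual $(\mathcal P''^{(u_0,u_1)}_{LD})$ with all other multipliers forced to zero, so the full dual can only be larger. Your added remark about verifying that the one-hot dual point satisfies $M>0$ and $L>ML_\rho^2$ is a legitimate bookkeeping check that the paper glosses over, and it is indeed settled by the explicit multipliers $(\lambda_0,\mu_0)$ constructed in the appendix proof of the lemma.
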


\begin{proof}
Let $\left( u_0,u_1 \right) \in \mathcal U^2$. Let $\left(k^*_0,k^*_1\right) \in \left\{1, \ldots, n^{(u_0)} \right\} \times \left\{1, \ldots, n^{(u_1)} \right\}$ be such that:
\begin{eqnarray*}
B^{(u_0,u_1)}_{TR}(\mathcal F) = \mathbf{ \hat r^*_0 }  +  B''^{(u_0,u_1),k^*_0,k^*_1}_{TR}(\mathcal F) .
\end{eqnarray*}
Considering $ \left(k_0,k_1\right) = \left(k^*_0,k^*_1\right)$ in Lemma \ref{lemma:n0_n1_1}, we have:
\begin{eqnarray}
B^{(u_0,u_1)}_{TR}(\mathcal F) = \mathbf{ \hat r^*_0 } +  B''^{(u_0,u_1),k^*_0,k^*_1}_{LD}(\mathcal F) \label{theorem:LDversusTR_1}
\end{eqnarray}

Then, one can observe that the Lagrangian relaxation of the problem $(  \mathcal P''^{(u_0,u_1)}_{TR}(k^*_0,k^*_1) )$ - from which $ B''^{(u_0,u_1),k^*_0,k^*_1}_{LD}(\mathcal F)$ is computed - is also a relaxation of the problem $( \mathcal P''^{(u_0,u_1)}_{LD}  ) $  for which all the dual variables corresponding to constraints that are not related with the system transitions $ \left( x^{(u_0),k^*_0},r^{(u_0),k^*_0},y^{(u_0),k^*_0}  \right)$ and $ \left( x^{(u_1),k^*_1},r^{(u_1),k^*_1},y^{(u_1),k^*_1}  \right) $ would be forced to zero, i.e.
\begin{eqnarray*}
B''^{(u_0,u_1),k^*_0,k^*_1}_{LD}(\mathcal F) &=& \underset{\llambda\in \R^{n^{(u_0)}}_+, \mmu \in \R^{n^{(u_1)}}_+}{\max} \; \frac{-\|L_{\rho}^2X\mmu-Y\llambda\|^2}{-ML_{\rho}^2+L}-\frac{\left(1-2\bar r^T\mmu\right)^2}{4M} \\
&&+ \sum_{k_0=1}^{n^{(u_0)}} \lambda_{k_0} \left(\left\|y^{(u_0),k_0}\right\|^2 -L_f^2 \left\|x^{(u_0),k_0} - \hat x_0 \right\|^2 \right)  \\
&&+\sum_{k_1=1}^{n^{(u_1)}} \mu_{k_1} \left(\left(r^{(u_1),k_1}\right)^2- L_{\rho}^2 \left\|x^{(u_1),k_1}\right\|^2 \right) \\
&&\text{subject to } \qquad M > 0 \ , \\ 
&& \qquad \quad \quad \qquad L> ML_{\rho}^2 \ , \\
&&\qquad \quad \quad \qquad \lambda_{k_0} = 0 \mbox{ if } k_0 \neq k^*_0,  \forall k_0 \in \left\{1, \ldots , n^{(u_0)} \right\} \ , \\
&&\qquad \quad \quad \qquad \mu_{k_1} = 0 \mbox{ if } k_1 \neq k^*_1, \forall k_1 \in \left\{1, \ldots , n^{(u_1)} \right\} \ .
\end{eqnarray*}
We therefore have:
\begin{eqnarray}
B''^{(u_0,u_1),k^*_0,k^*_1}_{LD}(\mathcal F) \leq B''^{(u_0,u_1)}_{LD}(\mathcal F) \ . \label{theorem:LDversusTR_2}
\end{eqnarray}
By definition of the Lagrangian relaxation bound $B^{(u_0,u_1)}_{LD}(\mathcal F)$, we have:
\begin{eqnarray}
B^{(u_0,u_1)}_{LD}(\mathcal F) = \mathbf{\hat r^*_0} + B''^{(u_0,u_1)}_{LD}(\mathcal F) \label{theorem:LDversusTR_3}
\end{eqnarray}
Equations (\ref{theorem:LDversusTR_1}), (\ref{theorem:LDversusTR_2})  and (\ref{theorem:LDversusTR_3}) finally give:
\begin{eqnarray*}
B^{(u_0,u_1)}_{TR}(\mathcal F') = B^{(u_0,u_1)}_{LD}(\mathcal F) \ . 
\end{eqnarray*}
\end{proof}


\subsubsection{Bounds Inequalities: Summary}
\label{subsubsection:synthesis}
We summarize in the following theorem all the results that were obtained in the previous sections.
\begin{theorem}
\label{theorem:synthesis}
$\forall \left( u_0, u_1 \right) \in \mathcal U^2, $
\begin{eqnarray*}
B^{(u_0,u_1)}_{CGRL}(\mathcal F) \leq B^{(u_0,u_1)}_{TR}(\mathcal F) \leq B^{(u_0,u_1)}_{LD}(\mathcal F) \leq B^{(u_0,u_1)}_{2}(\mathcal F) \leq J^{(u_0,u_1)}_2 \ .
\end{eqnarray*}
\end{theorem}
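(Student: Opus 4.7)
The plan is to assemble the chain of four inequalities from results that have already been established (or are direct consequences of standard facts), without any new calculation. I would state the four inequalities separately and justify each one, moving from left to right.

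First, the inequality $B^{(u_0,u_1)}_{CGRL}(\mathcal F) \leq B^{(u_0,u_1)}_{TR}(\mathcal F)$ is exactly Theorem \ref{theorem:TRversusCGRL}, and the inequality $B^{(u_0,u_1)}_{TR}(\mathcal F) \leq B^{(u_0,u_1)}_{LD}(\mathcal F)$ is exactly Theorem \ref{theorem:LDversusTR}; these are simply invoked. So the only pieces that need genuine justification in the proof are the last two inequalities.

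Second, for $B^{(u_0,u_1)}_{LD}(\mathcal F) \leq B^{(u_0,u_1)}_{2}(\mathcal F)$, I would use weak Lagrangian duality. By Theorem \ref{theorem_decoupling} we know $B^{(u_0,u_1)}_{2}(\mathcal F) = \mathbf{\hat r^*_0} + (\text{opt of } \mathcal P''^{(u_0,u_1)}_2)$, and by construction $B^{(u_0,u_1)}_{LD}(\mathcal F) = \mathbf{\hat r^*_0} + B''^{(u_0,u_1)}_{LD}(\mathcal F)$ where $B''^{(u_0,u_1)}_{LD}(\mathcal F)$ is the optimal value of the Lagrangian dual $(\mathcal P''^{(u_0,u_1)}_{LD})$ of $(\mathcal P''^{(u_0,u_1)}_2)$. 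Weak duality then gives $B''^{(u_0,u_1)}_{LD}(\mathcal F) \leq $ opt of $(\mathcal P''^{(u_0,u_1)}_2)$, and adding $\mathbf{\hat r^*_0}$ on both sides yields the claim; this is precisely the observation already made right after equation \eqref{lagdual}.

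Third, for $B^{(u_0,u_1)}_{2}(\mathcal F) \leq J^{(u_0,u_1)}_2$, I would exhibit a feasible point of $(\mathcal P^{(u_0,u_1)}_2)$ whose objective equals $J^{(u_0,u_1)}_2$. Namely, set $\mathbf{\hat x_0} = x_0$, $\mathbf{\hat x_1} = f(x_0,u_0)$, $\mathbf{\hat r_0} = \rho(x_0,u_0)$, and $\mathbf{\hat r_1} = \rho(f(x_0,u_0),u_1)$. Since $f$ and $\rho$ satisfy the Lipschitz inequalities with constants $L_f$ and $L_\rho$, and since every sample $(x^{(u),k}, r^{(u),k}, y^{(u),k}) \in \mathcal F$ satisfies $r^{(u),k} = \rho(x^{(u),k},u)$ and $y^{(u),k} = f(x^{(u),k},u)$, all constraints \eqref{setConst_r_0_general}--\eqref{setConst_x_0_general} are verified by this choice. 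The objective value is $\rho(x_0,u_0) + \rho(f(x_0,u_0),u_1) = J^{(u_0,u_1)}_2$, and since $B^{(u_0,u_1)}_{2}(\mathcal F)$ is the minimum of the objective over feasible points, it is at most $J^{(u_0,u_1)}_2$.

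No step here is genuinely hard: the only item that deserves a short explicit argument is the last one (verifying that the true system trajectory is feasible for $(\mathcal P^{(u_0,u_1)}_2)$), and writing the proof is essentially bookkeeping — concatenating the four justifications into a single chain.
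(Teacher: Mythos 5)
Your proposal is correct and follows essentially the same route as the paper: the first two inequalities are quoted from Theorems \ref{theorem:TRversusCGRL} and \ref{theorem:LDversusTR}, the third is weak Lagrangian duality, and the fourth comes from the definition of $\left(\mathcal P^{(u_0,u_1)}_2\right)$ as a minimization over systems compatible with the data. The only difference is that you make the last two steps explicit (via the decoupling of Theorem \ref{theorem_decoupling} and the feasibility of the true trajectory), whereas the paper simply cites the relevant property and reference; this is a harmless, indeed welcome, elaboration.
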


\begin{proof}
Let $\left( u_0, u_1 \right) \in \mathcal U^2 $. The inequality 
\begin{eqnarray}
B^{(u_0,u_1)}_{CGRL}(\mathcal F) \leq B^{(u_0,u_1)}_{TR}(\mathcal F) \leq B^{(u_0,u_1)}_{LD}(\mathcal F)
\end{eqnarray}
is a straightforward consequence of Theorems \ref{theorem:TRversusCGRL} and \ref{theorem:LDversusTR}. The inequality
\begin{eqnarray}
 B^{(u_0,u_1)}_{LD}(\mathcal F) \leq B^{(u_0,u_1)}_{2}(\mathcal F)
\end{eqnarray}
is a property of the Lagrangian relaxation, and the inequality
\begin{eqnarray*}
B^{(u_0,u_1)}_{2}(\mathcal F) \leq J^{(u_0,u_1)}_2
\end{eqnarray*}
is derived from the formalization of the $\min \max$ generalization problem introduced in \cite{Fonteneau2011minmax}.
\end{proof}

\subsection{Convergence Properties}
\label{subsection:convergence_bounds}
We finally propose to analyze the convergence of the bounds, as well as the sequences of actions that lead to the maximization of the bounds, when the sample dispersion decreases towards zero. We assume in this section that the state space $\mathcal X$ is bounded:
\begin{eqnarray*}
\exists C_{\mathcal X} > 0 : \forall (x,x') \in \mathcal X^2, \qquad \| x - x' \| \leq C_{\mathcal X} \ .
\end{eqnarray*}
Let us now introduce the sample dispersion:
\begin{definition}[Sample Dispersion]
Since $\mathcal X $ is bounded, one has:
\begin{eqnarray}
\exists \ \alpha > 0 : \forall u \in \mathcal U  \ , \qquad  \underset { x \in \mathcal X } {\sup} \quad \quad \underset { k \in \left\{ 1 , \ldots, n^{(u)}   \right\}  }
{\min} \left\| x^{(u),k} - x \right\|    \leq \alpha \ . \label{Eqn:hypo_alpha} 
\end{eqnarray}
The smallest $\alpha$  which satisfies equation (\ref{Eqn:hypo_alpha}) is  named  the sample  dispersion  and  is  denoted by  $\alpha^*(\mathcal F)$.
\end{definition}
Intuitively, the   sample dispersion $\alpha^*(\mathcal F)$ can be seen as the radius of the largest non-visited state space area.

\subsubsection{Bounds}

We  analyze in this    subsection    the     tightness    of    the Trust-region and the Lagrangian relaxation  lower    bounds   as   a    function   of   the   sample dispersion. 
\begin{lemma}
\label{lemma:tightness}$\ $\\
$\exists  \  C  >  0 :   \forall      (u_0,u_1)    \in     \mathcal      U^2, \forall B^{(u_0,u_1)}(\mathcal F) \in \left\{  B^{(u_0,u_1)}_{CGRL}(\mathcal F) , B^{(u_0,u_1)}_{TR}(\mathcal F) , B^{(u_0,u_1)}_{LD}(\mathcal F) \right\},$
\begin{eqnarray*}
J^{(u_0,u_1)}_2 - B^{(u_0,u_1)}(\mathcal F) \leq C  \alpha^*(\mathcal F).
\end{eqnarray*}
\end{lemma}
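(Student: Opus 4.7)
The plan is to exploit the chain of inequalities from Theorem \ref{theorem:synthesis}: since $B^{(u_0,u_1)}_{CGRL}(\mathcal F) \leq B^{(u_0,u_1)}_{TR}(\mathcal F) \leq B^{(u_0,u_1)}_{LD}(\mathcal F) \leq J^{(u_0,u_1)}_2$, the quantity $J^{(u_0,u_1)}_2 - B^{(u_0,u_1)}(\mathcal F)$ is maximized (over the three choices of $B$) when $B = B_{CGRL}$. Hence it suffices to exhibit a constant $C > 0$, depending only on $L_f$ and $L_\rho$, such that $J^{(u_0,u_1)}_2 - B^{(u_0,u_1)}_{CGRL}(\mathcal F) \leq C\,\alpha^*(\mathcal F)$ uniformly in $(u_0,u_1)$.

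To bound this difference, I would first use the definition of the sample dispersion to pick well-located transitions. Let $x_1 \triangleq f(x_0,u_0)$; by definition of $\alpha^*(\mathcal F)$ applied to $x_0$ with action $u_0$ and to $x_1$ with action $u_1$, there exist indices $k_0^* \in \{1,\dots,n^{(u_0)}\}$ and $k_1^* \in \{1,\dots,n^{(u_1)}\}$ such that $\|x_0 - x^{(u_0),k_0^*}\| \leq \alpha^*(\mathcal F)$ and $\|x_1 - x^{(u_1),k_1^*}\| \leq \alpha^*(\mathcal F)$.

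Then I would combine three Lipschitz-type estimates. From the Lipschitz continuity of $\rho$, $|\rho(x_0,u_0) - r^{(u_0),k_0^*}| \leq L_\rho\,\alpha^*(\mathcal F)$ and $|\rho(x_1,u_1) - r^{(u_1),k_1^*}| \leq L_\rho\,\alpha^*(\mathcal F)$. From the Lipschitz continuity of $f$ applied to $x_0$ and $x^{(u_0),k_0^*}$, $\|y^{(u_0),k_0^*} - x_1\| \leq L_f\,\alpha^*(\mathcal F)$, and then a triangle inequality yields $\|y^{(u_0),k_0^*} - x^{(u_1),k_1^*}\| \leq (L_f+1)\,\alpha^*(\mathcal F)$. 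Since $B^{(u_0,u_1)}_{CGRL}(\mathcal F)$ is defined as a maximum over all index pairs, it dominates the value obtained with $(k_0^*,k_1^*)$, so
\begin{eqnarray*}
J^{(u_0,u_1)}_2 - B^{(u_0,u_1)}_{CGRL}(\mathcal F) &\leq& \bigl|\rho(x_0,u_0) - r^{(u_0),k_0^*}\bigr| + L_\rho(1+L_f)\,\|x^{(u_0),k_0^*} - x_0\| \\
&&+ \bigl|\rho(x_1,u_1) - r^{(u_1),k_1^*}\bigr| + L_\rho\,\|y^{(u_0),k_0^*} - x^{(u_1),k_1^*}\|.
\end{eqnarray*}
Plugging in the estimates above gives a bound of the form $C\,\alpha^*(\mathcal F)$ with $C = 2L_\rho(2+L_f)$, which is independent of $(u_0,u_1)$ as required.

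I do not expect a serious obstacle here: all steps are essentially Lipschitz-triangle-inequality book-keeping, and the only conceptually important ingredient is the reduction to the CGRL bound via Theorem \ref{theorem:synthesis}, which avoids having to analyze the relaxation schemes directly. The sole care-point is to make sure the index pair $(k_0^*,k_1^*)$ guaranteed by the sample dispersion is a legitimate candidate in the definition of $B_{CGRL}$, which it is by construction.
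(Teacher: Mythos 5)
Your proposal is correct and follows the same overall strategy as the paper: both reduce the claim for $B_{TR}$ and $B_{LD}$ to the CGRL case via the chain of inequalities in Theorem \ref{theorem:synthesis}. The only difference is that the paper disposes of the CGRL base case by citing \cite{Fonteneau2010ICAART}, whereas you prove it self-containedly; your Lipschitz/triangle-inequality computation is sound (the dispersion gives admissible indices $k_0^*, k_1^*$, and the four error terms sum to $2L_\rho\alpha^* + 2L_\rho(1+L_f)\alpha^*$), yielding the explicit constant $C = 2L_\rho(2+L_f)$, which the paper's proof does not exhibit.
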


\begin{proof}
The proof for the case where $B^{(u_0,u_1)}(\mathcal F) = B^{(u_0,u_1)}_{CGRL}(\mathcal F)$ is given in \cite{Fonteneau2010ICAART}. According to Theorem \ref{theorem:synthesis}, one has:
\begin{eqnarray*}
\forall \left( u_0,u_1 \right) \in \mathcal U^2, \qquad B^{(u_0,u_1)}_{CGRL}(\mathcal F)  \leq B^{(u_0,u_1)}_{TR}(\mathcal F) \leq B^{(u_0,u_1)}_{LD}(\mathcal F)  \leq J^{(u_0,u_1)}_2 ,
\end{eqnarray*}
which ends the proof.
\end{proof}

We therefore have the following theorem:
\begin{theorem}
\label{theorem:consistency_bounds}
$ \\ $
$\forall      (u_0,u_1)    \in     \mathcal      U^2, \forall B^{(u_0,u1)}(\mathcal F) \in \left\{  B^{(u_0,u_1)}_{CGRL}(\mathcal F) , B^{(u_0,u_1)}_{TR}(\mathcal F) , B^{(u_0,u_1)}_{LD}(\mathcal F) \right\},$
\begin{eqnarray*}
\underset {\alpha^*(\mathcal F)  \rightarrow 0} {\lim} \quad \quad J^{(u_0,u_1)}_2 - B^{(u_0,u_1)}(\mathcal F) = 0 \ .
\end{eqnarray*}
\end{theorem}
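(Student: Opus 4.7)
The plan is to obtain Theorem \ref{theorem:consistency_bounds} as an immediate sandwich argument built on Lemma \ref{lemma:tightness} and Theorem \ref{theorem:synthesis}, both of which are already available. The work is essentially bookkeeping; there is no genuine new estimate to prove here.

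First, I would fix an arbitrary action sequence $(u_0,u_1) \in \mathcal U^2$ and let $B^{(u_0,u_1)}(\mathcal F)$ denote any one of the three bounds $B^{(u_0,u_1)}_{CGRL}(\mathcal F)$, $B^{(u_0,u_1)}_{TR}(\mathcal F)$, $B^{(u_0,u_1)}_{LD}(\mathcal F)$. From Theorem \ref{theorem:synthesis} we have $B^{(u_0,u_1)}(\mathcal F) \leq J^{(u_0,u_1)}_2$, so that
\begin{equation*}
0 \leq J^{(u_0,u_1)}_2 - B^{(u_0,u_1)}(\mathcal F).
\end{equation*}
From Lemma \ref{lemma:tightness}, there exists a constant $C>0$ (independent of $\mathcal F$) such that
\begin{equation*}
J^{(u_0,u_1)}_2 - B^{(u_0,u_1)}(\mathcal F) \leq C\,\alpha^*(\mathcal F).
\end{equation*}

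The conclusion then follows by the squeeze theorem: letting $\alpha^*(\mathcal F) \to 0$, the upper bound $C\,\alpha^*(\mathcal F)$ tends to $0$, while the lower bound is identically $0$, so $J^{(u_0,u_1)}_2 - B^{(u_0,u_1)}(\mathcal F) \to 0$. Since this argument applies uniformly to each of the three choices of bound, the theorem is established.

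There is no real obstacle here: all the analytic content lies in Lemma \ref{lemma:tightness} (which in turn reduces to the CGRL case proved in \cite{Fonteneau2010ICAART}, combined with the chain of inequalities from Theorem \ref{theorem:synthesis} that inserts $B_{TR}$ and $B_{LD}$ between $B_{CGRL}$ and $J^{(u_0,u_1)}_2$). The only thing to check carefully when writing the proof is that the constant $C$ from Lemma \ref{lemma:tightness} is indeed the same for all three bounds; this is built into the statement of that lemma, so nothing further needs to be verified.
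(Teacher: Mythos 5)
Your proposal is correct and matches the paper's own (implicit) argument: the paper states Theorem \ref{theorem:consistency_bounds} as an immediate consequence of Lemma \ref{lemma:tightness}, with nonnegativity of the gap coming from Theorem \ref{theorem:synthesis}, exactly the sandwich you describe. Nothing is missing.
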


\subsubsection{Bound-optimal Sequences of Actions}
\label{subsection:convergence_actions}

In the following, we denote by $B^{(*)}_{CGRL}\left(\mathcal F \right)$  (resp. $B^{(*)}_{TR}\left(\mathcal F \right)$ and $B^{(*)}_{LD}\left(\mathcal F  \right)$ ) the maximal CGRL bound  (resp. the maximal Trust-region bound and maximal Lagrangian relaxation bound) over the set of all possible sequences of actions, i.e.,
\begin{definition}[Maximal Bounds]
\begin{eqnarray*}
B^{(*)}_{CGRL}\left(\mathcal F \right) &\triangleq& \underset {(u_0,u_1) \in \mathcal U^2} {\max} B^{(u_0,u_1)}_{CGRL}\left(\mathcal F \right) \ , \nonumber \\
B^{(*)}_{TR}\left(\mathcal F \right) &\triangleq& \underset {(u_0,u_1) \in \mathcal U^2} {\max} B^{(u_0,u_1)}_{TR}\left(\mathcal F \right) \ , \nonumber \\
B^{(*)}_{LD}\left(\mathcal F \right) &\triangleq& \underset {(u_0,u_1) \in \mathcal U^2} {\max} B^{(u_0,u_1)}_{LD}\left(\mathcal F \right) \ . \nonumber
\end{eqnarray*}
\end{definition}
We also denote by $\left(u_0 , u_1  \right)^{CGRL}_{\mathcal F }$ (resp. $\left(u_0 , u_1  \right)^{TR}_{\mathcal F }$  and $\left(u_0 , u_1  \right)^{LD}_{\mathcal F}$) three sequences of actions that maximize the bounds:
\begin{definition}[Bound-optimal Sequences of Actions]
\begin{eqnarray*}
\left(u_0 , u_1  \right)^{CGRL}_{\mathcal F} &\in&   \left\{ (u_0,u_1) \in \mathcal U^2  |  B^{(u_0,u_1)}_{CGRL}  \left(\mathcal F \right) = B^{(*)}_{CGRL}\left(\mathcal F \right)   \right\}  \nonumber \\
\left(u_0 , u_1  \right)^{TR}_{\mathcal F} &\in&   \left\{ (u_0,u_1) \in \mathcal U^2  |  B^{(u_0,u_1)}_{TR}  \left(\mathcal F \right) = B^{(*)}_{TR}\left(\mathcal F \right)   \right\}  \nonumber \\
\left(u_0 , u_1  \right)^{LD}_{\mathcal F} &\in&   \left\{ (u_0,u_1) \in \mathcal U^2  |  B^{(u_0,u_1)}_{LD}  \left(\mathcal F \right) = B^{(*)}_{LD}\left(\mathcal F \right)   \right\}  \nonumber 
\end{eqnarray*}
\end{definition}

We finally give in this section a last theorem that shows the convergence of the sequences of actions $\left(u_0 , u_1  \right)^{CGRL}_{\mathcal F }$, $\left(u_0 , u_1  \right)^{TR}_{\mathcal F }$  and $\left(u_0 , u_1  \right)^{LD}_{\mathcal F}$  towards optimal sequences of actions - i.e. sequences of actions that lead to an optimal return $J^*_2$ - when the sample dispersion $\alpha^*(\mathcal F)$ decreases towards zero.

\begin{theorem}
\label{theorem_convergence_actions}  
Let $\mathfrak  J^*$ be the set of optimal two-stage sequences of actions:
\begin{eqnarray*}
\mathfrak  J^*_2 \triangleq  \left\{  \left(u_0,u_1 \right) \in  \mathcal  U^2   | J^{\left( u_0,u_1 \right) }_2 = J^*_2 \right\} \ ,
\end{eqnarray*}
 and let us suppose that   $\mathfrak J^*_2 \neq \mathcal  U^2$ (if $\mathfrak J^*_2 = \mathcal  U^2$,  the search  for  an  optimal sequence  of  actions is  indeed   trivial).   We define
\begin{eqnarray*}
\epsilon \triangleq  \underset {(u_0,u_1)   \in \mathcal U^2  \backslash \mathfrak J^*_2 } {\min}  \left\{ J^*_2 -   J^{(u_0,u_1)}_2 \right \}  \ .
\end{eqnarray*}
Then, 
$\forall \left( \tilde u_0, \tilde u_1 \right)_{\mathcal F} \in \left\{  \left(u_0 , u_1  \right)^{CGRL}_{\mathcal F } , \left(u_0 , u_1  \right)^{TR}_{\mathcal F } ,\left(u_0 , u_1  \right)^{LD}_{\mathcal F}  \right\} $,
\begin{eqnarray}
\Big( C  \alpha^*(\mathcal F)   <  \epsilon \Big)  \implies \left( \tilde u_0, \tilde u_1 \right)_{\mathcal F} \in  \mathfrak J^*_2 \ .
\end{eqnarray}
\end{theorem}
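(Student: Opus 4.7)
The plan is to argue by contradiction using the uniform tightness bound from Lemma \ref{lemma:tightness} together with the fact that all three bounds in question are lower bounds on $J^{(u_0,u_1)}_2$ (Theorem \ref{theorem:synthesis}). The intuition is that if the sample dispersion is small, every bound lies within $C\alpha^*(\mathcal F)$ of the true return, so the maximizer of the bound cannot be more than $C\alpha^*(\mathcal F)$ suboptimal in terms of true return; if this gap is strictly less than $\epsilon$, the bound-optimal sequence must belong to $\mathfrak J^*_2$.

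More concretely, fix one of the three bounds and denote it by $B^{(u_0,u_1)}(\mathcal F)$. Let $(\tilde u_0, \tilde u_1)_{\mathcal F}$ be the corresponding bound-optimal sequence, and pick any genuinely optimal sequence $(u_0^*, u_1^*) \in \mathfrak J^*_2$. The first step is to apply Lemma \ref{lemma:tightness} to $(u_0^*, u_1^*)$, which yields
\begin{eqnarray*}
B^{(u_0^*, u_1^*)}(\mathcal F) \;\geq\; J^{(u_0^*, u_1^*)}_2 - C\alpha^*(\mathcal F) \;=\; J^*_2 - C\alpha^*(\mathcal F).
\end{eqnarray*}
The second step is to invoke the definition of $(\tilde u_0, \tilde u_1)_{\mathcal F}$ as a maximizer of the bound, so that $B^{(\tilde u_0, \tilde u_1)}(\mathcal F) \geq B^{(u_0^*, u_1^*)}(\mathcal F)$, and combine this with the lower-bound property $B^{(\tilde u_0, \tilde u_1)}(\mathcal F) \leq J^{(\tilde u_0, \tilde u_1)}_2$ (from Theorem \ref{theorem:synthesis}) to conclude that
\begin{eqnarray*}
J^{(\tilde u_0, \tilde u_1)}_2 \;\geq\; J^*_2 - C\alpha^*(\mathcal F).
\end{eqnarray*}

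The third step is to suppose for contradiction that $(\tilde u_0, \tilde u_1)_{\mathcal F} \notin \mathfrak J^*_2$. Then by the definition of $\epsilon$ one has $J^*_2 - J^{(\tilde u_0, \tilde u_1)}_2 \geq \epsilon$, which combined with the previous inequality forces $C\alpha^*(\mathcal F) \geq \epsilon$, contradicting the hypothesis $C\alpha^*(\mathcal F) < \epsilon$. Hence $(\tilde u_0, \tilde u_1)_{\mathcal F} \in \mathfrak J^*_2$. Since the entire argument only uses the two properties that the bound lies below $J^{(u_0,u_1)}_2$ and above $J^{(u_0,u_1)}_2 - C\alpha^*(\mathcal F)$, it applies verbatim to each of $B_{CGRL}$, $B_{TR}$ and $B_{LD}$, giving the claim.

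There is no real obstacle: once Lemma \ref{lemma:tightness} and Theorem \ref{theorem:synthesis} are in hand, the result is a standard ``near-optimal bound implies near-optimal action'' argument, and the only care needed is to keep track of which inequality comes from tightness (applied to an optimal sequence) and which comes from the lower-bound property (applied to the bound-optimal sequence).
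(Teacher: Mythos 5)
Your proof is correct and follows essentially the same route as the paper's: both arguments combine the tightness bound of Lemma \ref{lemma:tightness} applied to a truly optimal sequence, the maximizing property of the bound-optimal sequence, and the lower-bound property from Theorem \ref{theorem:synthesis}, then derive a contradiction with the definition of $\epsilon$. The only cosmetic difference is that you place the final contradiction on the hypothesis $C\alpha^*(\mathcal F) < \epsilon$ whereas the paper places it on the maximality of the bound, which is just a reordering of the same chain of inequalities.
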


\begin{proof}
Let us prove the theorem by contradiction. Let us assume that $C  \alpha^*(\mathcal F)  <  \epsilon$. Let $B^{(u_0,u1)}(\mathcal F) \in \left\{  B^{(u_0,u_1)}_{CGRL}(\mathcal F) , B^{(u_0,u_1)}_{TR}(\mathcal F) , B^{(u_0,u_1)}_{LD}(\mathcal F) \right\}$, and let $(\tilde u_0 , \tilde u_1)_{\mathcal F}$ be a sequence such that
\begin{eqnarray*}
(\tilde u_0 , \tilde u_1)_{\mathcal F} \in  \underset {(u_0,u_1)   \in \mathcal U^2} {\arg\max}  B^{(u_0,u_1)}(\mathcal F)
\end{eqnarray*}
and let us assume that $(\tilde u_0 , \tilde u_1)_{\mathcal F}$ is not optimal. This implies that
\begin{eqnarray*}
J^{(\tilde u_0, \tilde u_1)_{\mathcal F}}_2 \leq J^*_2  -    \epsilon   \    .   
\end{eqnarray*}
Now, let   us    consider    a   sequence $\left( u^*_0, u^*_{1} \right)    \in     \mathfrak   J^*_2$.     Then
\begin{eqnarray*}
J^{( u^*_0, u^*_{1})}_2  =  J^*_2 . 
\end{eqnarray*}
The  lower  bound  $B^{ (u^*_0, u^*_{1})}(\mathcal F)$   satisfies   the   relationship
\begin{eqnarray*}
J^*_2 - B^{(u^*_0, u^*_{1})}(\mathcal F) \leq C \alpha^*(\mathcal F) .
\end{eqnarray*}
Knowing that     $C      \alpha^*(\mathcal F)     <     \epsilon$,      we     have
\begin{eqnarray*}
B^{(u^*_0, u^*_{1})}(\mathcal F)   >  J^*_2   -  \epsilon   .
\end{eqnarray*}
Since
\begin{eqnarray*}
J^{(\tilde u_0, \tilde u_1)_{\mathcal F}}_2   \geq    B^{(\tilde u_0, \tilde u_1)_{\mathcal F}}(\mathcal F) ,  
\end{eqnarray*}
we    have
\begin{eqnarray*}
B^{(u^*_0, u^*_1)}(\mathcal F)  >  B^{(\tilde u_0, \tilde  u_1)_{\mathcal F}}(\mathcal F)
\end{eqnarray*}
which contradicts the fact that $(\tilde u_0 , \tilde u_1)_{\mathcal F}$ belongs to the set $\underset {(u_0,u_1)   \in \mathcal U^2} {\arg\max}  B^{(u_0,u_1)}(\mathcal F)$. This ends the proof.
\end{proof}

\subsubsection{Remark}

It is important to notice that the tightness of the  bounds resulting from  the relaxation schemes proposed in this paper does   not depend  {\it explicitly} on  the sample  dispersion (which suffers from the curse of dimensionality), but  depends rather on the initial  state for which the sequence of actions is  computed and on the local concentration of samples around the actual (unknown) trajectories of the system.  Therefore,  this  may  lead to  cases  where  the  bounds are tight  for  some  specific initial states,  even if the sample  does not cover every  area of the state space well enough.

\section{Experimental Results}
\label{section:experimental_results}

We provide some experimental results to illustrate the theoretical properties of the CGRL, Trust-region and Lagrangian relaxation bounds given below. We compare the tightness of the bounds, as well as the performances of the bound-optimal sequences of actions, on an academic benchmark.

\subsection{Benchmark}
\label{subsection:benchmark}
We  consider a linear benchmark whose dynamics is defined as follows :
\begin{eqnarray*}
\forall (x,u) \in \mathcal X \times \mathcal U, \qquad f(x,u) = x + 3.1416 \times u \times 1_d ,
\end{eqnarray*}
where $1_d \in \mathbb R^d$ denotes a $d-$dimensional vector for which each component is equal to $1$. The reward function is defined as follows:
\begin{eqnarray*}
\forall (x,u) \in \mathcal X \times \mathcal U, \qquad \rho(x,u) =  \sum_{i=1}^{d} x(i) \ ,
\end{eqnarray*}
where $x(i)$ denotes the $i-$th component of $x$. The state space $\mathcal X$ is included in $\mathbb R^{d}$ and the finite action space is equal to $\mathcal U = \{0 , 0.1\}$. The system dynamics $f$ is $1-$Lipschitz continuous and the reward function is $\sqrt{d}-$Lipschitz continuous. The initial state of the system is set to $$x_0 = 0.5772 \times 1_d \ . $$ The dimension $d$ of the state space is set to $d = 2$. In all our experiments, the computation of the Lagrangian relaxations, which requires to solve a conic-quadratic program,  are done using SeDuMi \cite{Sturm1999}.

\subsection{Protocol and Results}
\label{subsection:results}
\begin{figure}[h!]
\begin{center}
\includegraphics[scale=.75]{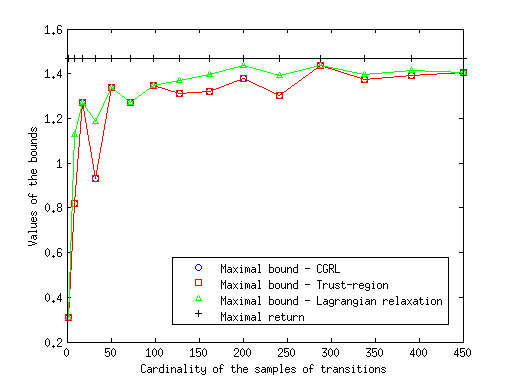}
\caption{Bounds $ B^{(*)}_{CGRL}\left(\mathcal F_{c_i} \right)$, $ B^{(*)}_{TR}\left(\mathcal F_{c_i} \right)$ and $ B^{(*)}_{LD}\left(\mathcal F_{c_i} \right) $ computed from all samples of transitions $\mathcal F_{c_i} \quad i \in \{1, \ldots, 15\}$ of cardinality $c_i = 2 i^2 .$ \label{figure:linear_cardinality_grid_bounds}}
\end{center}
\end{figure}
\begin{figure}[h!]
\begin{center}
\includegraphics[scale=.75]{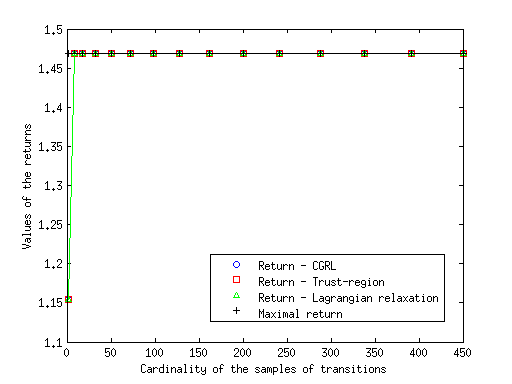}
\caption{Returns of the sequences  $\left(u_0 , u_1  \right)^{CGRL}_{\mathcal F_{c_i} }$, $\left(u_0 , u_1  \right)^{TR}_{\mathcal F_{c_i} }$ and $\left(u_0 , u_1  \right)^{LD}_{\mathcal F_{c_i}}$ computed from all samples of transitions $\mathcal F_{c_i} \quad i \in \{1, \ldots, 15\}$ of cardinality $c_i = 2 i^2 .$ \label{figure:linear_cardinality_grid_returns}}
\end{center}
\end{figure}

\subsubsection{Typical Run} 
\label{subsubsection:typical_run}

For different cardinalities $c_i =2 i^2 , i=1 , \ldots , 15$, we generate a sample of transitions $\mathcal F_{c_i} $ using a grid over $[ 0 , 1]^{d} \times \mathcal U$, as follows: $\forall u \in \mathcal U,$
\begin{eqnarray*}
 \mathcal F^{(u)}_{c_i} = \left\{  \left( \left[ \frac{i_1}{i} ; \frac{i_2}{i} \right] , u , \rho\left( \left[ \frac{i_1}{i} ; \frac{i_2}{i} \right] ,u \right) , f\left(\left[ \frac{i_1}{i} ; \frac{i_2}{i} \right],u\right) \right) \bigg| (i_1,i_2) \in \{1, \ldots, i\}^2 \right\} 
\end{eqnarray*}
and
\begin{eqnarray*}
\mathcal F_{c_i} = \mathcal F^{(0)}_{c_i} \cup \mathcal F^{(.1)}_{c_i}
\end{eqnarray*}
We report in Figure \ref{figure:linear_cardinality_grid_bounds} the values of the maximal CGRL bound $ B^{(*)}_{CGRL}\left(\mathcal F_{c_i}\right)$, the maximal Trust-region bound $ B^{(*)}_{TR}\left(\mathcal F_{c_i}\right)$ and the maximal Lagrangian relaxation bound $B^{(*)}_{LD}\left(\mathcal F_{c_i}\right)$ as a function of the cardinality $c_i$ of the samples of transitions $\mathcal F_{c_i} $. We also report in Figure \ref{figure:linear_cardinality_grid_returns} the returns  $J^{\left(u_0 , u_1  \right)^{CGRL}_{\mathcal F_{c_i}}}_2$, $J^{\left(u_0 , u_1  \right)^{TR}_{\mathcal F_{c_i}}}_2$ and $J^{\left(u_0 , u_1  \right)^{LD}_{\mathcal F_{c_i}}}_2$ of the bound-optimal sequences of actions  $\left(u_0 , u_1  \right)^{CGRL}_{\mathcal F_{c_i} }$, $\left(u_0 , u_1  \right)^{TR}_{\mathcal F_{c_i} }$ and $\left(u_0 , u_1  \right)^{LD}_{\mathcal F_{c_i}}$.

As expected, we observe that the bound computed with the Lagrangian relaxation is always greater or equal to the Trust-region bound, which is also greater or equal to the CGRL bound as predicted by Theorem \ref{theorem:synthesis}. On the other hand, no difference were observed in terms of return of the bound-optimal sequences of actions.

\subsubsection{Uniformly Drawn Samples of Transitions}

\begin{figure}[h!]
\begin{center}
\includegraphics[scale=.75]{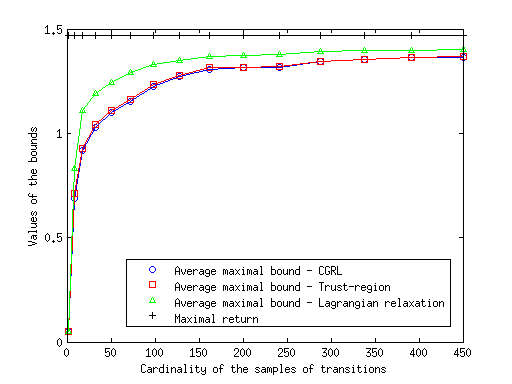}
\caption{Average values $A_{CGRL}(c_i)$, $A_{TR}(c_i)$ and $A_{LD}(c_i)$ of the bounds computed from all samples of transitions $\mathcal F_{c_i,k} \quad k \in \{1, \ldots, 100\}$ of cardinality $c_i = 2 i^2 .$ \label{figure:linear_cardinality_uniform_bounds}}
\end{center}
\end{figure}
\begin{figure}[h!]
\begin{center}
\includegraphics[scale=.75]{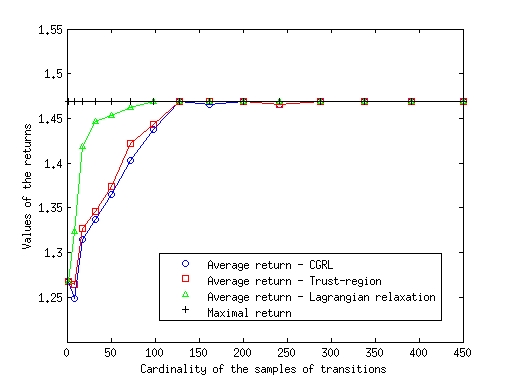}
\caption{Average values $J_{CGRL}$, $J_{TR}$ and $J_{LD}$ of the return of the  bound-optimal sequences of actions computed from all samples of transitions $\mathcal F_{c_i,k} \quad k \in \{1, \ldots, 100\}$ of cardinality $c_i = 2 i^2 .$ \label{figure:linear_cardinality_uniform_returns}}
\end{center}
\end{figure}

In order to observe the influence of the dispersion of the state-action points of the transitions on the quality of the bounds, we propose the following protocol. For each cardinality $c_i =2 i^2 , i=1 , \ldots , 15$, we generate 100 samples of transitions $\mathcal F_{c_i,1} , \ldots, \mathcal F_{c_i,100} $ using a uniform probability distribution over the space $[0 , 1]^{d} \times \mathcal U $. For each sample of transition $\mathcal F_{c_i,k} \quad i \in \{1, \ldots , 15 \} , k \in \{1 , \ldots, 100 \}$, we compute the maximal CGRL bound  $ B^{(*)}_{CGRL}\left(\mathcal F_{c_i,k}\right)$, the maximal Trust-region bound $ B^{(*)}_{TR}\left(\mathcal F_{c_i,k}\right)$ and the maximal Lagrangian relaxation bound $B^{(*)}_{LD}\left(\mathcal F_{c_i,k}\right)$. We then compute the average values of the maximal CGRL, Trust-region and Lagrangian relaxation bounds :
\begin{eqnarray*}
\forall i \in \{ 1, \ldots, 15 \}, \qquad A_{CGRL}(c_i) &=&  \frac{1}{100}\sum_{k=1}^{100} B^{(*)}_{CGRL}\left(\mathcal F_{c_i,k}\right)   \\
A_{TR}(c_i) &=& \frac{1}{100}\sum_{k=1}^{100} B^{(*)}_{TR}\left(\mathcal F_{c_i,k}\right)   \\
A_{LD}(c_i) &=& \frac{1}{100}\sum_{k=1}^{100} B^{(*)}_{LD}\left(\mathcal F_{c_i,k}\right)   
\end{eqnarray*}
and we report in Figure \ref{figure:linear_cardinality_uniform_bounds}  the values $A_{CGRL}(c_i)$ (resp. $A_{TR}(c_i)$ and $A_{LD}(c_i)$)  as a function of the cardinality $c_i$ of the samples of transitions. We also report in Figure \ref{figure:linear_cardinality_uniform_returns}  the average returns of the bound-optimal sequences of actions  $\left(u_0 , u_1  \right)^{CGRL}_{\mathcal F_{c_i,k} }$, $\left(u_0 , u_1  \right)^{TR}_{\mathcal F_{c_i,k} }$ and $\left(u_0 , u_1  \right)^{LD}_{\mathcal F_{c_i,k}}$:
\begin{eqnarray*}
\forall i \in \{ 1, \ldots, 15 \}, \qquad J_{CGRL}(c_i) &=& \frac{1}{100}\sum_{k=1}^{100} J^{\left(u_0 , u_1  \right)^{CGRL}_{\mathcal F_{c_i,k}}}_2 \\
J_{TR}(c_i) &=& \frac{1}{100}\sum_{k=1}^{100} J^{\left(u_0 , u_1  \right)^{TR}_{\mathcal F_{c_i,k}}}_2 \\
J_{LD}(c_i) &=& \frac{1}{100}\sum_{k=1}^{100} J^{\left(u_0 , u_1  \right)^{LD}_{\mathcal F_{c_i,k}}}_2  \ .
\end{eqnarray*}
as a function of the cardinality $c_i$ of the samples of transitions.

We observe that, on average, the Lagrangian relaxation bound is much tighter that the Trust-region and the CGRL bounds. The CGRL bound and the Trust-region bound remain very close on average, which illustrates, in a sense, Corollary \ref{corollary:TR_et_CGRL_super_close}. Moreover, we also observe that the bound-optimal sequences of actions $\left(u_0 , u_1  \right)^{LD}_{\mathcal F_{c_i,k}}$ better perform on average.

\section{Conclusions}
\label{section:conclusions}

We have considered in this paper the problem of computing $\min \max$ policies for deterministic, Lipschitz continuous batch mode reinforcement learning. First, we have shown that this $\min \max$ problem is NP-hard. Afterwards, we have proposed for the two-stage case two relaxation schemes. Both have been extensively studied and, in particular, they have been shown to perform better than the CGRL algorithm that has been introduced earlier to address this min-max generalization problem.

A natural extension of this work would be to investigate how the proposed relaxation schemes could be extended to the $T$-stage ($T \geq 3$) framework. Lipschitz continuity assumptions are common in a batch mode reinforcement learning setting, but one could imagine developing $\min \max$ strategies in other types of environments that are not necessarily Lipschitzian, or even not continuous. Additionnaly, it would also be interesting to extend the resolution schemes proposed in this paper to problems with very large/continuous action spaces.

\subsection*{Acknowledgements}
Raphael Fonteneau is a Post-doctoral fellow of the FRS-FNRS (Funds for Scientific Research). This paper  presents research   results  of  the   Belgian  Network  DYSCO (Dynamical Systems, Control and Optimization) funded  by the Interuniversity  Attraction Poles Programme, initiated  by the Belgian State, Science  Policy  Office. The authors thank Yurii Nesterov for pointing out the idea of using Lagrangian relaxation. The scientific responsibility   rests with its authors.

\appendix

\section{Proof of Lemma \ref{lemma:n0_n1_1}}
\label{appendix:proof_n0_n1_1}

\begin{proof}
For conciseness, we denote $\left(x^{(u_0),k_0},r^{(u_0),k_0},y^{(u_0),k_0} \right)$ ( resp.\\ $\left(x^{(u_1),k_1},r^{(u_1),k_1},y^{(u_1),k_1} \right)$) by $\left(x^{0},r^{0},y^{0}\right)$ (resp. $\left(x^{1},r^{1},y^{1} \right)$), and $\lambda_1$ (resp. $\mu_1$) by $\lambda$ (resp.  $\mu$). We assume that $x_0 \neq  x^{0}$ and $x^{1} \neq y^{0}$ otherwise the problem is trivial.
\begin{itemize}
\item Trust-region solution.
\end{itemize}
According to Definition \ref{definition_TR}, we have:
\begin{eqnarray*}
B''^{(u_0,u_1),k_0,k_1}_{TR}(\mathcal F) &=&   r^{1}-L_{\rho} \left\| \mathbf{\hat x_1^*}(k_0,k_1) -x^{1} \right\|,
\end{eqnarray*}
where
\begin{eqnarray*}
\mathbf{\hat x_1^*}(k_0,k_1) = y^{0} +  L_f\frac{ \left\| x_0 - x^{0} \right\|}{\left\|y^{0}-x^{1} \right\|} \left(y^{0}-x^{1}\right)  , \nonumber
\end{eqnarray*}
which writes
\begin{eqnarray*}
B''^{(u_0,u_1),k_0,k_1}_{TR}(\mathcal F) &=& r^{1}-L_{\rho} \left\| y^{0} +  L_f\frac{ \left\| x_0 - x^{0} \right\|}{\left\|y^{0}-x^{1} \right\|} \left(y^{0}-x^{1}\right)  -x^{1} \right\| ,  \\
&=& r^{1}-L_{\rho} \left\|y^{0}-x^{1} \right\| \left(  1 +   L_f\frac{ \left\| x_0 - x^{0} \right\|}{\left\|y^{0}-x^{1} \right\|}  \right) \\
&=& r^{1} - L_{\rho} \left\|y^{0}-x^{1} \right\|  -  L_\rho L_f \left\| x_0 - x^{0} \right\|  
\end{eqnarray*}

\begin{itemize}
\item Lagrangian relaxation based solution.
\end{itemize}

According to Equation (\ref{socp}), we can write:
\begin{eqnarray*}
B''^{(u_0,u_1),k_0,k_1}_{LD}(\mathcal F) &=&  \underset{\lambda\in \R_+, \mu \in \R_+} {\max} \;
\frac{-\left\|L_{\rho}^2 x^{1} \mu - y^{0} \lambda \right\|^2}{- \mu L_{\rho}^2+ \lambda} - \frac{\left(1-2  r^1 \mu\right)^2 }{4 \mu}  \\
&&+  \lambda \left(\left\|y^{0}\right\|^2 -L_f^2 \left\|x^{0} - x_0 \right\|^2 \right) 
 \quad + \mu \left(\left(r^{1}\right)^2- L_{\rho}^2 \left\|x^{1}\right\|^2 \right)
\end{eqnarray*}
subject to 
\begin{eqnarray*}
\mu  &>& 0 \\
\lambda &>& \mu L_{\rho}^2
\end{eqnarray*}
We denote by $\mathcal L(\lambda, \mu)$ the quantity:
\begin{eqnarray*}
\mathcal L(\lambda, \mu) &=& \frac{-\left\|L_{\rho}^2 x^{1} \mu - y^{0} \lambda \right\|^2}{- \mu L_{\rho}^2+ \lambda}  - \frac{\left(1-2  r^1 \mu\right)^2 }{4 \mu}  +  \lambda \left(\left\|y^{0}\right\|^2 -L_f^2 \left\|x^{0} - x_0 \right\|^2 \right) \\
&&+ \mu \left(\left(r^{1}\right)^2- L_{\rho}^2 \left\|x^{1}\right\|^2 \right)
\end{eqnarray*}
Let $\lambda$ and $\mu$ be such that $\lambda > \mu L_{\rho}^2$. Since the Trust-region solution to $( \mathcal P''^{(u_0,u_1)}_{TR}(k_0,k_1) )$ is optimal, and by property of the Lagrangian relaxation \cite{LeMarechal1996}, one has:
\begin{eqnarray}
\mathcal L(\lambda, \mu) \leq B''^{(u_0,u_1),k_0,k_1}_{TR}(\mathcal F) \ . \label{lagrangian_inequality}
\end{eqnarray}
In order to prove the lemma, it is therefore sufficient to determine two values $\lambda_0$ and $\mu_0$ such that the inequality (\ref{lagrangian_inequality}) is an equality. By differentiating $\mathcal L(\lambda,\mu)$, we obtain, after a long calculation  (that we omit here):
\begin{eqnarray*}
\left( \left\{ \begin{matrix} \frac {\partial \mathcal L(\lambda, \mu)} {\partial \lambda} = 0  \\  \frac { \partial \mathcal L(\lambda, \mu)} {\partial \mu} = 0  \\  \lambda > \mu L_{\rho}^2 \end{matrix}  \right. \right) \implies  \left\{  \begin{matrix}  \lambda = \frac {L_\rho} {2 L_f \left\| x^0 -  x_0 \right\|} \ ,\\ \mu = \frac {1} { 2 L_\rho \left(  \left\|  y^0 - x^1 \right\| + L_f \left\|  x^0 - x_0 \right\|  \right)} \ .  \end{matrix}  \right.
\end{eqnarray*}
We denote by $\lambda_0$ and $\mu_0$ the following values for the dual variables:
\begin{eqnarray*}
\lambda_0 &\triangleq& \frac {L_\rho} {2 L_f \left\| x^0 - x_0 \right\|} \ ,  \\
\mu_0 &\triangleq& \frac {1} { 2 L_\rho \left(  \left\|  y^0 - x^1 \right\| + L_f \left\|  x^0 - x_0 \right\|  \right)} \ .
\end{eqnarray*}
We have:
\begin{eqnarray*}
\mu_0 = \frac {1} { 2 L_\rho \left(  \left\|  y^0 - x^1 \right\| + L_f \left\|  x^0 - x_0 \right\|  \right)}  &>& 0  \\
\frac {\lambda_0} {\mu_0} = L_\rho^2 \left( 1 +  \frac { \left\| y^0 - x^1  \right\|} { L_f \left\| x^0 - x_0 \right\|} \right) &>& L_\rho^2 \ .
\end{eqnarray*}
In the following, we denote $\left( 1 +  \frac {\left\| y^0 - x^1  \right\|} { L_f \left\| x^0 - x_0 \right\|} \right)$ by $K$. We now give the expression of $\mathcal L(\lambda_0,\mu_0)$ using only $\mu_0$ and $K$:
\begin{eqnarray*}
\mathcal L(\lambda_0,\mu_0) &=& -\frac{L_\rho^4 \mu^2_0 \left\| x^{1} -  K y^{0}  \right\|^{2}}{\mu_0 L_\rho^2 ( - 1 + K )} - \frac{1}{4 \mu_0} + r^{1}  - \left( r^{1} \right)^{2} \mu_0 \\
&&  + \mu_0 K L_\rho^2 \left( \left\|  y^0 \right\|^2 - L^2_f \left\| x^{0} - x_0  \right\|^2  \right) + \mu_0 \left(  \left( r^{1} \right)^2 - L_\rho^2 \left\|  x^{1}  \right\|^2  \right) \\
&=& - \frac { L_\rho^2 \mu_0  \left\|  x^{1} -  K y^0  \right\|^{2} } { K - 1 }     - \frac {1} {4 \mu_0} + r^{1} \\
&&+ L_\rho^2 \mu_0 K \left(  \left\| y^0 \right\|^2 -  L_f^2 \left\|  x^0 -  x_0  \right\|^2  \right) - L_\rho^2 \mu_0   \left\| x^1 \right\|^2  \ .
\end{eqnarray*}
Using the fact that $ x^{1} -  K y^0  =  x^{1} - y^{0} -  (K-1) y^0$, we can write: 
\begin{eqnarray*}
\mathcal L(\lambda_0,\mu_0) &=&  - \frac { L_\rho^2 \mu_0 }{K-1} \left( \left\|  x^{1} - y^{0} \right\|^{2} + (K-1)^2 \left\|y^0  \right\|^{2} - 2 (K-1)(x^{1} - y^{0})^{T}y^{0}   \right)  \\
&&- \frac{1}{4 \mu_0} + r^{1} + L_{\rho}^{2} \mu_0 K \left(  \left\| y^{0}  \right\|^{2} - L_f^2 \left\| x^{0} - x_0 \right\|^{2}  \right) - L_\rho^2 \left\| x^{1} \right\|^{2}
\end{eqnarray*}
and
\begin{eqnarray*}
\mathcal L(\lambda_0,\mu_0) &=&  - \frac { L_\rho^2 \mu_0 }{K-1}  \left\|  x^{1} - y^{0} \right\|^{2} - L_\rho^2 \mu_0(K-1) \left\| y^{0} \right\|^{2} + 2 L_\rho^2 \mu_0 (x^{1} - y^{0})^{T} y^{0} \\
&&- \frac{1}{4 \mu_0} + r^{1} + L_{\rho}^{2} \mu_0 K \left(  \left\| y^{0}  \right\|^{2} - L_f^2 \left\| x^{0} -  x_0 \right\|^{2}  \right) - L_\rho^2 \left\| x^{1} \right\|^{2} \ .
\end{eqnarray*}
From there,
\begin{eqnarray*}
\mathcal L(\lambda_0,\mu_0) &=&  - \frac { L_\rho^2 \mu_0 }{K-1}  \left\|  x^{1} - y^{0} \right\|^{2} + \left\| y^0 \right\|^{2} \left(  - L_\rho^2 \mu_0 (K-1) - 2 L_\rho^2 \mu_0 + L_\rho^2 \mu_0 K \right)  \\
&&- 2 L_\rho^2 \mu_0 (x^{1})^{T} y^{0}  - \frac{1}{4 \mu_0} + r^{1} + L_{\rho}^{2} \mu_0 K \left( - L_f^2 \left\| x^{0} - x_0 \right\|^{2}  \right) - L_\rho^2 \left\| x^{1} \right\|^{2}
\end{eqnarray*}
and, since $\left(  - L_\rho^2 \mu_0 (K-1) - 2 L_\rho^2 \mu_0 + L_\rho^2 \mu_0 K \right) = -  L_\rho^2 \mu_0 $, we have that
\begin{eqnarray*}
\mathcal L(\lambda_0,\mu_0) &=&  - \frac { L_\rho^2 \mu_0 }{K-1}  \left\|  x^{1} - y^{0} \right\|^{2} - L_\rho^2 \mu_0  \left\| y^0 \right\|^{2} - 2 L_\rho^2 \mu_0 (x^{1})^{T} y^{0} \\
&&- \frac{1}{4 \mu_0} + r^{1} + L_{\rho}^{2} \mu_0 K \left( - L_f^2 \left\| x^{0} -  x_0 \right\|^{2}  \right) - L_\rho^2 \left\| x^{1} \right\|^{2}
\end{eqnarray*}
and
\begin{eqnarray*}
\mathcal L(\lambda_0,\mu_0) &=&  - \frac { L_\rho^2 \mu_0 }{K-1}  \left\|  x^{1} - y^{0} \right\|^{2} - L_\rho^2 \mu_0 \left( \left\| y^{0}  \right\|^{2} +  \left\| x^{1} \right\|^{2} -2(x^{1})^{T} y^{0} \right) \\
&&- \frac{1}{4 \mu_0} + r^{1} + L_{\rho}^{2} \mu_0 K \left( - L_f^2 \left\| x^{0} -  x_0 \right\|^{2}  \right)  \ .
\end{eqnarray*}
Since $\left( \left\| y^{0}  \right\|^{2} +  \left\| x^{1} \right\|^{2} -2(x^{1})^{T} y^{0} \right) = \left\|  x^{1} - y^{0} \right\|^{2}$, we have:
\begin{eqnarray*}
\mathcal L(\lambda_0,\mu_0) &=&  - \frac { L_\rho^2 \mu_0 }{K-1}  \left\|  x^{1} - y^{0} \right\|^{2} - L_\rho^2 \mu_0 \left\| x^{1} - y^{0} \right\|^{2} \\
&&- \frac{1}{4 \mu_0} + r^{1} + L_{\rho}^{2} \mu_0 K \left( - L_f^2 \left\| x^{0} - x_0 \right\|^{2}  \right)  \\
&=& - L_\rho^2 \mu_0 \left\| x^{1} - y^{0}  \right\|^{2} \frac {K} {K-1} \\
&&- \frac{1}{4 \mu_0} + r^{1} + L_{\rho}^{2} \mu_0 K \left( - L_f^2 \left\| x^{0} -  x_0 \right\|^{2}  \right) \\
&=& - K L_\rho^2 \mu_0 \left(  \frac{ \left\| x^{1} - y^{0}  \right\|^{2}}{K-1} + L_f^2 \left\| x^{0} -  x_0  \right\|^2  \right) -\frac{1}{4 \mu_0} + r^{1} \ .
\end{eqnarray*}
Since $K \triangleq \left( 1 +  \frac {\left\| y^0 - x^1  \right\|} { L_f \left\| x^0 -  x_0 \right\|} \right) $, we have:
\begin{eqnarray*}
\mathcal L(\lambda_0,\mu_0) &=& - \left( 1 +  \frac {\left\| y^0 - x^1  \right\|} { L_f \left\| x^0 - \hat x_0 \right\|} \right) L_\rho^2 \mu_0 \left(  \frac{\left\| x^{1} - y^{0}  \right\|^{2}}{\left( 1 +  \frac {\left\| y^0 - x^1  \right\|} { L_f \left\| x^0 -  x_0 \right\|} \right)-1} + L_f^2 \left\| x^{0} -  x_0  \right\|^{2}  \right) \\
&& -\frac{1}{4 \mu_0} + r^{1} \\
= &-& \frac{L_\rho^2 \mu_0 \left( L_f \left\| x^{0} -  x_0 \right\| + \left\| y^{0} - x^{1} \right\|  \right)}{L_f \left\| x^{0} - x_0 \right\|} \left(  \frac{L_f \left\| x^{0} -  x_0  \right\| \left\| x^{1} - y^{0}  \right\|^{2}}{\left\| x^{1} - y^{0}  \right\|} + L_f^{2} \left\| x^{0} -  x_0  \right\|^2 \right) \\
&&- \frac{1}{4 \mu_0} + r^1\\
= &-& L_\rho^2 \mu_0 \left( L_f \left\| x^{0} -  x_0 \right\| + \left\| y^{0} - x^{1} \right\|  \right) \left(  \left\| x^{1} - y^{0}  \right\| + L_f \left\|  x^{0} - x_0 \right\| \right) - \frac{1}{4 \mu_0} + r^1 \ .
\end{eqnarray*}
Since $\mu_0 =\frac {1} { 2 L_\rho \left(  \left\|  y^0 - x^1 \right\| + L_f \left\|  x^0 - x_0 \right\|  \right)}$, we finally obtain:
\begin{eqnarray*}
\mathcal L(\lambda_0,\mu_0) &=& - \frac {L\rho} {2}  \left(  \left\|  y^0 - x^1 \right\| + L_f \left\|  x^0 -  x_0 \right\|  \right)   - \frac {L\rho} {2}  \left(  \left\|  y^0 - x^1 \right\| + L_f \left\|  x^0 -  x_0 \right\|  \right) + r^{1} \\
&=& r^{1}  - L_\rho \left\| y^{0} - x^{1}  \right\| - L_\rho L_f \left\| x^{0} - x_0  \right\| \\
&=& B''^{(u_0,u_1),k_0,k_1}_{TR}(\mathcal F) \ ,
\end{eqnarray*}
which ends the proof.
\end{proof}

\bibliographystyle{siam.bst}
\bibliography{all}

\end{document}